
\documentclass[english,final]{IEEEtran}

\usepackage{graphicx}
\usepackage{amssymb}
\usepackage{amstext}
\usepackage{algpseudocode}
\usepackage{amscd}
\usepackage{algorithm}


\usepackage{verbatim}

\usepackage{tikz}
\usetikzlibrary{chains}
\usetikzlibrary{fit}
\usetikzlibrary{arrows}
\usetikzlibrary{decorations}
\usepackage{xcolor}
\usepackage{epsfig}
\usetikzlibrary{shapes.symbols,patterns} 
\usepackage{pgfplots}


\usepackage{amsthm}
\usepackage{amsmath}
\usepackage{amsfonts}
\usepackage{amssymb,amstext}
\usepackage{bbm} 
\usepackage{dsfont}

\usepackage[colorlinks=true,urlcolor=blue, hyperindex,breaklinks=true] {hyperref}
\usepackage{todonotes}

\theoremstyle{plain}
\newtheorem{mythm}{Theorem}
\newtheorem{myprop}[mythm]{Proposition}
\newtheorem{mycor}[mythm]{Corollary}
\newtheorem{mylem}[mythm]{Lemma}

\theoremstyle{definition}
\newtheorem{mydef}{Definition}

\newtheorem{myquestion}{Open Question}

\newcommand{\prlsection}[1]{\emph{#1}.---}
\newcommand{\ket}[1]{\left| #1 \right \rangle}
\newcommand{\bra}[1]{\left \langle #1 \right|}
\newcommand{\ketbra}[1]{\ket{#1}\bra{#1}}

\newcommand{\norm}[1]{\left\lVert#1\right\rVert}

\def\Hb{\ensuremath{H_{\rm b}}}
\def\S{\mathsf{S}}
\def\D{\mathsf{D}}
\def\C{\mathsf{C}}

\def\E{\mathsf{E}}
\def\F{\mathsf{F}}
\def\R{\mathsf{R}}

\def\setC{\mathsf{c}}

\def\TPCP{\mathrm{TPCP}}

\DeclareMathOperator{\dec}{dec}
\DeclareMathOperator{\enc}{enc}
\newcommand{\Hh}[1]{H\!\left({#1}\right)} 
\newcommand{\Hc}[2]{H\!\left({#1}\!\left|{#2}\right.\right)} 
\newcommand{\CI}[2]{I\!\left({#1}\rangle{#2}\right)} 
\newcommand{\I}[2]{I\!\left({#1}\!:\!{#2}\right)} 

\newcommand{\Zc}[2]{Z\!\left({#1}\!\left|{#2}\right.\right)} 
\newcommand{\Fi}[2]{F\!\left({#1},{#2}\right)} 

\newcommand{\W}{\mathsf{W}} 
\newcommand{\Prvcond}[2]{\,{\rm Pr}\!\left[\left.#1\,\right|#2\right]} 
\newcommand{\Prv}[1]{\,{\rm Pr}\!\left[#1\right]} 

\newcommand{\Bernoulli}[1]{\textnormal{Bernoulli}\left(#1\right)}  

\newcommand{\Tr}[1]{{\rm Tr}\!\left[{#1}\right]} 
\newcommand{\Trp}[2]{{\rm Tr}_{{#1}}\!\left({#2}\right)} 
\DeclareMathOperator{\Trs}{Tr}
\newcommand{\TD}[2]{\delta\!\left({#1},{#2}\right)} 


\newcommand{\markov}{\small{\mbox{$-\hspace{-1.3mm} \circ \hspace{-1.3mm}-$}}}

\long\def\symbolfootnote[#1]#2{\begingroup\def\thefootnote{\fnsymbol{footnote}}\footnote[#1]{#2}\endgroup}

\allowdisplaybreaks

\begin{document}

\title{Efficient Quantum Polar Codes Requiring\\ No Preshared Entanglement}

\author{Joseph M. Renes~\IEEEmembership{Member,~IEEE}, David Sutter~\IEEEmembership{Student Member,~IEEE},\\ Fr\'ed\'eric Dupuis~\IEEEmembership{Member,~IEEE}, and Renato Renner~\IEEEmembership{Member,~IEEE}}
\maketitle
\begin{abstract}
We construct an \emph{explicit} quantum coding scheme 
which achieves a communication rate not less than the {coherent information} when used to transmit quantum information over a noisy quantum channel. 
For Pauli and erasure channels we also present \emph{efficient}  encoding and decoding algorithms for this communication scheme based on polar codes (essentially linear in the blocklength), but which do not require the sender and receiver to share any entanglement before the protocol begins. 
Due to the existence of degeneracies in the involved error-correcting codes it is indeed possible that the rate of the scheme exceeds the coherent information. We provide a simple criterion which indicates such performance.  
Finally we discuss how the scheme can be used for secret key distillation as well as private channel coding.
\end{abstract}

\begin{IEEEkeywords}
   Quantum polar codes, coherent information, entanglement distillation, privacy amplification, information reconciliation, secret key distillation, private channel coding  
\end{IEEEkeywords}

\symbolfootnote[0]{
J.M.\ Renes, D.\ Sutter and R.\ Renner are with the Institute for Theoretical Physics, ETH Zurich, Z\"urich 8092, Switzerland (e-mail: renes@phys.ethz.ch; suttedav@phys.ethz.ch; renner@phys.ethz.ch).

F.\ Dupuis is with Faculty of Informatics, Masaryk University, Brno, Czech Republic (e-mail: dupuis@fi.muni.cz).

This work was supported by the Swiss National Science Foundation (through the National Centre of Competence in Research `Quantum Science and Technology' and grant No.~200020-135048) and by the European Research Council (grant No.~258932).

This paper was presented in part at the IEEE Information Theory Workshop (2012) and the IEEE International Symposium on Information Theory (2013).
}

\thispagestyle{empty}

\title{Efficient Quantum Polar Codes Requiring No Preshared Entanglement}

%
%
%

\def\W{\mathsf{W}}

\thispagestyle{empty}
\section{Introduction}
\IEEEPARstart{S}{hannon's} channel coding theorem 
determines the capacity of a classical discrete memoryless channel $\W$ by random coding arguments and finds that it is given by
\begin{equation}
C(\W)=\max \limits_{P_X} I\bigl(X:\W(X)\bigr), \label{eq:shannon}
\end{equation}
where the random variable $X$ describes the input to the channel, and $P_X$ is its probability distribution \cite{shannonshort}.
Analogous random coding arguments for the problem of transmitting quantum information over a memoryless quantum channel $\mathcal{N}^{A'\rightarrow B}$ lead to a communication rate given by
\begin{equation}
\label{eq:cohinf}
Q_1(\mathcal{N}):=\max \limits_{\phi} \CI{A}{B}_{\sigma},
\end{equation}
where the optimization is over all pure, bipartite states $\phi^{AA'}$, $\sigma^{AB}:=\mathcal{N}^{A'\rightarrow B}(\phi^{AA'})$ and $I(A\rangle B)_{\rho}:=H(B)_{\rho}-H(AB)_{\rho}$ is the \emph{coherent information} and $H$ the von Neumann entropy~\cite{lloyd_capacity_1997,shor_quantum_2002, devetak_private_2005,schumacher_sending_1997,barnum_information_1998,barnum_quantum_2000}. 
It has been shown that $Q_1(\mathcal{N})$ is not generally optimal~\cite{shorshort} and that the quantum capacity is given by its \emph{regularization} 
\begin{equation}
Q(\mathcal{N})=\lim \limits_{k\to \infty} \frac{1}{k} Q_1\!\left(\mathcal{N}^{\otimes k}\right).
\end{equation}
Notwithstanding the complications surrounding the regularized expression, it is already difficult to construct \emph{explicit} coding schemes that achieve the coherent information of an arbitrary quantum channel, and the task becomes that much harder if we also ask for \emph{efficient} encoding and decoding. 
Here, a task is termed efficient if it can be completed in a number of steps scaling essentially linearly in the input size, not just polynomially as in the complexity-theoretic setting.
Until very recently, almost nothing was known about explicit, efficient, \emph{provably} capacity-achieving classical error-correcting codes, to say nothing of the quantum case.

Polar codes, introduced in 2008 by Ar{\i}kan~\cite{arikan09}, are the first family of classical error-correcting codes which both provably achieve the \emph{symmetric} (i.e., $X\sim$ uniform in \eqref{eq:shannon}) classical capacity for any discrete memoryless channel and have an essentially linear encoding and decoding complexity. These codes have been generalized to the quantum setup. Wilde and Guha adapted polar codes to transmit classical information over quantum channels~\cite{wilde_polar_2011} and gave a scheme for transmitting quantum information over degradable quantum channels~\cite{wildeGuha}, at the cost of an unknown decoding efficiency. Three of us showed in \cite{renesshort} how to achieve the \emph{symmetric} coherent information $(\phi^{AA'}$ a Bell state in \eqref{eq:cohinf}) of any Pauli or erasure channel with efficient encoding and decoding operations. In \cite{renes_polar_2014}, Wilde and Renes extended this method to arbitrary quantum channels and showed that the quantum decoder can be constructed by combining sequential decoders for suitable classical-quantum channels (see also~\cite{wilde_sequential_2013}), but without providing an efficient decoder. 


However, all of these quantum channel coding schemes suffer from two important drawbacks. The first is the need for noiseless entanglement to be shared by the sender and receiver prior to the start of the protocol. Second, the aforementioned protocols only achieve rates given by the symmetric coherent information. More details about entanglement-assisted quantum coding can be found in \cite{todd06}; a precursor in the field of quantum key distribution is~\cite{lo_method_2003}. In this contribution we present an explicit coding scheme that provably achieves the (true) coherent information for an arbitrary quantum channel without using any entanglement assistance. Our protocol is a concatenated two level scheme in which the method of polar coding is employed separately at each level. 
For Pauli or erasure channels we show how to perform efficient encoding and decoding.

Recently, an extension to quantum polar codes that is based on branching MERA codes has been proposed~\cite{poulinMERA,poulinMERA2}. It is claimed (however not yet proven) that these codes do not require entanglement assistance while achieving high rates and an efficient encoding and decoding.

The paper is structured as follows. Section~\ref{sec:def} introduces basic notation and definitions.
As our protocol is based on the use of information reconciliation (IR) and stabilizer-based quantum error-correcting codes, Section~\ref{sec:backir} provides background on these topics. Section~\ref{ap:UR} recalls two general uncertainty relations needed to establish the rate of the coding scheme.
Section~\ref{sec_polpheno}  introduces the classical polarization phenomenon and proves that it also holds for general classical-quantum (cq) states.\footnote{Since, to the best of our knowledge, the polarization phenomenon has thus far only been generalized to symmetric cq-states~\cite{wilde_polar_2011}.}  
Section~\ref{sec:ed} then describes a protocol for the problem of entanglement distillation and Section~\ref{sec:effQPC} shows that it is computationally efficient for Pauli and erasure channels when using quantum polar codes. It is well known that an entanglement distillation featuring only forward classical communication can be turned into a channel coding scheme~\cite{bennett_mixed-state_1996}, and Section~\ref{sec:channelcoding} explains the channel coding view of the scheme in detail. We prove that the communication rate of the channel coding scheme is at least as great as $Q_1(\mathcal{N})$, as defined in \eqref{eq:cohinf}, and present an efficient encoder and decoder based on polar codes for Pauli and erasure channels.  It is possible, but still unproven, that the scheme can achieve rates beyond the coherent information. Section~\ref{sec:beyond} shows, however, that the rate expression reduces to the coherent information when the channel is degradable. Sections~\ref{sec:beyond} and \ref{sec:achievingQ} state two different open problems addressing the question of whether our scheme can achieve rates beyond the coherent information or even achieve the quantum capacity for channels where the coherent information is not optimal. In Section~\ref{sec:skd} it is discussed how the scheme can be used for secret key distillation. 

\section{Background}
\subsection{Notation and Definitions} \label{sec:def}
Let $\mathcal{N}:\mathfrak{T}(\mathcal{H}_A) \to   \mathfrak{T}(\mathcal{H}_B)$ be a quantum channel, where $\mathfrak{T}(\mathcal{H})$ denotes the space of all operators in some Hilbert space $\mathcal{H}$ that are equipped with the trace norm.
We assume that the underlying state spaces $\mathcal{H}_A$ and $\mathcal H_B$ for systems $A$ and $B$ are finite dimensional. Using the Stinespring dilation~\cite{stinespring55} there exists an ``environment'' system $\mathcal{H}_E$ and a partial isometry $U_{BE}: \mathcal{H}_A \to \mathcal{H}_B \otimes \mathcal{H}_E$, such that 
\begin{equation}
\mathcal{N}(\rho)=\Trp{E}{U_{BE} \,\rho \, U_{BE}^{\dagger}}
\end{equation}
for every $\rho \in \mathcal{D}(\mathcal{H}_A$), where $\mathcal{D}(\mathcal{H}_A)$ denotes the space of all density operators on $\mathcal{H}_A$. Then we may define the \emph{complementary channel} to $\mathcal{N}$ as
\begin{equation}
\mathcal{N}_{\setC}(\rho)=\Trp{B}{U_{BE}\, \rho \, U_{BE}^{\dagger}}.
\end{equation}
A quantum channel $\mathcal{N}$ is \emph{more capable} if the quantum capacity of its complementary channel is zero, i.e.,\ $Q(\mathcal{N}_{\setC})=0$. 
Equivalently, the notion of more capable may be formulated as follows. 
For all states $\rho^{A^N}$ on $\mathcal{D}(\mathcal{H}_A^{\otimes N})$ let the classical random variable $X^N$ denote its eigenvalues and $B^N$ and $E^N$ the quantum outputs under the action of $\mathcal{N}^{\otimes N}$. 
Then $\mathcal N$ is more capable when $I(X^N:B^N)\geq I(X^N:E^N)$ for all $N\geq 1$ \cite{watanabe12}.  

If the private capacity of the complementary channel is zero, i.e., $C_{p}(\mathcal{N}_{\setC})=0$, then $\mathcal N$ is said to be \emph{less noisy}.\footnote{Recall that according to \cite{devetak_private_2005,cai04} the private capacity of a quantum channel $\mathcal{N}$ is given by 
\begin{equation}
C_p(\mathcal{N})=\lim \limits_{k\to \infty} \frac{1}{k} \max\limits_{T,X^k} \I{T}{B^k}-\I{T}{E^k},
 \end{equation}
 where $T \markov X^k \markov (B^k,E^k)$ form a Markov chain.
 }
 Equivalently, $\mathcal N$ is less noisy if $\I{T}{B^N}\geq \I{T}{E^N}$, for every $N\geq 1$ and for all distributions on $(T,X^N)$, where $T$ has finite support and $T \markov X^N \markov (B^N,E^N)$ form a Markov chain~\cite{watanabe12}. 

Finally, a quantum channel $\mathcal{N}$ is called \emph{degradable} if there exists a trace-preserving and completely positive ($\TPCP$) map $\mathcal{T}$ such that $\mathcal{T}\circ \mathcal{N}=\mathcal{N}_{\setC}$. 
The concepts of a channel being more capable, less noisy, or degradable were introduced in a classical framework in \cite{korner77} and have recently been generalized to the quantum mechanical setup in \cite{watanabe12}. 
For classical channels, it has been shown that being more capable is a strictly weaker condition than being less noisy, which is again a strictly weaker condition than being degradable \cite{korner77}. For quantum channels it is unknown whether these relations are strict \cite{watanabe12}. However a degradable (quantum) channel $\mathcal{N}$ is less noisy and \emph{a fortiori} more capable. 
In~\cite{watanabe12} it is shown that both the private and quantum capacities of a less noisy channel $\mathcal N$ are equal to the coherent information:  
\begin{equation}
C_p(\mathcal{N})=Q(\mathcal{N})=Q_1(\mathcal{N}),\label{eq:lessnoisyrates}
\end{equation}
a relation that was first shown for degradable channels \cite{shor2short,smi08}.

\subsection{Information Reconciliation \& Stabilizer Codes}\label{sec:backir}
In this subsection we recall some details about the information reconciliation protocols which form the basis of our coding schemes. 
The form of information reconciliation needed here is an instance of data compression in which the decoder has access to quantum side information. 
Consider a classical-quantum state of a random variable $Z$ and a quantum system $B$, which may be written $\psi^{ZB}=\sum_z p_z \ketbra{z}^Z\otimes\varphi_z^B$, where $p_z$ is the probability distribution of $Z$ and $\varphi_z$ are arbitrary states. 
The goal of information reconciliation is to compress $Z$, i.e.\ apply some function to $Z$, so that $Z$ itself can be reconstructed (with high probability) by a decoder with access to $B$ and the compressed output. 

In quantum language, the decoding step may be regarded as a measurement of the systems available to the decoder. 
The probability of error in determining the value of a random variable $Z$ given measurement of some quantum system $B$ can be expressed as 
\begin{equation}
p_{\rm{err}}\left(Z^A|B\right)_\psi:=1-\max \limits_{\mathcal{M}_Z}\sum_{z} p_z \Trs\left[\Lambda_{z}^B \varphi_z^B \right],
\end{equation}
where the maximum is taken over all measurements $\mathcal{M}_Z$ with elements $\Lambda_z^B$. 

Devetak and Winter showed that in the case of $N\rightarrow \infty$ copies of $\psi^{ZB}$, there exist compression functions outputting roughly $NH(Z|B)_\psi + o(N)$ bits which suffice to reliably determine $Z^N$ \cite{devetak03}.
Specifically, calling $B_C$ the classical compressor output, they constructed a sequence of measurements such that $p_{\rm err}(Z^N|B^NB_C)_{\psi^{\otimes N}}\rightarrow 0$ as $N\rightarrow \infty$. 
Moreover, this compression rate was shown to be optimal. 

We shall be interested in information reconciliation of classical information that represents the outcomes of measurements on a quantum system in one of two complementary bases, the ``amplitude'' basis and the ``phase'' basis. 
Any basis $\{\ket{z}\}_{z=0}^{d-1}$ of a $d$-dimensional quantum system may be chosen to be the amplitude basis, and for concreteness we then take the phase basis to be comprised of elements $\ket{\tilde x}=\frac1{\sqrt{d}}\sum_{z=0}^{d-1}\omega^{xz}\ket{z}$, where $\omega=e^{2\pi i/d}$. 
Although the coding scheme we describe below can be made to work for any finite $d$, there is no real loss of generality in setting $d=2$, which we do henceforth.

For the purposes of information reconciliation in this context, it is convenient that the random coding argument can be specialized to random linear functions (as, for instance, in \cite{renes3short}). 
Then, since the compressed output is a linear function of the input, we may regard it as resulting from the measurement of a corresponding set of stabilizer observables. 
Here we recall the basic facts of the stabilizer formalism; for more details see \cite{nielsen}.
For instance, the result of computing the parity of the outcome of measuring 3 qubit systems in the amplitude basis can just as well be thought of as the result of measuring the observable $Z\otimes Z\otimes Z$, where $Z={\rm diag}(1,-1)$. 
In general, any function of the form $f(z^N)=v^N\cdot z^N$ corresponds to the stabilizer operator $Z^{v_1}\otimes Z^{v_2}\otimes\cdots\otimes Z^{v_N}$.
An arbitrary linear function is just a sequence of functions of the form just considered, and therefore any linear function corresponds to a sequence of stabilizer operators. 

Linear functions of phase basis measurments can be similarly regarded as measurements of stabilizer operators using the operator $X=\left(\begin{smallmatrix} 0 & 1\\ 1&0\end{smallmatrix}\right)$. 
Since the operators $X$ and $Z$ anticommute, stabilizers corresponding to vectors $u^N$ and $v^N$ commute when $u^N\cdot v^N=0$. 
Thus, it is possible to consider simultaneously measuring a compressed output for both the amplitude and phase bases. Such a set of commuting stabilizer operators constitutes a CSS code.

The measurement of any set of stabilizer operators can be accomplished by applying a unitary transformation and then performing appropriate measurements on individual qubits.
Each stabilizer measurement necessitates the measurement of one qubit, so that not every qubit is necessarily measured after the transformation.
Due to the linear structure of stabilizer observables, this unitary transformation is simultaneously a mapping of amplitude basis states to themselves, as well as a mapping of phase basis states to themselves.

Note that, although we will consider using CSS codes for information reconciliation in the following sections, the code structure is only used by the encoder.
Stabilizer measurements for recovery operations are not generally used by the decoder.
Instead, the coding techniques for information reconciliation make use of more general measurements, such as the pretty-good measurement.

\subsection{Generalized Uncertainty Relations} \label{ap:UR}
The coding scheme introduced in this paper consists of two layers, both performing an information reconciliation operation in complementary bases.\footnote{Recall that we denote by $X^A$ and $Z^A$ an operator $X$ and $Z$ acting on a system A.} 
In the derivation of the exact rate expression (cf.\ Theorem~\ref{thm:ratescheme}) as well as in the formulation of the two open problems (cf. Sections~\ref{sec:beyond} and \ref{sec:achievingQ}) we will use entropic uncertainty relations that arise when measuring a system in complementary bases. This subsection gives an overview about the uncertainty relations we need to prove several properties of the scheme.

\begin{mylem}[Renes \& Boileau~\cite{renes2short}]
\label{lem:2pcertainty}
Suppose $\psi^{AB}$ is a bipartite state for which $H(Z^A|B)_\psi=0$. Then
\begin{align}
\Hc{X^A}{B}_\psi=\log \dim(A)+\Hc{A}{B}_\psi.
\end{align}
\end{mylem}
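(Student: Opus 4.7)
The plan is to purify $\psi^{AB}$ to $\ket{\psi}^{ABC}$ and exploit a structural consequence of the hypothesis $\Hc{Z^A}{B}_\psi=0$.

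First I purify and expand in the $Z$-basis on $A$: $\ket{\psi}^{ABC}=\sum_z\sqrt{p_z}\,\ket{z}^A\ket{\varphi_z}^{BC}$. The vanishing conditional entropy $\Hc{Z^A}{B}=0$ is equivalent to the reduced conditional states $\rho_z^B=\Trp{C}{\ketbra{\varphi_z}^{BC}}$ being perfectly distinguishable by some measurement on $B$, i.e.\ having pairwise orthogonal supports. A local isometry on $B$ therefore extracts a classical record of $z$ into a register $B_1$, so that $B=B_1 B_2$ and
\begin{equation*}
\ket{\psi}^{ABC}=\sum_z\sqrt{p_z}\,\ket{z}^A\ket{z}^{B_1}\ket{\chi_z}^{B_2C}.
\end{equation*}
This extraction is the one non-trivial input; what follows is direct computation.

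Next, re-expanding $\ket{z}^A=d^{-1/2}\sum_x\omega^{-xz}\ket{\tilde x}^A$ with $d=\dim(A)$ yields $\ket{\psi}^{ABC}=\sum_x\ket{\tilde x}^A\ket{\eta_x}^{BC}$, where
\begin{equation*}
\ket{\eta_x}^{BC}=\frac{1}{\sqrt d}\sum_z\sqrt{p_z}\,\omega^{-xz}\ket{z}^{B_1}\ket{\chi_z}^{B_2C}
\end{equation*}
has norm $1/\sqrt d$. Hence measuring $X^A$ produces each outcome with probability $1/d$, leaving the renormalized conditional state $\tau_x^{BC}:=d\,\ketbra{\eta_x}^{BC}$ on $BC$ pure.

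The key calculation is that the orthonormal record in $B_1$ kills every $z\ne z'$ cross term when $B$ is traced out:
\begin{equation*}
\tau_x^C=d\,\Trp{B}{\ketbra{\eta_x}^{BC}}=\sum_z p_z\,\Trp{B_2}{\ketbra{\chi_z}^{B_2C}}=\rho^C,
\end{equation*}
\emph{independently of} $x$. In particular, the classical register $X^A$ is independent of $C$, so $\Hc{X^A}{C}_\psi=\log d$.

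Finally, because $\tau_x^{BC}$ is pure its $B$- and $C$-marginals share the same von Neumann entropy, so $\Hh{\tau_x^B}=\Hh{\tau_x^C}=\Hh{\rho^C}=\Hh{AB}_\psi$ (the last equality using purity of $\ket{\psi}^{ABC}$). Combining this with the grouping rule for the cq state on $X^AB$, namely $\Hh{X^AB}=\log d+\sum_x d^{-1}\Hh{\tau_x^B}$, gives
\begin{equation*}
\Hc{X^A}{B}_\psi=\log d+\Hh{AB}_\psi-\Hh{B}_\psi=\log d+\Hc{A}{B}_\psi,
\end{equation*}
as claimed. The main obstacle is the very first step — extracting the orthogonal-support structure from $\Hc{Z^A}{B}=0$; once the classical record sits in $B_1$, everything else reduces to routine manipulations of von Neumann entropies and the cancellation produced by the $B_1$ orthogonality.
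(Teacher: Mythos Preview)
Your proof is correct and follows essentially the same strategy as the paper: purify, use $\Hc{Z^A}{B}=0$ to extract a classical copy of $z$ alongside $B$, re-express $A$ in the phase basis, and observe that the conditional $B$-side states given $X^A=x$ all have entropy $\Hh{AB}_\psi$. The only cosmetic difference is that the paper adjoins the classical register via a Stinespring dilation $U^{B\to BC}$ (keeping a separate purifier $R$) and argues via unitary equivalence $(Z^{-x})^C$ of the conditional $BC$ marginals, whereas you embed the record inside $B$ as $B_1$ and reach the same conclusion by showing the purifier marginal $\tau_x^C$ is $x$-independent and invoking Schmidt; these are two phrasings of the same computation.
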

\begin{IEEEproof}
Let $R$ be a purification of $\psi^{AB}$; expressing the $A$ system in the amplitude basis, we may write
\begin{align}
\ket{\psi}^{ABR}=\sum_z \sqrt{p_z}\ket{z}^A\ket{\varphi_z}^{BR}
\end{align}
for some probability distribution $p_z$ and normalized states $\ket{\varphi_z}^{BR}$. 
Since $H(Z^A|B)=0$, the states $\varphi_z^B$ are completely distinguishable. Thus, there exists a measurement which precisely determines the value of $z$ given $B$. Let $U^{B\rightarrow BC}$ be a coherent implementation (Stinespring dilation) of this measurement, in which the measurement result is stored in system $C$. Then define
\begin{align}
\ket{\psi'}^{ABCR}:= \,\, & U^{B\rightarrow BC}\ket{\psi}^{ABR}\\
=\,\, & \sum_z \sqrt{p_z}\ket{z}^A\ket{z}^{C}\ket{\varphi_z}^{BR}.
\end{align}
As $U$ is a partial isometry, $H(X^A|B)_\psi=H(X^A|BC)_{\psi'}$ and $H(A|B)_\psi=H(A|BC)_{\psi'}$. Expressing $A$ in $\ket{\psi'}$ in the phase basis, we have
\begin{align}
\ket{\psi'}^{ABCR}&=\frac{1}{\sqrt{d}}\sum_{x,z}\sqrt{p_z}\,\omega^{-xz}\ket{\widetilde{x}}^A \ket{z}^C\ket{\varphi_z}^{BR}\\
&=\frac{1}{\sqrt{d}}\sum_{x,z}\ket{\widetilde{x}}^A (Z^{-x})^C\ket{\psi}^{CBR},
\end{align}
where $d={\rm dim}(A)$. Using the chain rule we can write 
\begin{align}
&H(X^A|BC)_{\psi'}\nonumber \\
&\hspace{12mm}=H(X^ABC)_{\psi'}-H(BC)_{\psi'}\\
&\hspace{12mm}=H(BC|X^A)_{\psi'}+H(X^A)_{\psi'}-H(BC)_{\psi'}.
\end{align}
Clearly $H(X^A)_{\psi'}=\log {\rm dim}(A)$, while $H(BC|X^A)_{\psi'}=H(AB)_\psi$ since all the $BC$ marginals conditioned on $X^A$ are related by the unitary $Z^{-x}$ and the $BC$ marginal for $X^A=0$ is $\ket{\psi}^{CBR}\simeq \ket{\psi}^{ABR}$. Finally, $H(BC)_{\psi'}=H(B)_\psi$ again by the isometry property of $U$. Therefore,
\begin{align}
\Hc{X^A}{BC}_{\psi'}&=\Hh{BC}_\psi+\log {\rm dim}(A)-\Hh{B}_\psi,
\end{align}
which by the chain rule is the expression we set out to prove. 
\end{IEEEproof}

\begin{mylem}[Renes \& Boileau~\cite{renes2short}]
\label{lem:3pcertainty}
Suppose $\psi^{ABR}$ is a pure state for which $H(Z^A|B)_\psi=0$. Then
\begin{align}
\Hc{Z^A}{R}_\psi+\Hc{X^A}{B}_\psi=\log \dim A.
\end{align}
\end{mylem}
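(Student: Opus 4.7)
\bigskip

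\noindent\textbf{Proof plan.} The plan is to reduce the claim to Lemma~\ref{lem:2pcertainty} together with a direct computation of $\Hc{Z^A}{R}_\psi$ using the structural consequence of the hypothesis $\Hc{Z^A}{B}_\psi=0$. Concretely, since $\Hc{Z^A}{B}_\psi=0$, I would begin by writing
\begin{equation*}
\ket{\psi}^{ABR}=\sum_z \sqrt{p_z}\,\ket{z}^A\ket{\varphi_z}^{BR},
\end{equation*}
where the $B$-marginals $\varphi_z^B$ have pairwise orthogonal supports (this is exactly the content of being able to decode $Z^A$ from $B$ with zero error, just as used at the outset of the proof of Lemma~\ref{lem:2pcertainty}).

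Next, I would invoke Lemma~\ref{lem:2pcertainty} to rewrite the $B$-side term as $\Hc{X^A}{B}_\psi=\log\dim A+\Hc{A}{B}_\psi$. Since $\ket{\psi}^{ABR}$ is pure, the duality $\Hc{A}{B}_\psi=-\Hc{A}{R}_\psi$ (a direct consequence of $\Hh{AB}_\psi=\Hh{R}_\psi$ and $\Hh{B}_\psi=\Hh{AR}_\psi$) then gives
\begin{equation*}
\Hc{X^A}{B}_\psi=\log\dim A-\Hc{A}{R}_\psi.
\end{equation*}
So it remains to show that $\Hc{Z^A}{R}_\psi=\Hc{A}{R}_\psi$.

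For the final step I would compute both sides directly from the decomposition above. On the one hand, $\Hc{A}{R}_\psi=\Hh{B}_\psi-\Hh{R}_\psi$ by purity. On the other hand, dephasing $A$ in the amplitude basis yields the classical-quantum state $\sum_z p_z\,\ket{z}\!\bra{z}^A\otimes\varphi_z^R$, so that $\Hc{Z^A}{R}_\psi=H(p)+\sum_z p_z\,\Hh{\varphi_z^R}-\Hh{R}_\psi$. Because each $\ket{\varphi_z}^{BR}$ is pure, $\Hh{\varphi_z^R}=\Hh{\varphi_z^B}$, and because the $\varphi_z^B$ have pairwise orthogonal supports, the $B$-marginal of $\psi$ decomposes as $\psi^B=\sum_z p_z\,\varphi_z^B$ with $\Hh{\psi^B}=H(p)+\sum_z p_z\,\Hh{\varphi_z^B}$. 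Putting these two observations together gives $\Hc{Z^A}{R}_\psi=\Hh{B}_\psi-\Hh{R}_\psi=\Hc{A}{R}_\psi$, and adding this to the previously obtained expression for $\Hc{X^A}{B}_\psi$ yields the claimed identity.

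I do not anticipate a serious obstacle here; the lemma is essentially an equality version of the Maassen--Uffink/Berta tripartite uncertainty relation that becomes tight under the classicality assumption $\Hc{Z^A}{B}_\psi=0$. The only mildly subtle point is to make sure the orthogonality of the $\varphi_z^B$ is justified before using it to split $\Hh{\psi^B}$ as $H(p)+\sum_z p_z\Hh{\varphi_z^B}$, and to remember to pass from $\varphi_z^B$ to $\varphi_z^R$ using purity of $\ket{\varphi_z}^{BR}$. An alternative, equally short route would be to introduce the coherent copy unitary $U^{B\to BC}$ as in the proof of Lemma~\ref{lem:2pcertainty} and observe that after this copy both $\Hc{X^A}{BC}$ and $\Hc{Z^A}{R}$ are preserved and can be read off symmetrically, but the direct calculation above seems the cleanest.
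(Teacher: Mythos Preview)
Your argument is correct, and the overall architecture matches the paper's: both invoke Lemma~\ref{lem:2pcertainty} to get $\Hc{X^A}{B}_\psi=\log\dim A+\Hc{A}{B}_\psi$ and then identify $\Hc{A}{B}_\psi$ with $-\Hc{Z^A}{R}_\psi$. The difference lies in how that last identification is made. You use the hypothesis $\Hc{Z^A}{B}_\psi=0$ structurally, via orthogonality of the $\varphi_z^B$, to verify $\Hc{Z^A}{R}_\psi=\Hc{A}{R}_\psi$ by an explicit entropy count and then apply the purity duality $\Hc{A}{B}_\psi=-\Hc{A}{R}_\psi$. The paper instead establishes, for \emph{any} pure $\psi^{ABR}$ (not using the hypothesis at all), the identity
\[
\Hc{A}{B}_\psi=\Hc{Z^A}{B}_\psi-\Hc{Z^A}{R}_\psi,
\]
by a short chain-rule rearrangement together with the observation that $\Hc{B}{Z^A}_\psi=\Hc{R}{Z^A}_\psi$ because the $BR$ marginals conditioned on amplitude outcomes are pure; the hypothesis is then inserted only at the end. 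The paper's route isolates a reusable general fact and avoids the support-orthogonality argument; your route is more hands-on but equally valid and arguably more transparent about where the hypothesis does its work.
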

\begin{IEEEproof}
The statement follows from Lemma~\ref{lem:2pcertainty} and the fact that $H(A|B)_\psi=H(Z^A|B)_\psi-H(Z^A|R)_\psi$ for any state $\psi$. To see the latter expression, use the chain rule to find
\begin{align}
&\Hc{Z^A}{B}_{\psi}-\Hc{Z^A}{R}_{\psi} \nonumber\\
&\hspace{6mm}= \Hh{Z^AB}_{\psi}-\Hh{B}_{\psi}-\Hh{Z^AR}_{\psi}+\Hh{R}_{\psi}\\
&\hspace{6mm}= \Hc{B}{Z^A}_{\psi}-\Hc{R}{Z^A}_{\psi}+\Hc{A}{B}_{\psi}.
\end{align}
Since $ABR$ is pure, the $BR$ marginals conditioned on a projective measurement of $A$ are, too, and therefore $H(B|Z^A)_{\psi}=H(R|Z^A)_{\psi}$. 
\end{IEEEproof}


\section{Polarization Phenomenon} \label{sec_polpheno}
In this section, we introduce the \emph{polarization phenomenon} discovered by Ar{\i}kan, which has been used to construct codes (called \emph{polar codes}) that can be used for channel \cite{arikan09} as well as for source coding \cite{arikan10}. We show how it can be generalized to the setup of classical-quantum (cq) states.

\subsection{Classical Polarization Phenomenon}  \label{ap:polarCodes}
Polar codes have several desirable attributes~\cite{arikan09,sasoglu09,arikantelatar09,honda12}: they achieve the capacity when used for transmitting information over a discrete memoryless channel (DMC); they can be encoded and decoded efficiently (with a complexity that is essentially linear in the blocklength); the error probability of the efficient decoder decays exponentially in the square root of the blocklength. 

Let $X^N$ be a vector whose entries are i.i.d.\ Bernoulli($p$) distributed for some $p\in [0,1]$ and $N=2^n$ for $n\in \mathbb{Z}^+$. Furthermore, let $U^N=G_N X^N$, where $G_N = \left(\begin{smallmatrix}
1&0 \\ 1 & 1
\end{smallmatrix} \right)^{\otimes \log N}$ denotes the polarization transform and $\W:\mathcal{X}\to \mathcal{Y}$ be a DMC with a binary input alphabet $\mathcal{X}=\{0,1 \}$, an arbitrary output alphabet $\mathcal{Y}$ and transition probabilities $\W(y|x)$ for $x\in \mathcal{X}$ and $y\in \mathcal{Y}$. $\W^N$ denotes the channel corresponding to $N$ uses of $\W$. For $Y^N=\W^N X^N$ and $\epsilon \in (0,1)$, we define the two sets
\begin{align}
\mathcal{R}_{\epsilon}^N(X|Y):=\left\lbrace i \in [N]: H(U_i|U^{i-1} Y^N)\geq 1-\epsilon \right \rbrace
\end{align}
and
\begin{align}
\mathcal{D}_{\epsilon}^N(X|Y):=\left\lbrace i \in [N]: H(U_i|U^{i-1} Y^N)\leq \epsilon \right \rbrace.
\end{align}
The former consists of outputs $U_j$ which are essentially uniformly random, even given all previous outputs $U^{j-1}$ as well as $Y^N$, while the latter set consists of the essentially deterministic outputs. The polarization phenomenon is that essentially all outputs are in one of these two subsets, and their sizes are given by the conditional entropy of the input $X$ given $Y$.  

\begin{mythm}[Ar{\i}kan \cite{arikan09,arikan10}] \label{thm:polarizationPhenomenon}
For any $\epsilon \in (0,1)$
\begin{equation}
\left| \mathcal{R}_\epsilon^N(X|Y) \right|= N \Hc{X}{Y} - o(N) 
\end{equation}
and
\begin{equation}
 \left| \mathcal{D}_\epsilon^N(X|Y) \right|=N \bigl(1-\Hc{X}{Y}\bigr) - o(N).
\end{equation}
\end{mythm}

Based on this result, Ar{\i}kan showed in \cite{arikan09} how to construct polar codes having the desirable properties mentioned above.

Non-binary random variables can be represented by a sequence of correlated binary random variables, which are then encoded separately. Correlated sequences of binary random variables may be polarized using a multilevel construction, as shown in \cite{sasoglu09}.\footnote{An alternative approach is given in \cite{abbe11_2,sahebi11}, where the polarization phenomenon has been generalized for arbitrary finite fields. We will however focus on the multilevel construction in this paper.} Given $M$ i.i.d.\ instances of a sequence $X=(X_{(1)},X_{(2)},\dots X_{(K)})$ and possibly a correlated random variable $Y$, the basic idea is to first polarize $X_{(1)}^M$ relative to $Y^M$, then treat $X_{(1)}^MY^M$ as side information in polarizing $X_{(2)}^M$, and so on. More precisely, defining $U^M_{(j)}=G_MX^M_{(j)}$ for $j=1,\dots,K$, we may define the random and deterministic sets for each $j$ as 
\begin{align}
	&\mathcal R_{\epsilon,(j)}^M(X_{(j)}|X_{(j-1)},\cdots, X_{(1)},Y):= \nonumber \\
	&\hspace{0mm}\{i\in[M]: \Hc{U_{(j),i}}{U_{(j)}^{i-1},X_{(j-1)}^M,\cdots, X_{(1)}^M,Y^M}\geq 1-\epsilon\}
\end{align}
and
\begin{align}	
	&\mathcal D_{\epsilon,(j)}^M(X_{(j)}|X_{(j-1)},\cdots, X_{(1)},Y) :=\nonumber \\
	&\hspace{0mm}\{i\in[M]: \Hc{U_{(j),i}}{U_{(j)}^{i-1},X_{(j-1)}^M,\cdots, X_{(1)}^M,Y^M}\leq \epsilon\}.
\end{align}
In principle we could choose different $\epsilon$ parameters for each $j$, but this will not be necessary here. Now, Theorem~\ref{thm:polarizationPhenomenon} applies to the random and deterministic sets for every $j$. The sets $\mathcal R_{\epsilon}^{M}(X|Y)= \{ \mathcal R_{\epsilon,(j)}^M(X_{(j)}|X_{(j-1)}, \ldots,X_{(1)},Y) \}_{j=1}^K$ and $\mathcal D_{\epsilon}^{M}(X|Y)=\{ \mathcal D_{\epsilon,(j)}^M(X_{(j)}|X_{(j-1)},\ldots,X_{(1)},Y) \}_{j=1}^K$ have sizes given by  
\begin{align}
	&|\mathcal{R}_{\epsilon}^{M}(X|Y)| \nonumber \\
	&\hspace{6mm}= \sum_{j=1}^{K} \left| \mathcal R_{\epsilon,(j)}^M(X_{(j)}|X_{(j-1)}, \ldots,X_{(1)},Y) \right|\\
	&\hspace{6mm}=\sum_{j=1}^{K} M  \Hc{X_{(j)}}{X_{(1)},\dots,X_{(j-1)},Y}-o(M)\\
     &\hspace{6mm}=M \Hc{X}{Y}-o(KM),
\end{align}
and 
\begin{align}
	&|\mathcal{D}_{\epsilon}^{M}(X|Y)| \nonumber \\
	&\hspace{3mm}= \sum_{j=1}^{K} \left| \mathcal D_{\epsilon,(j)}^M(X_{(j)}|X_{(j-1)},\ldots,X_{(1)},Y) \right|\\
	&\hspace{3mm}=\sum_{j=1}^{K} M\left(1-\Hc{X_{(j)}}{X_{(1)},\dots,X_{(j-1)},Y}\right)-o(M)\\
     &\hspace{3mm}=M\left(K-\Hc{X}{Y}\right)-o(KM).
\end{align}
In the following we will make use of both the polarization phenomenon in its original form, Theorem~\ref{thm:polarizationPhenomenon}, and the multilevel extension. To simplify the presentation, we denote by $\widetilde G_{M}^K$ the $K$ parallel applications of $G_M$ to the $K$ random variables $X^M_{(j)}$. 

\subsection{Polarization Phenomenon for General cq-States} \label{sec:PolCQ}
In this subsection, we generalize the polarization phenomenon to the setup of classical-quantum (cq) states of the form
\begin{equation}
\rho^{XB}= \sum_{x\in \{0,1 \}} p_x \ket{x}\bra{x}^X \otimes \rho_x^B, \label{eq:stateCQ}
\end{equation}
where $p_0,p_1 \geq 0$ such that $p_0+p_1=1$. We note that the special case where $p_0=p_1=\frac{1}{2}$ has been discussed in \cite{wilde_polar_2011}.
We denote the fidelity between the states $\rho_0$ and $\rho_1$ by \cite{uhlmann76,jozsa94}
\begin{equation}
\Fi{\rho_0}{\rho_1}:=\norm{\sqrt{\rho_0} \sqrt{\rho_1}}_1,
\end{equation}
where $\norm{A}_1$ denotes the \emph{trace norm} of the operator $A$, defined as $\norm{A}_1:=\Tr{\sqrt{A^{\dagger}A}}$. Furthermore, for a cq state $\rho^{XB}$ as in \eqref{eq:stateCQ}, we define the quantity
\begin{equation}
\Zc{X}{B}_\rho:=2\sqrt{p_0 p_1}\Fi{\rho_0}{\rho_1}=2\sqrt{p_0 p_1}\norm{\sqrt{\rho_0}\sqrt{\rho_1}}_1. \label{eq:Zdef}
\end{equation}
\prlsection{Fidelity Polarization}
Applying Ar{\i}kan's transformation \cite[Section I.B]{arikan09} to the $X$ systems of two independent cq-states $\rho^{XB}$ as defined in \eqref{eq:stateCQ} gives
\begin{align}
	\tilde\rho 
	&=\sum_{u_1u_2} p_{u_1\oplus u_2}p_{u_2}\ketbra{u_1}^{U_1}\!\otimes\! \ketbra{u_2}^{U_2}\!\otimes\! \rho_{u_1\oplus u_2}^{B_1}\!\otimes\!\rho_{u_2}^{B_2}.
\end{align}


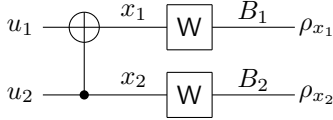
\begin{figure}[!htb]
\centering
\def \xb{0.6} 
\def \yb{0.6} 
\def \y{0.9} 
\def \x{1.1}
\def \s{0.2} 

\def \d{7} 
\def \xs{0.4} 

\def\rad{0.2}
\def \la{0.2} 

\begin{tikzpicture}[scale=1,auto, node distance=1cm,>=latex']
	
     \draw [] (0.5*\x,0) -- (2*\x,0);    
     \draw [] (0.5*\x,-\y) -- (2*\x,-\y); 
     \draw (\x,0) circle [x radius=\rad cm, y radius =\rad cm];
      \fill (\x,-\y) circle (1.75pt);
     \draw [] (\x,\rad) -- (\x,-\y);    
     
      \draw [] (2*\x,0.5*\yb) -- (2*\x+\xb,0.5*\yb);         
      \draw [] (2*\x,-0.5*\yb) -- (2*\x+\xb,-0.5*\yb);          
      \draw [] (2*\x,0.5*\yb) -- (2*\x,-0.5*\yb); 
      \draw [] (2*\x+\xb,-0.5*\yb) -- (2*\x+\xb,0.5*\yb);        
      \node at (2*\x+0.5*\xb,0) {$\W$}; 

      \draw [] (2*\x,0.5*\yb-\y) -- (2*\x+\xb,0.5*\yb-\y);         
      \draw [] (2*\x,-0.5*\yb-\y) -- (2*\x+\xb,-0.5*\yb-\y);          
      \draw [] (2*\x,0.5*\yb-\y) -- (2*\x,-0.5*\yb-\y); 
      \draw [] (2*\x+\xb,-0.5*\yb-\y) -- (2*\x+\xb,0.5*\yb-\y);        
      \node at (2*\x+0.5*\xb,-\y) {$\W$};     
      
     \draw [] (2*\x+\xb,0) -- (3*\x+\xb,0); 
     \draw [] (2*\x+\xb,-\y) -- (3*\x+\xb,-\y);

      \node at (0.5*\x-1.5*\la,0) {$u_1$}; 
      \node at (-1.5*\la+0.5*\x,-\y) {$u_2$};
      \node at (1.6*\x,\la) {$x_1$};              
      \node at (1.6*\x,\la-\y) {$x_2$};        
      \node at (2.5*\x+\xb,\la) {$B_1$};              
      \node at (2.5*\x+\xb,-\y+\la) {$B_2$};   
      \node at (3*\x+\xb+1.5*\la,0) {$\rho_{x_1}$};              
      \node at (3*\x+\xb+1.5*\la,-\y) {$\rho_{x_2}$};

\end{tikzpicture}
\caption{\small Notation used to derived the fidelity polarization phenomenon (cf.\ Proposition~\ref{prop:StateGen}), including the classical-quantum channel $\W: x \to \rho_x$.}
\label{fig:statePol}
\end{figure}

\begin{myprop} \label{prop:StateGen}
\begin{align}
\Zc{U_2}{U_1 B_1 B_2}_{\tilde\rho}&=\Zc{X}{B}_\rho^2 \quad \textnormal{and} \label{eq:1stStat}\\
\Zc{U_1}{B_1 B_2}_{\tilde \rho}&\leq 2 \Zc{X}{B}_\rho-\Zc{X}{B}_\rho^2 . \label{eq:2ndStat}
\end{align}
\end{myprop}
\begin{IEEEproof}
Note that this proof is a generalization of the proof for the symmetric case (i.e.,\ $p_0=p_1=\tfrac{1}{2}$) given by Wilde and Guha \cite[Proof of Proposition 9]{wilde_polar_2011}. We first prove \eqref{eq:1stStat}.
\begin{align}
&\Zc{U_2}{U_1 B_1 B_2}_{\tilde\rho} \nonumber \\
&=2\sqrt{p_0p_1}\,F\left(\sum_{u_1}p_{u_1}\ketbra{u_1}\otimes \rho_{u_1}\otimes\rho_0,\right. \nonumber \\
&\hspace{25mm}\left.\sum_{u_1}p_{u_1\oplus1}\ketbra{u_1}\otimes \rho_{u_1\oplus1}\otimes\rho_1\right)\\
&= 2 \sqrt{p_0 p_1}\,F\left(\sum_{u_1} p_{u_1} \ketbra{u_1} \otimes \rho_{u_1}, \right. \nonumber \\
& \hspace{20mm} \left. \sum_{u_1} p_{u_1\oplus 1} \ketbra{u_1}  \otimes \rho_{u_1\oplus 1}\right)\Fi{\rho_0}{\rho_{1} }\label{eq:mult}\\
&= 4 \sqrt{p_0 p_1}\,\Fi{p_0 \rho_0}{p_1 \rho_1}\Fi{\rho_0}{\rho_{1}} \label{eq:cqFid}\\
&= 4 {p_0 p_1}\, \Fi{\rho_0}{\rho_1}^2\\
&= \Zc{X}{B}_\rho^2,
\end{align}
where \eqref{eq:mult} uses the multiplicativity of the fidelity under tensor product states (i.e.,\ $F(\rho \otimes \sigma,\tau \otimes \nu)=F(\rho,\tau)F(\sigma,\nu)$) and \eqref{eq:cqFid} uses the following relation for cq-states
\begin{align}
&\Fi{\sum_x p(x) \ket{x} \bra{x} \otimes \rho_x}{\sum_x p(x) \ket{x} \bra{x} \otimes \sigma_x} \nonumber \\
&\hspace{45mm}=\sum_x p(x) \Fi{\rho_x}{\sigma_x}.
\end{align}

We next prove \eqref{eq:2ndStat}. Note that the fidelity can be expressed as the minimum Bhattacharya overlap between distributions induced by a POVM on the states \cite{fuchs99}
\begin{equation}
\Fi{\rho_0}{\rho_1}=\min \limits_{\{ \Lambda_m \}} \sum_m \sqrt{\Tr{\Lambda_m \rho_0} \Tr{\Lambda_m \rho_1}}.
\end{equation}
Let $\Lambda_m$ denote a POVM that achieves the minimum for $Z(X|B)_\rho$, we obtain
\begin{align}
\Zc{X}{B}_\rho&= 2 \sqrt{p_0 p_1} \, \Fi{\rho_0}{\rho_1}\\
&= 2 \sqrt{p_0 p_1} \sum_{m}\sqrt{\Tr{\Lambda_m \rho_0} \Tr{\Lambda_m \rho_1}}.
\end{align}
We can use the POVM $\{\Lambda_l \otimes \Lambda_m \}$ to bound $Z(U_1|B_1B_2)_{\tilde \rho}$:
\begin{align}
& \Zc{U_1}{B_1 B_2}_{\tilde\rho} \nonumber \\
&= 2F\left(\sum_{u_2}p_{u_2}^2\rho_{u_2}\otimes\rho_{u_2},\sum_{u_2}p_{u_2\oplus 1}p_{u_2}\rho_{u_2\oplus 1}\otimes\rho_{u_2}\right)\\
&\leq 2\sum_{l,m}\left(\sum_{u}p_u^2\Tr{(\Lambda_l\otimes \Lambda_m)(\rho_u\otimes\rho_u)} \right. \nonumber \\
&\hspace{13mm} \left.\sum_{u'}p_{u'}p_{u'\oplus 1}\Tr{(\Lambda_l\otimes \Lambda_m)(\rho_{u'\oplus 1}\otimes\rho_{u'})}\right)^{\frac{1}{2}}\\
&=2\sum_{l,m}\left(\sum_{u}p_u\Tr{\Lambda_l\rho_u}\,p_u\Tr{\Lambda_m\rho_u} \right. \nonumber \\
&\hspace{18mm} \left.\sum_{u'}p_{u'\oplus 1}\Tr{\Lambda_l\rho_{u'\oplus 1}}\,p_{u'}\Tr{\Lambda_m\rho_{u'}}\right)^{\frac{1}{2}}.
 \label{eq:stop}
\end{align}
Now introduce the notation $\alpha_m:=p_0 \Tr{\Lambda_m \rho_0}$, $\beta_l:=p_0\Tr{\Lambda_l \rho_0}$, $\gamma_l:=p_1\Tr{\Lambda_l \rho_1}$, and $\delta_m:=p_1 \Tr{\Lambda_m \rho_1}$ and notice that $Z(X|B)_\rho=2\sum_l \sqrt{\beta_l\gamma_l}=2\sum_m\sqrt{\alpha_m\delta_m}$. Then we can write \eqref{eq:stop} as
\begin{align}
&\Zc{U_1}{B_1 B_2}_{\tilde\rho} \nonumber\\
&\leq 2 \sum_{l,m} \sqrt{\alpha_m \beta_l + \gamma_l \delta_m} \sqrt{\alpha_m \gamma_l+\beta_l \delta_m}\\
&\leq 2 \left( \sum_{l,m} \left(\sqrt{\alpha_m \beta_l}+\sqrt{\gamma_l \delta_m} \right) \left(\sqrt{\alpha_m \gamma_l}+\sqrt{\beta_l \delta_m}\right) \right. \nonumber \\
&\hspace{20mm}\left. - 2 \sum_{l,m} \sqrt{\alpha_m \beta_l \gamma_l \delta_m} \right) \label{eq:ArikanIneq}\\
&= 2 \left(\sum_{l,m}\! \left( \left(\alpha_m+\delta_m\right) \sqrt{\beta_l \gamma_l} + \left(\beta_l+\gamma_l\right) \sqrt{\alpha_m \delta_m} \right) \right. \nonumber \\
&\hspace{20mm}\left. -2\sum_l\! \sqrt{\beta_l \gamma_l} \sum_{m}\!\sqrt{\alpha_m \delta_m}\! \right)\\
&= 2  \left(\sum_{l} \sqrt{\beta_l \gamma_l} +  \sum_m \sqrt{\alpha_m \delta_m} \right. \nonumber \\
&\hspace{20mm} \left. -2\sum_l \sqrt{\beta_l \gamma_l} \sum_{m}\sqrt{\alpha_m \delta_m} \right)\\
&=2 \Zc{X}{B}_\rho-\Zc{X}{B}_\rho^2,
\end{align}
where inequality \eqref{eq:ArikanIneq} is due to Ar{\i}kan \cite[Appendix D]{arikan09}.
\end{IEEEproof}

\prlsection{Entropy Polarization}
We begin by bounding the entropy $H(X|B)_\rho$ in terms of the probability of error $P_{\rm e}(X|B)_\rho$ when using $B$ to determine $X$. This is formally defined by
\begin{align}
P_e(X|B)_\rho&:=\min \limits_{0\leq \Lambda \leq \mathbbmss{1}} p_0 \Tr{\Lambda \rho_0} + p_1 \Tr{(\mathbbmss{1}-\Lambda) \rho_1}. \label{eq:inter}
\end{align}
The error probability can be expressed in terms of a trace distance, as follows. From the definition it follows immediately that 
\begin{align}
	P_e(X|B)_\rho&=\min \limits_{0\leq \Lambda \leq \mathbbmss{1}} p_1 + \Tr{\Lambda(p_0 \rho_0 - p_1 \rho_1)} \label{eq:min?}\\
	&=\min \limits_{0\leq \Lambda \leq \mathbbmss{1}} p_0 - \Tr{\Lambda(p_0 \rho_0 - p_1 \rho_1)}.\end{align}
Letting $\Gamma:=p_0 \rho_0 - p_1 \rho_1$, the minimum of \eqref{eq:min?} is achieved for $\Lambda=\{\Gamma \}_{-}$, the projector onto the negative part of $\Gamma$. We thus have $P_e(X|B)_\rho=p_1 + \Tr{\{\Gamma\}_{-}\Gamma}=p_0-\Tr{\{\Gamma\}_+\Gamma}$.
Averaging the two different expressions leads to
\begin{align}
P_e &= \frac{1}{2}\left(p_1 + \Tr{\{\Gamma\}_{-}\Gamma} +p_0 - \Tr{\{\Gamma \}_{+} \Gamma} \right)\\
       &= \frac{1}{2} - \frac{1}{2} \norm{\Gamma}_1\\
       &=\frac{1}{2} -\frac{1}{2}\norm{p_0 \rho_0 - p_1 \rho_1}_1.
\end{align}
Now we have
\begin{mylem} \label{lem:bnd1}
$-\log\bigl(1-P_e(X|B)_\rho\bigr) \leq \Hc{X}{B}_{\rho} \leq \Hb\bigl(P_e(X|B)_\rho\bigr)$.
\end{mylem}
\begin{IEEEproof}
Let $\hat X$ denote a guess of $X$ generated by an optimal measurement, the data processing inequality ensures that $H(X|B)_{\rho}\leq H(X|\hat X)_{\rho}$. Using Fano's inquality, i.e.,\ $H(X|\hat X)_{\rho} \leq \Hb(P_e)+P_e \log(|\mathcal{X}|-1) = \Hb(P_e)$ proves the upper bound.
To prove the lower bound we use $H_{\min}(X|B)_{\rho}=-\log(1-P_e)$ \cite[Section IC]{koenig09}. According to \cite[Proposition 4.3]{tomamichel_phd}, $H(X|B)_{\rho} \geq H_{\min}(X|B)_{\rho}$ which proves the assertion.
\end{IEEEproof}
We next bound $Z(X|B)_\rho$ in terms of $P_e$.
\begin{mylem} \label{lem:bnd2}
Let $\rho_0$ and $\rho_1$ be two arbitrary density operators and $p_0,p_1\geq 0$ such that $p_0+p_1=1$. The parameter  $Z(X|B)_{\rho}:=2\sqrt{p_0 p_1}\norm{\sqrt{\rho_0}\sqrt{\rho_1}}_1$  can be bounded in terms of error probability $P_e:=\tfrac{1}{2}-\tfrac{1}{2}\norm{p_0 \rho_0 - p_1 \rho_1}_1$ as
\begin{equation}
2P_e \leq \Zc{X}{B}_{\rho} \leq \sqrt{1-(1-2P_e)^2}.
\end{equation}
\end{mylem}
 \begin{IEEEproof}
 According to \cite[Lemma A.2.4]{renner_phd} we have $\norm{p_0 \rho_0 -p_1 \rho_1}_1^2 \leq 1-Z(X|B)_{\rho}^2$. Lemma~A.2.6 of \cite{renner_phd} ensures that $1-Z(X|B)_{\rho}\leq \norm{p_0 \rho_0 - p_1 \rho_1}$. By definition of $P_e$ we have $\norm{p_0 \rho_0 - p_1 \rho_1}=1-2P_e$ which completes the proof.
 \end{IEEEproof}
 Using Lemma~\ref{lem:bnd1} and \ref{lem:bnd2} we can bound $H(X|B)_{\rho}$ in terms of $Z(X|B)_{\rho}$.
\begin{myprop} \label{prop:PolEntr}
For $\rho^{XB}= \sum_{x\in \{0,1 \}} p_x \ket{x}\bra{x}^X \otimes \rho_x^B$ and $Z(X|B)_{\rho}:=2\sqrt{p_0 p_1}\norm{\sqrt{\rho_0}\sqrt{\rho_1}}_1$, we have
\begin{align}
&1-\log\left(1+\sqrt{1-\Zc{X}{B}_{\rho}^2} \right) \nonumber \\
&\hspace{25mm}\leq \Hc{X}{B}_{\rho} 
\leq \Hb\left(\frac{1}{2}\Zc{X}{B}_{\rho} \right). \label{eq:LBandUB}
\end{align}
\end{myprop}
\begin{IEEEproof}
Follows immediately from Lemma~\ref{lem:bnd1} and \ref{lem:bnd2}.
\end{IEEEproof}
Proposition~\ref{prop:PolEntr} serves the purpose of showing that $H(X|B)_{\rho}$ is near $0$ or $1$ if and only if $Z(X|B)_{\rho}$ is near $0$ or $1$, respectively, i.e.,\ $H(X|B)_{\rho}$ and $Z(X|B)_{\rho}$ polarize simultaneously. This is visualized in Figure~\ref{fig:entropyPol}.

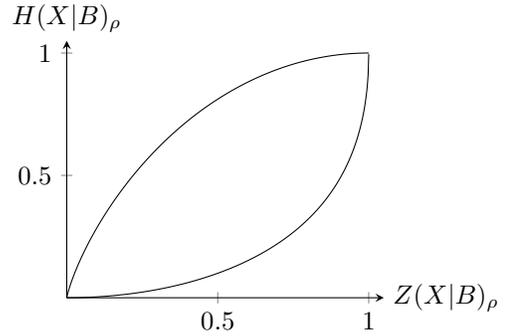
\begin{figure}[!htb]
\centering
\begin{tikzpicture}[]
\begin{axis}[domain=0:1.0,
   xlabel={$Z(X|B)_{\rho}$},
   ylabel={$H(X|B)_{\rho}$},
    xmin=0,
    xmax=1.05, 
    ymin=0,
    ymax=1.05,
   samples=500,
   axis x line=center,
  axis y line=center,
  xtick={0,0.5,1},
  ytick={0,0.5,1},
  xlabel style={right},
  ylabel style={above},
  height=5cm,]
  
\addplot [] {1-ln(1+sqrt(1-x^2))/ln(2)};
\addplot [] {-0.5*x*ln(0.5*x)/ln(2)-(1-0.5*x)*ln(1-0.5*x)/ln(2)};
\end{axis}
\end{tikzpicture}
\caption{\small The lower and upper bound for $\Hc{X}{B}_{\rho}$ given in \eqref{eq:LBandUB}. It shows that $H(X|B)_{\rho}$ and $Z(X|B)_{\rho}$ polarize simultaneously.}
\label{fig:entropyPol}
\end{figure}

\section{Entanglement Distillation}
\label{sec:ed}
Inspired by previous work in a purely classical scenario~\cite{suttershort}, we consider a \emph{concatenated} entanglement distillation scheme based on CSS codes. The explicitly concatenated structure differentiates our approach from that of Devetak and Winter~\cite{devetak_distillation_2005}, based on Devetak's CSS-like approach for channel coding~\cite{devetak_private_2005}.

\subsection{Protocol} \label{ssec:EDscheme}
The scheme consists of an inner layer which performs information reconciliation (IR)
in the amplitude, or $Z$-basis and an outer layer which performs information reconciliation in the phase, or $X$-basis. 
Each layer utilizes a quantum stabilizer code, and together the amplitude and phase codes form a CSS quantum error-correcting code. 
Information reconciliation at the inner layer is performed on $M$ independent blocks, each consisting of $L$ input systems. 
Due to this two-level structure the scheme has a blocklength $N=LM$. 
Letting $K$ denote the number of unmeasured outputs per amplitude block, Figure~\ref{fig:scheme} depicts the case $L=4$, $M=2$, and $K=2$. 
The unmeasured qubits after the amplitude information reconciliation are forwarded to the phase information reconciliation block.
\begin{figure}[tb]
\hspace{-16.5mm}
\scalebox{0.65}{
\input{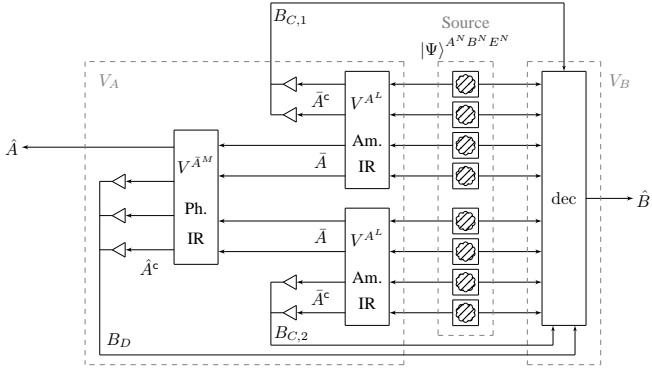}}
\caption{The entanglement distillation scheme for $L=4$, $M=2$ and $K=2$. A source (middle) produces states $\ket{\Psi}$ and distributes the $A$ subsystems to Alice (left), the $B$ subsystems to Bob (right), and retains the $E$ subsystems. Alice performs the amplitude information reconciliation (IR) transformation $M$ times, measures part of the output with respect to the amplitude basis and sends the outcomes to Bob over a classical channel. The non-measured qubits are fed to an IR operation in the complementary phase basis and part of the outcome is again measured and sent to Bob. Using Alice's classical information, Bob runs a decoder such that his and Alice's outcome---described by the systems $\hat A$ and $\hat B$---are a good approximation to maximally entangled qubits.}
\label{fig:scheme}
\end{figure}

To explain the scheme in more detail, we start with a single bipartite system $\psi^{AB}$ shared by Alice and Bob, with $A$ a qubit. Purifying $\psi^{AB}$ and expressing $A$ in the amplitude basis gives 
\begin{equation}
\ket{\psi}^{ABE}=\sum_{z\in \{0,1\}} \sqrt{p_z} \ket{z}^A \ket{\varphi_z}^{BE}, \label{eq:psi}
\end{equation}
where $p_z$ is some probability distribution and $\{\ket{\varphi_z}^{BE}\}$ some set of normalized states, not necessarily orthogonal. The input to each block of the first layer will be the state 
\begin{equation}
\ket{\Psi}^{A^LB^LE^L}=\left(\ket{\psi}^{ABE}\right)^{\otimes L}.
\end{equation}

The amplitude stabilizer code is chosen so that Bob can determine $z^L$, with  probability exceeding $1-\epsilon_1$ using his systems $B^L$ if he is also supplied with the syndromes of the code. These are determined by Alice and transmitted to him over a public classical channel.  Given a particular $z^L$, denote by $\bar z^\setC$ the syndrome and $\bar z$ the encoded information. Since stabilizer codes are linear codes, $z^L$ determines $(\bar z,\bar z^\setC)$ and \emph{vice versa}. Moreover, for every stabilizer code on $L$ systems there exists a unitary operation which maps the stabilizer and encoded operators to physical qubits. Call this unitary $V^{A^L}$; after applying it, Alice need only measure certain subsystems to generate the syndrome. Let $\bar A^{\setC}$ be the systems which are then measured to yield the syndromes and $\bar A$ the remaining systems (corresponding to encoded qubits).  
After applying the unitary, the joint state becomes
\begin{align}
&\ket{\Psi_1}^{A^LB^LE^L} \nonumber \\
&=V^{A^L}\ket{\Psi}^{A^LB^LE^L}\\
&=\sum_{(\bar z, \bar z^{\setC}) \in \{0,1 \}^L} \sqrt{p_{z^L(\bar z, \bar z^{\setC})}} \ket{\bar z}^{\bar A} \ket{\bar z^{\setC}}^{\bar A^{\setC}} \ket{\varphi_{z^L(\bar z, \bar z^{\setC})}}^{B^L E^L}.
\end{align}
 Sending $\bar z^{\setC}$ to Bob can be modeled as copying $\bar z^{\setC}$ to a register $B_C$ he controls, plus another one for the environment, $E_C$  (for notational simplicity, we suppress the dependence of $z^L$ on $(\bar z,\bar z^{\setC})$):
\begin{align}
 &\ket{\Psi_2}^{A^L B^L B_C E^LE_C} \nonumber \\
 &= \sum_{(\bar z, \bar z^{\setC}) \in \{0,1 \}^L} \sqrt{p_{z^L}} \ket{\bar z}^{\bar A} \ket{\bar z^{\setC}}^{\bar A^{\setC}} \ket{\bar z^{\setC}}^{B_C}\ket{\bar z^{\setC}}^{E_C}\ket{\varphi_{z^L}}^{B^L E^L}.
\end{align}
Bob's decoding operation attempts to determine $\bar z$ using the information contained in systems $B^L$ and $B_C$.
It can be thought of as a measurement on $B^L$ conditioned on the value in the register $B_C$; call its elements $\Lambda_{\bar z;\bar z^\setC}^{B^L}$. Performed coherently, it stores the result $z^L$ in an extra ancillary system, say $B_C'$. The post-measurement state is then 
\begin{align}
&\ket{\Psi_3}^{A^LB^LB_CB_C'E^LE_C} \nonumber \\
&=\!\!\!\!\sum_{(\bar z, \bar z^{\setC})\in\{0,1\}^L }\!\!\sum_{\bar z'}\sqrt{p_{z^L}}\ket{\bar z}^{\bar A}  \ket{\bar z^{\setC}}^{\bar A^{\setC}} \ket{\bar z^{\setC}}^{B_C}\ket{\bar z'}^{B_C'} \nonumber \\
&\hspace{35mm}\sqrt{\Lambda_{\bar z';\bar z^\setC}^{B^L}}\ket{\varphi_{z^L}}^{B^L E^L}\ket{\bar z^{\setC}}^{E_C}.
\end{align}
Regarding the pair $(B_CB_C')$ as the system $C^L$, a simple fidelity calculation shows that $\langle \Psi_3 | \hat\Psi_3\rangle\geq 1-\epsilon_1$, where 
\begin{align}
	&\ket{\hat\Psi_3}^{A^LB^LC^LE^LE_C} \nonumber \\
	&=\!\!\!\!\sum_{(\bar z, \bar z^{\setC}) \in \{0,1\}^L}\!\!\!\!  \sqrt{p_{z^L}} \ket {(\bar z, \bar z^{\setC})}^{A^L}\ket{\varphi_{z^L}}^{B^L E^L}\! \ket{(\bar z, \bar z^{\setC})}^{C^L}\ket{\bar z^{\setC}}^{E_C}.
\end{align}

The outer layer performs phase information reconciliation on $M$ instances of the $\bar A$ systems of the state $\ket{\Psi_3}$, where Bob's side information is given by $B^LC^L$ in each instance. In contrast to the inner layer, here the information to be reconciled is not a bit, but a sequence of bits. Therefore, to use the formalism of stabilizer codes, we either need to consider codes over larger dimension 
or multilevel coding schemes. 
Either would work for our purposes, but for concreteness let us opt for the latter. 
Here, Alice and Bob assemble a block of $M$ systems $\bar A$ and sequentially run blocksize-$M$ phase IR protocols on the first qubits in each of the $\bar A$, then the second, and so forth.
At each step they treat already reconciled systems as side information for the current step. 

Ultimately the effect of this procedure can be regarded, as at the inner layer, as applying a unitary $V^{\bar A^M}\!$ and measuring a subset of the output qubits to obtain the syndromes.  These measurement results are sent to Bob, which is modeled as copying them to a register $B_D$ he controls, plus another one for the environment $E_D$. Remaining at the end of this process are a set of unmeasured qubits, the encoded qubits $\hat A$ of the error-correcting code used in phase information reconciliation. 

\subsection{Reliability \& Rate}
\label{subsec:rr}
Now let us examine the scheme more quantitatively. 
Associated with any set of qubits are a set of $X$ and $Z$ operators acting on these qubits; abusing notation, let us refer to the entire collection of these by, for instance, $X^{\bar A}$ and $Z^{\bar A}$ for the set $\bar A$. 
The amplitude IR protocol is chosen to be $\epsilon_1$-good, i.e.,\  $p_{\rm err}(Z^{A^L}|B^L B_C)_{\Psi_2}\leq \epsilon_1$. Since the scheme uses $M$ independent amplitude information reconciliation blocks, we can use the union bound to write
\begin{equation}
p_{\rm err}\left(Z^{A^N}\middle |B^N B_C^M\right)_{\Psi_2^{\otimes M}}\leq M \epsilon_1. \label{eq:amplitudeIR}
\end{equation}
Sidestepping the details of the multilevel coding for the moment, the phase IR protocol is chosen to have 
 \begin{equation}
  p_{\rm err}\left(X^{\bar{A}^M}\middle|B^NC^N B_D\right)\leq \epsilon_2. \label{eq:phaseIR}
 \end{equation}
Clearly $X^{\hat A}$ (cf.~Figure~\ref{fig:scheme}) is a deterministic function of $X^{\bar A^M}$ due to the action of $V^{\bar A^M}$.  However, since this unitary implements a linear function in the basis conjugate to the amplitude observable $Z^{\bar A^M}$, it also implements a linear function in the amplitude basis itself. (This fact was used to show that Ar{\i}kan's polar encoding circuit is directly useful in the quantum setting in~\cite{renesshort}.) Therefore,  $Z^{\hat A}$ is a deterministic function of $Z^{\bar A^M}$ and hence also of $Z^{A^N}$. 
\begin{align}
&p_{\rm err}\left(X^{\hat A}\middle|B^NC^N B_D\right)\leq \epsilon_2\quad\textnormal{and} \label{eq:rel1}\\
&p_{\rm err}\left(Z^{\hat A}\middle|B^N B_C^M\right)\leq M\epsilon_1. \label{eq:rel2}
\end{align}

These conditions ensure that Alice and Bob share a good approximation to $|\hat A|$  maximally entangled qubit pairs. 
Alice's part of the distillation process (summarized in the left hand side of Figure~\ref{fig:scheme}) can be described by a unitary $U_{A}^{A^N \to \hat A B_C^M B_D E_C^M E_D}$. Bob's part is to decode the state $\ket{\Psi}^{A^N B^N E^N}\!$ using $B^N$ and the side information $B_C^M B_D$ he receives from Alice. Inequalities \eqref{eq:rel1} and \eqref{eq:rel2} together with \cite[Theorem 1]{renes10} ensure that there exists a decoding unitary $U_{B}^{B^N B_C^M B_D \to \hat B}$. 
It is constructed directly from the two IR decoders. 

To make the reliability statement precise, define $V_A:=U_A^{A^N \to \hat A B_C^M B_D E_C^M E_D}$, $V_B:=U_B^{B^N B_C^M B_D \to \hat B}$, introduce 
\begin{equation}
\mathcal{E}(\cdot):=\Trs_{E_C^M E_D E^N}[V_B\bigl(V_A(\cdot)V_A^\dagger\bigr)V_B^\dagger] \label{eq:distillationmap}
\end{equation}
and define $\delta({\psi},{\phi})=\tfrac12\|\psi-\phi\|_1$. 
\begin{myprop} \label{prop:quality}
Let $\ket{\phi}_d^{\hat A \hat B}$ be a maximally entangled state of dimension $d$, where $d=\dim \hat A$. Then
\begin{equation}
\delta\left({\phi}_d^{\hat A \hat B}, \mathcal{E}\bigl({\Psi}^{A^N B^N E^N}\bigr) \right)\leq \sqrt{2\epsilon_2}+\sqrt{2M\epsilon_1}.
\end{equation}
\end{myprop}
\begin{IEEEproof}
This proposition follows immediately from \eqref{eq:rel1}, \eqref{eq:rel2} and \cite[Theorem 1]{renes10}. 
\end{IEEEproof}

The rate of the scheme is defined as the number of output qubits divided by the number of input qubits (cf. Equation~\eqref{eq:ratedef}). 
\begin{mythm} \label{thm:ratescheme}
The rate of the scheme is
\begin{equation}
R  = \frac{1}{L}\CI{\bar A}{B^L C^L}_{\Psi_3'} + \frac{o(L)}{L}, \label{eq:CI_layer} 
\end{equation}
 where  
 \begin{align}
&\ket{\Psi_3'} \nonumber \\
&=\!\!\!\!\!\!\sum_{(\bar z, \bar z^{\setC}) \in \{0,1\}^L}\!\!\!\!\!\!\! \!\!\! \sqrt{p_{z^L(\bar z,\bar z^{\setC})}} \ket {\bar z}^{\bar A}\! \ket{\bar z^{\setC}}^{\bar A^{\setC}}\! \ket{\varphi_{z^L(\bar z,\bar z^{\setC})}}^{B^L E^L}\! \ket{z^L(\bar z,\bar z^{\setC})}^{C^L}\!\!\!. \label{eq:psi3Prime}
\end{align} 
\end{mythm}
\begin{IEEEproof}
The rate of the scheme is
\begin{align}
R:=& \, \frac{\big| \hat A \big|}{N} \label{eq:ratedef} \\
    =& \,\frac{N-M \big| \bar A^{\setC}\big| - \big| \hat A^{\setC}\big|}{N}\\
    =&\, \frac{N-ML H(Z^{A}|B)_{\psi}-MH(X^{\bar A}|B^L C^L)_{\hat\Psi_3}}{N}. \label{eq:notfini}
\end{align}
Since we only care about the $\bar A$ register, the copy of $\bar z^{\setC}$ in $E_C$ does not contribute to the entropy in the final term. Thus,
\begin{equation}
\Hc{X^{\bar A}}{B^L C^L}_{\hat\Psi_3} = \Hc{X^{\bar A}}{B^L C^L}_{\Psi_3'},
\end{equation}
where $\ket{\Psi_3'}$ is as defined in \eqref{eq:psi3Prime}. Using Lemma~\ref{lem:2pcertainty} we obtain
\begin{align}
&\Hc{X^{\bar A}}{B^L C^L}_{\Psi_3'} \nonumber \\
 &\hspace{2mm}= \log|\bar A| + \Hc{\bar A}{B^L C^L}_{\Psi_3'} \\
&\hspace{2mm}= L\left(1-\Hc{Z^A}{B}_{\psi}\right)+\Hc{\bar A}{B^L C^L}_{\Psi_3'} +o(L),
\end{align}
where the final step is ensured by the fact that the code is capacity-achieving. We thus can write \eqref{eq:notfini} as
\begin{align}
R &= -\frac{\Hc{\bar A}{B^L C^L}_{\Psi_3'}}{L} + \frac{o(L)}{L} \\
   & = \frac{\CI{\bar A}{B^L C^L}_{\Psi_3'}}{L} + \frac{o(L)}{L},
\end{align}
which proves the assertion.
\end{IEEEproof}

\begin{mycor} \label{cor:CI}
For $\psi$ as given in \eqref{eq:psi}, the rate is larger than or equal to the coherent information, i.e.,\
\begin{align}
     R&\geq  \max \left \lbrace 0, \CI{A}{B}_{\psi} \right \rbrace.  \label{eq:finalStep}
     \end{align}
\end{mycor}

\begin{IEEEproof}
Recall that the states $\ket{\psi}$ and $\ket{\Psi_3'}$ are defined in \eqref{eq:psi} and \eqref{eq:psi3Prime}. Let 
\begin{equation}
\ket{\psi'}^{ABCE}=\sum_{z \in \{0,1\}} \sqrt{p_z} \ket{z}^A \ket{z}^C \ket{\varphi_z}^{BE}. \label{eq:psiPrimeState}
\end{equation}
By definition (cf.\ \eqref{eq:ratedef}) the rate is non-negative. As explained in the proof of Theorem~\ref{thm:ratescheme} we can write
\begin{align}
R=\frac{N-ML \Hc{Z^{A}}{B}_{\psi}-M\Hc{X^{\bar A}}{B^L C^L}_{\Psi_3'}}{N}.
\end{align}
Using the chain rule we obtain
\begin{align}
&\Hc{X^{\bar A}}{B^L C^L}_{\Psi_3'} \nonumber \\
 &\hspace{4mm}=\Hc{X^{A^L}}{B^L C^L}_{\Psi_3'}- \Hc{X^{\bar A^{\setC}}}{B^L C^L  X^{\bar A}}_{\Psi_3'}\\
&\hspace{4mm}= L \Hc{X^A}{BC}_{\psi'}-\Hc{X^{\bar A^{\setC}}}{B^L C^L X^{\bar A}}_{\Psi_3'}.
\end{align}
We thus have
\begin{align}
     R&=1-\Hc{Z^A}{B}_{\psi}-\Hc{X^A}{BC}_{\psi'} \nonumber \\
     &\hspace{34mm}+\frac{\Hc{X^{\bar A^{\setC}}}{B^L C^L X^{\bar A}}_{\Psi_3'}}{L} \label{eq:bfquestion} \\
        &\geq \max \left \lbrace 0, 1- \Hc{Z^A}{B}_{\psi}-\Hc{X^A}{BC}_{\psi'} \right \rbrace \label{eq:QuestionTight}\\
        &= \max \left \lbrace 0, -\Hc{A}{B}_{\psi} \right \rbrace   \label{eq:finalStep2} \\
        &= \max \left \lbrace 0, \CI{A}{B}_{\psi} \right \rbrace.
     \end{align}
Equality \eqref{eq:finalStep2} holds since
\begin{align}
&1-\Hc{Z^A}{B}_{\psi} - \Hc{X^A}{BC}_{\psi'} \nonumber \\
&\hspace{15mm}= 1-\Hc{A}{B}_{\psi'}-\Hc{X^A}{BC}_{\psi'} \\
&\hspace{15mm}= - \Hc{A}{B}_{\psi'}-\Hc{A}{BC}_{\psi'} \label{eq:URstep} \\
&\hspace{15mm}= -\Hc{A}{B}_{\psi'} - \Hc{C}{AB}_{\psi'}\\
&\hspace{15mm}= -\Hc{AC}{B}_{\psi'}\\
&\hspace{15mm}= -\Hc{A}{B}_{\psi}, 
\end{align}
where \eqref{eq:URstep} uses Lemma~\ref{lem:2pcertainty} and that $H(Z^A|BC)_{\psi'}=0$.
\end{IEEEproof}

Note that the scheme presented above works for any CSS code meeting the two reliability conditions~\eqref{eq:rel1} and \eqref{eq:rel2}. A particularly favorable family of codes are the quantum polar codes. Using these codes for Pauli or erasure channels, we know how to build an efficient encoder and decoder having essentially linear complexity and being reliable for large enough blocklength. This will be explained next.

\section{Using Quantum Polar Codes for Pauli or Erasure Channels}  \label{sec:effQPC}
By using polar codes, Alice and Bob can perform the operations explained in Section~\ref{sec:ed} in a computationally efficient manner for states $\ket{\psi}^{ABE}$ that arise from sending half of an entangled pair through a Pauli or erasure channel.  

\subsection{Code Construction} 
Before the protocol starts one must construct the code, i.e.,\ determine the qubits comprising the systems $\bar A^{\setC}$ at the inner layer and $\hat A^{\setC}$ at the outer layer. (Recall that these are the qubits that are measured by Alice and whose measurement outcomes are sent to Bob.) Constructing the system $\bar A^{\setC}$ can be approximately done in linear time using Tal and Vardy's algorithm \cite{talandvardy10} and its adaptation to an asymmetric setup as explained in \cite{honda12}, or alternatively using the more recent algorithm by Tal \emph{et al.}~\cite{tal12}.

To determine the system $\hat A^{\setC}$ requires more effort. Applying the above algorithm for a ``super-source'' seen by the outer layer will not be efficient in the overall blocklength $N$ since its alphabet size is exponential in $L$. Nonetheless, due to the structure of the inner layer, it is perhaps possible that the method of approximation by limiting the alphabet size~\cite{talandvardy10,tal12} can be extended to this case.

\subsection{Encoding}
As described in Section~\ref{ssec:EDscheme} and Figure~\ref{fig:scheme}, starting with a state $\ket{\Psi}^{A^N B^N E^N}\!$, Alice first applies $M$ times a unitary $V^{A^L}$ to perform amplitude information reconciliation, which is in the specific case of using quantum polar codes $V^{A^L}=\sum_{z^L\in \{0,1 \}^{L}}\ket{G_{L} z^L} \bra{z^L}$, where $G_{L}=G^{\otimes \log{L}}$ and $G=\left(\begin{smallmatrix} 1 & 1\\ 0 & 1
\end{smallmatrix} \right)$ denotes the Ar{\i}kan polar transform \cite{arikan09}. 
Alice measures the frozen qubits with respect to the amplitude basis and sends the outcome to Bob. 
 
Alice next applies another polar transform $V^{\bar A^M}\!$---this time with respect to the phase basis---to the $M$ systems $\bar A$. Here we use a multilevel coding scheme, as mentioned above and described in more detail in Section~\ref{ap:polarCodes}. The unitary $V^{\bar A^M}\!$ applied to elements in the phase basis acts as $(\widetilde G_{M}^K)^T$, i.e.,
\begin{align}
V^{\bar A^M} &= \frac{1}{2^{KM}} \sum_{x,x',z \in \{0,1 \}^{KM}} (-1)^{x' \cdot  \widetilde{G}_{M}^K z + x \cdot z} \ket{\tilde x'} \bra{\tilde x}\\
&= \sum_{x \in \{0,1 \}^{KM}} \ket{(\widetilde G_{M}^K)^T \tilde x}\bra{\tilde x},
\end{align}
where we have used $(\widetilde G_{M}^K )^{-1}=\widetilde G_{M}^K$.
The frozen qubits are measured with respect to the phase basis and its outcomes are sent again to Bob. The remaining qubits form the system $\hat A$.

\begin{mythm} \label{thm:alice}
The encoding of the distillation scheme can be done with $O(N \log N)$ steps.
\end{mythm}
\begin{IEEEproof}
The $M$ polar transformations with respect to the amplitude basis can be performed in $O(M L \log L)$ complexity. The second polar transform, this time with respect to the phase basis, has $O(MK \log(MK))$ complexity, where $K=|\bar A|$. Hence it follows that all the amplitude measurements require $O(M(L-K))$ operations and all phase measurements can be done with $O(MK)$ complexity. From the polarization phenomenon (cf.\ Theorem~\ref{thm:polarizationPhenomenon}) we obtain that $K=O(L)$ which proves the assertion.
\end{IEEEproof}

\subsection{Decoding}
Decoding of the $B^N$ system, with the additional information stored in the $B_C^M B_D$ registers, can be done by combining ideas from \cite{renes10} and \cite{renesshort}.
As shown schematically in Figure~\ref{fig:Bob}, Bob's operation is constructed by using the classical polar decoders for amplitude and phase IR in sequence. Note however that these two decoding tasks are not independent as this would neglect possible correlations between amplitude and phase.  

In the first step Bob performs the amplitude IR decoding operation $\mathcal D_A$ ($M$ times), which corresponds to the standard classical polar decoder as introduced in \cite{arikan09}, and stores the result in an auxiliary system $F_i$, $i\in\{1,\ldots,M\}$. Each instance of $\mathcal D_A$ requires the corresponding frozen information, the values $\bar{z}^\setC$, which is provided in $B_C^M$. 

Bob next performs the phase IR decoding operation $\mathcal D_P$, using the information gained from decoding the first layer. Following \cite{suttershort}, we next show how the standard polar decoding procedure can be adapted for the outer layer of a two-stage polar scheme. 

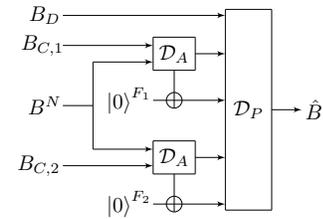
\begin{figure}[!htb]
\hspace{8mm}
\scalebox{0.8}{
\def \x{0.5}
\def \y{1}
\def \ys{0.275} 
\def \yc{0.5} 
\def \xb{0.7} 
\def \xp{0.77} 
\def \xs{0.3} 

\def \la{0.2}
\def \cnotradius{0.13}
\def \radiussmall{0.03}
\def \boxx{0.45}
\def \xspec{0.75} 
\def \shi{0.15} 

\def \tin{0.10} 

\begin{tikzpicture}[scale=1,auto, node distance=1cm,>=latex']
	
     \draw [->] (0,\yc-\y) -- (4*\x+\xb,\yc-\y);     
     \draw [->] (0,-\y) -- (3*\x,-\y);
     \draw [] (0,-2*\y) -- (\x,-2*\y);
     \draw [->] (0,-3*\y) -- (3*\x,-3*\y);

       \draw [] (\x,-2*\y) -- (\x,-\y-\ys);
       \draw [->] (\x,-\y-\ys) -- (3*\x,-\y-\ys);
       \draw [] (\x,-2*\y) -- (\x,-3*\y+\ys);    
       \draw [->] (\x,-3*\y+\ys) -- (3*\x,-3*\y+\ys);

     \draw [draw] (3*\x,-\y+0.5*\ys) -- (3*\x,-\y-1.5*\ys);      
     \draw [draw] (3*\x+\xb,-\y+0.5*\ys) -- (3*\x+\xb,-\y-1.5*\ys);    
      \draw [draw] (3*\x,-\y+0.5*\ys) -- (3*\x+\xb,-\y+0.5*\ys);      
      \draw [draw] (3*\x,-\y-1.5*\ys) -- (3*\x+\xb,-\y-1.5*\ys);  
    \node at (3*\x+0.5*\xb,-\y-0.5*\ys) {$\mathcal{D}_A$};      
 
      \draw [draw] (3*\x,-3*\y+1.5*\ys) -- (3*\x,-3*\y-0.5*\ys);      
     \draw [draw] (3*\x+\xb,-3*\y+1.5*\ys) -- (3*\x+\xb,-3*\y-0.5*\ys);    
      \draw [draw] (3*\x,-3*\y+1.5*\ys) -- (3*\x+\xb,-3*\y+1.5*\ys);      
      \draw [draw] (3*\x,-3*\y-0.5*\ys) -- (3*\x+\xb,-3*\y-0.5*\ys);  
     \node at (3*\x+0.5*\xb,-3*\y+0.5*\ys) {$\mathcal{D}_A$};   
     
     \draw [->] (3*\x,-\y-1.5*\ys-\yc) -- (4*\x+\xb,-\y-1.5*\ys-\yc);
     \draw (3*\x+0.5*\xb,-\y-1.5*\ys-\yc) circle (\cnotradius);
     \draw [draw] (3*\x+0.5*\xb,-\y-1.5*\ys) -- (3*\x+0.5*\xb,-\y-1.5*\ys-\yc-\cnotradius);  
                  
      \node at (3*\x-2*\la,-\y-1.5*\ys-\yc) {$\ket{0}^{F_1}$};       
     \draw [->] (3*\x,-3*\y-0.5*\ys-\yc) -- (4*\x+\xb,-3*\y-0.5*\ys-\yc);
     \draw (3*\x+0.5*\xb,-3*\y-0.5*\ys-\yc) circle (\cnotradius);
     \draw [draw] (3*\x+0.5*\xb,-3*\y-0.5*\ys) -- (3*\x+0.5*\xb,-3*\y-0.5*\ys-\yc-\cnotradius);   
                      
      \node at (3*\x-2*\la,-3*\y-0.5*\ys-\yc) {$\ket{0}^{F_2}$};     
      
   \draw [draw] (4*\x+\xb,\yc-\y+\tin) -- (4*\x+\xb+\xp,\yc-\y+\tin);      
   \draw [draw] (4*\x+\xb,-3*\y-0.5*\ys-\yc-\tin) -- (4*\x+\xb+\xp,-3*\y-0.5*\ys-\yc-\tin);          
   \draw [draw] (4*\x+\xb,\yc-\y+\tin) -- (4*\x+\xb,-3*\y-0.5*\ys-\yc-\tin);         
   \draw [draw] (4*\x+\xb+\xp,-3*\y-0.5*\ys-\yc-\tin) -- (4*\x+\xb+\xp,\yc-\y+\tin);

      \draw [->] (4*\x+\xb+\xp,-2*\y-0.25*\ys) -- (5*\x+\xb+\xp,-2*\y-0.25*\ys);        

       \node at (5*\x+\xb+\xp+\la,-2*\y-0.25*\ys) {$\hat B$};   
      \node at (-1.5*\la,\yc-\y) {$B_D$}; 
      \node at (-1.7*\la,-\y) {$B_{C,1}$};                             
      \node at (-1.5*\la,-2*\y) {$B^N$};      
      \node at (-2*\la,-3*\y) {$B_{C,2}$};  
      \node at (4*\x+\xb+0.5*\xp,-2*\y-0.25*\ys) {$\mathcal{D}_{P}$};

        \draw [->] (3*\x+\xb,-\y-0.5*\ys) -- (4*\x+\xb,-\y-0.5*\ys);   
        \draw [->] (3*\x+\xb,-3*\y+0.5*\ys) -- (4*\x+\xb,-3*\y+0.5*\ys);

\end{tikzpicture}}
\caption{Bob's task in the entanglement distillation process for $M=2$. With the help of the ancilla systems $F_1$ and $F_2$, the classical decoders $\mathcal{D}_A$ and $\mathcal{D}_{P}$ are utilized to distill entanglement.}
\label{fig:Bob}
\end{figure}

\prlsection{Efficient Concatenated Classical Source Coding} 
We start by presenting a concatenated classical coding scheme based on polar codes that can be used for efficient source coding. The scheme has been introduced in \cite{suttershort} for efficient channel coding at the optimal rate.  Let  $\mathcal{I}\subseteq [L]$ denote the indices of the frozen bits at the inner layer and let $K=L-\left| \mathcal{I} \right|$. The set $\mathcal{O}\subseteq [KM]$ denotes the indices of the frozen bits at the outer layer.\footnote{The bits at the outer layer are numbered with respect to the multilevel structure, i.e.,\ each binary decompressor is numbered sequentially.} Figure~\ref{fig:clSourceCoding}, depicts the scheme schematically for the setup of $L=4$, $M=2$, $\mathcal{I}=\{2,3\}$, $K=2$ and $\mathcal{O}=\{2\}$.
In the following we prove that there exists an encoder and decoder that require $O(N \log N)$ steps having an error probability not greater than $O(L 2^{-M^{\beta}})$ for any $\beta<\tfrac{1}{2}$.
\begin{figure}[!htb]
	\hspace{-20mm}
\scalebox{0.75}{
	\input{channelviewCL.tex}}
        \caption{\small The source coding scheme for $L=4$, $M=2$, $\mathcal{I}=\{2,3\}$, $K=2$ and $\mathcal{O}=\{2\}$. A source produces $N$ i.i.d.\ copies of correlated random variables $(X,Y)^N$. The encoder first applies $M$ polarization transforms $G_L$ to $X^L$, obtaining a vector $V^N$. The frozen bits that are determined by $\mathcal{I}$ and denoted by $S_{L-K}^{(M)}$ are sent to the decoder whereas the remaining bits are encoded by another polarization transform---performed in a multilevel construction as explained in Section~\ref{ap:polarCodes}---resulting in a vector $T^{(K)}_M$. The frozen bits of $T^{(K)}_M$ (determined by $\mathcal{O}$) are sent to the decoder again. Using the bits received from the encoder the decoder outputs $\hat T^{(K)}_M$; a guess for $T^{(K)}_M$. }
        \label{fig:clSourceCoding}
\end{figure} 

As depicted in Figure~\ref{fig:clSourceCoding} the encoder consists of two parts. It first applies $M$ identical inner encoding transforms---it performs $M$ times the polar transform $G_L$. The outcome $V^N$ can be classified into two systems determined by the code construction. The frozen bits, denoted by $S_{L-K}^{(M)}$, are sent to the decoder whereas the others are encoded a second time by the outer encoder. The outer encoder applies another polarization transform in the multilevel technique explained in Section~\ref{ap:polarCodes} which outputs $T^{(K)}_M$.\footnote{$T^{(i)}_j$ denotes the $j$th output of the $i$th binary outer encoder in the sequential multilevel structure.} The frozen bits of $T^{(K)}_M$, determined by the code construction, are sent to the decoder again.
\begin{figure}[!htb]
\hspace{-4mm}
\scalebox{0.8}{
\def\gapx{1.7}
\def\gapy{.85}
\def\gaph{0.3}
\def\gaps{0.3}
\def\num{0.25}
\def\rad{0.2}
\def\radd{.23}
\def\spl{.2}

\begin{tikzpicture}[gr/.style={gray!75!black},x=\gapx cm,y=-\gapy cm]

\foreach \x in {1,2} {
  \foreach \y in {1,...,8} {
     }
}
\foreach \x in {1,2} {
  \foreach \y in {1,...,2} {
      }
}

\foreach \x in {1,...,8} {
   \draw (2,\x) -- (4.1,\x);
}

\foreach \x in {0,1} {
   \foreach \y in {0,4} {
   \fill (3.5-\spl+2*\x*\spl,\x+3+\y) circle (1.75pt);
   \draw (3.5-\spl+2*\x*\spl,\x+3+\y) -- (3.5-\spl+2*\x*\spl,\x+1-\radd+\y);
   \draw  (3.5-\spl+2*\x*\spl,\x+1+\y) circle [x radius=\rad cm, y radius =\rad cm];
   }   
   }
   
\foreach \x in {1,...,4} {
   \fill (2.5,2*\x) circle (1.75pt);
   \draw (2.5,2*\x) -- (2.5, 2*\x-1-\radd);
   \draw (2.5, 2*\x-1) circle [x radius=\rad cm, y radius =\rad cm];
}
   
\fill (1.25,8) circle (1.75pt);
\draw (1.25,8) -- (1.25,4-\radd);
\draw (1.25,4) circle [x radius=\rad cm, y radius =\rad cm];
\fill[gray] (1.25-.375,4) circle (1.25pt);
\fill[gray] (1.25-.375,8) circle (1.25pt);

\fill (0.1,5) circle (1.75pt);
\draw (0.1,5) -- (0.1,1-\radd);
   \draw (0.1,1) circle [x radius=\rad cm, y radius =\rad cm];
\fill[gray] (-.15,1) circle (1.25pt);
\fill[gray] (-.15,5) circle (1.25pt);


\draw (-.3,1) -- (2,1);
\draw (-.3,5) -- (2,5);
\draw (1-.3,4) -- (2,4);
\draw (1-.3,8) -- (2,8);

\foreach \x in {1,...,8} {
\node[anchor=west] at (4.1,\x) {$x_\x$};
}

\node[anchor=east] at (2,2) { $s^{(1)}_1$};
\draw[<-] (2,2) -- (2.05,2);
\node[anchor=east] at (2,3) { $s^{(1)}_2$};
\draw[<-] (2,3) -- (2.05,3);
\draw[<-] (2,4) -- (2.05,4);
\draw[<-] (2,1) -- (2.05,1);
\node[anchor=east] at (2,6) { $s^{(2)}_1$};
\draw[<-] (2,6) -- (2.05,6);
\node[anchor=east] at (2,7) { $s^{(2)}_2$};
\draw[<-] (2,7) -- (2.05,7);
\draw[<-] (2,5) -- (2.05,5);
\draw[<-] (2,8) -- (2.05,8);
\fill[gray] (1.75,1) circle (1.25pt);
\fill[gray] (1.75,4) circle (1.25pt);
\fill[gray] (1.75,5) circle (1.25pt);
\fill[gray] (1.75,8) circle (1.25pt);

\node[anchor=east] at (-.3,1) { ${t}^{(1)}_1$};
\draw[<-] (-.3,1) -- (-0.25,1);
\node[anchor=east] at (-.3,5) { ${t}^{(1)}_2$};
\draw[<-] (-.3,5) -- (-0.25,5);
\node[anchor=east] at (1-.3,4) { ${t}^{(2)}_1$};
\draw[<-] (1-.3,4) -- (1-0.25,4);
\node[anchor=east] at (1-.3,8) { ${t}^{(2)}_2$};
\draw[<-] (1-.3,8) -- (1-0.25,8);

\foreach \x in {1,...,8} {
\fill[gray] (2.9,\x) circle (1.25pt);
\fill[gray] (4,\x) circle (1.25pt);
\draw[->] (3.95,\x) -- (3.9,\x);
}

\end{tikzpicture}}
\caption{\small Encoding circuit for the setup $L=4$, $M=2$, $\mathcal{I}=\{2,3\}$, $K=2$ and $\mathcal{O}=\{2\}$. Here $s_j^{(i)}$ denote the frozen bits that are sent to the decoder at the inner layer. We have a single frozen bit at the outer layer, $t_2^{(1)}$, that is sent to the decoder as well. The small gray dots represent variables in the network and correspond to nodes in Fig.~\ref{fig:clDecoder}.}
\label{fig:clEncoder}
\end{figure}
\begin{mylem} \label{lem:ApEncode}
 The classical encoder explained above has complexity $O\left(N \log N \right)$.
\end{mylem}
\begin{IEEEproof}
The inner encoder performs $M$ times a standard polar transform $G_L$ which has been shown to require at most $O(L \log L)$ steps each \cite{arikan09}. 
 The outer encoder consists of $K$ multiplications with the matrix $G_M$, each requiring $O(M\log M)$ operations~\cite{arikan09}.  As justified in Section~\ref{sec:effQPC} and ensured by the polarization phenomenon (cf. Theorem~\ref{thm:polarizationPhenomenon}), $K=O(L)$. We thus conclude that the total encoding requires $O(M L \log L) + O(L M \log M) = O(N \log N)$ steps.
\end{IEEEproof}

The decoding is more challenging.
An important feature of the decoder is that the inner layer decompressors must be interleaved with the outer layer decompressors in order to ensure that all required variables are known at the appropriate steps. To illustrate, we explain in detail how the decoding is done for the setup $L=4$, $M=2$, $\mathcal{I}=\{2,3\}$, $K=2$ and $\mathcal{O}=\{2\}$. The logical structure of the successive cancellation decoder is shown in Figure~\ref{fig:clDecoder}. Figure 10 of~\cite{arikan09} depicts a similar representation of the original successive cancellation decoder. To see the close affinity between the encoding and decoding process, Figure~\ref{fig:clEncoder} visualizes the encoder for the setup defined above. 

Each node in Figure~\ref{fig:clDecoder} is responsible for computing a likelihood ratio (LR) arising during the algorithm; the parameters below each node represent the variables involved in the associated LR computation. Starting from the left we traverse the diagram to the right at whose border we can compute the LRs. Then we transmit the results back to the left. Here $\hat{t}_j^{(i)}$ denotes the $j$th output of the $i$th decompressor at the outer layer and $s_j^{(i)}$ denotes the $j$th frozen input for the $i$th inner encoding block which has been sent to the decoder as explained above.

The decoding begins by activating node $1$, which would like to compute the LR for $T_1^{(1)}$ given $Y_1^8$. For this it needs the LRs for the first inputs to the two super-channels, and so node $1$ activates node $2$, which is responsible for computing the LR for the first input to the first super-channel. This computation proceeds exactly as the usual successive cancellation decoder, recursively combining the LRs of the physical channels by calling node $3$ and then $6$. Assembling their results, node $2$ can compute its LR and transmits the result to nodes $1$ and $16$. Meanwhile, node $1$ has also requested the LR of node $9$, which performs the same calculation as node $2$ for the second super-channel, again forwards the result to nodes $1$ and $16$.  
Now node $1$ is able compute the final desired LR and can therefore guess $\hat{t}_1^{(1)}$. Node $16$ next guesses $\hat{t}_2^{(1)}$, which is easy since this is a frozen bit and therefore available at the decoder (i.e.,\ the decoder sets $\hat t_2^{(1)}=t_2^{(1)}$), completing first decompressor of the outer layer. 

Node $16$ passes control to node $17$ in order to compute the LR for $T^{(2)}_1$.  This requires the LR for second inputs to the two super-channels, so nodes $18$ (and later $21$) are called. Node $18$ finishes the decompression of the first super-channel in the usual way, while node $21$ completes the decompression of the second super-channel. \emph{Neither of these can occur until the first outer layer decompressor is finished}. After the inner layer decompression is complete, node $17$ can guess $\hat{t}_1^{(2)}$ and node $24$ can finally guess $\hat{t}_2^{(2)}$, completing the second decompressor of the outer layer. 
In general, decompression of the $M$ different $k$th inputs at the inner layer has to wait for the $(k-1)$th decompressor to finish at the outer layer. 

\begin{figure}[!htb]
\hspace{-2mm}
\scalebox{0.8}{
\def\gapx{1.7}
\def\gapy{1.2}
\def\gaph{0.3}
\def\gaps{0.3}
\def\num{0.25}
\def\la{0.05}

\begin{tikzpicture}[gr/.style={gray!75!black},x=\gapx cm,y=-\gapy cm]

\foreach \x in {1,2} {
  \foreach \y in {1,...,8} {
     \fill (\x+2,\y) circle (1.75pt);
     }
}
\foreach \x in {1,2} {
  \foreach \y in {1,...,2} {
     \fill (2*\x-2,4*\y-3) circle (1.75pt);
     \fill (\x,4*\y) circle (1.75pt);
      }
}

\foreach \x in {1,...,2} {
   \draw[] (3,\x) -- (4,\x+2) -- (3,\x+2) -- (4,\x) -- (3,\x);
   \draw[] (3,\x+4) -- (4,\x+2+4) -- (3,\x+2+4) -- (4,\x+4) -- (3,\x+4);

   }


\draw[] (0,1) -- (3,1);
\draw[] (2,1) -- (3,2);
\draw[] (0,1) -- (2,5);
\draw[] (2,5) -- (3,6);
\draw[] (0,5) -- (3,5);
\draw[] (2,1) -- (0,5);
\draw[] (1,4) -- (3,4);
\draw[] (2,4) -- (3,3);
\draw[] (1,4) -- (2,8);
\draw[] (2,8) -- (3,7);
\draw[] (1,8) -- (3,8);
\draw[] (1,8) -- (2,4);

\node[anchor=south] at (0,1) {1};
\node[anchor=south] at (2,1) {2};
\node[anchor=south] at (3,1) {3};
\node[anchor=south] at (4,1) {4};
\node[anchor=south] at (4+0.5*\la,3) {5};
\node[anchor=south] at (3,2) {6};
\node[anchor=south] at (4,2) {7};
\node[anchor=south] at (4,4) {8};
\node[anchor=south] at (2+0.5*\la,5) {9};
\node[anchor=south] at (3,5) {10};
\node[anchor=south] at (4,5) {11};
\node[anchor=south] at (4+\la,7) {12};
\node[anchor=south] at (3,6) {13};
\node[anchor=south] at (4,6) {14};
\node[anchor=south] at (4+\la,8) {15};
\node[anchor=south] at (5*\la,5) {16};
\node[anchor=south] at (1+3*\la,4) {17};
\node[anchor=south] at (2,4) {18};
\node[anchor=south] at (3-\la,3) {19};
\node[anchor=south] at (3-\la,4) {20};
\node[anchor=south] at (2-3.5*\la,8) {21};
\node[anchor=south] at (3-1.25*\la,7) {22};
\node[anchor=south] at (3-1.23*\la,8) {23};
\node[anchor=south] at (1+3.75*\la,8) {24};

\foreach \x in {1,...,8} {
\node[anchor=west] at (4,\x) {\small $y_\x$};
}

\node[anchor=north] at (3-3*\la,1) {\small $y_1,y_3$};
\node[anchor=north] at (3-3*\la,2) {\small $y_2,y_4$};
\node[anchor=north] at (2,1) {\small $y_1^4$};
\node[anchor=north] at (-\la,1) {\small $y_1^8$};
\node[anchor=north] at (2,5) {\small $y_5^8$};
\node[anchor=north] at (3-3.25*\la,5) {\small $y_5,y_7$};
\node[anchor=north] at (3-3*\la,6) {\small $y_6,y_8$};
\node[anchor=north] at (3,3) {\small $\triangleright$};
\node[anchor=north] at (3+5*\la,4) {\small $y_2,y_4,s^{(1)}_1$};
\node[anchor=north] at (2+\la,4) {\small $\triangleleft$};
\node[anchor=north] at (1-5.75*\la,4) {\small $y_1^8,\hat{t}_1^{(1)2}$};
\node[anchor=north] at (1-5*\la,8) {\small $y_1^8,\hat{t}_1^{(1)2},\hat{t}_1^{(2)}$};
\node[anchor=north] at (3,7) {\small $\diamond$};
\node[anchor=north] at (3+5*\la,8) {\small $y_6,y_8,s_1^{(2)}$};
\node[anchor=north] at (2,8) {\small $y_5^8,s_1^{(2)2},\hat{t}_1^{(1)2}$};


\node[] at (0,6.1) {\small $\triangleright = y_1,y_3,\hat{t}^{(1)2}_1,s^{(1)}_1$};
\node[] at (-0.1,6.5) {\small $\triangleleft = y_1^4,s^{(1)2}_1,\hat{t}^{(1)2}_1$};
\node[] at (-0.05,6.9) {\small $\diamond = y_5,y_7,s_1^{(2)},\hat{t}_2^{(1)}$};

%
%

\end{tikzpicture}}
\caption{\small Logical structure of the successive cancellation decoder for the setup $L=4$, $M=2$, $\mathcal{I}=\{2,3\}$, $K=2$ and $\mathcal{O}=\{2\}$ (compare with~\cite[Fig.\ 10]{arikan09}). Note that $\hat{t}_j^{(i)}$ denotes the $j$th output of the $i$th decompressor at the outer layer and $s_j^{(i)}$ denotes the $j$th frozen bit at the $i$th inner encoding block. The numbering of the nodes represents the order in which they get activated in the decoding process.}
\label{fig:clDecoder}
\end{figure}
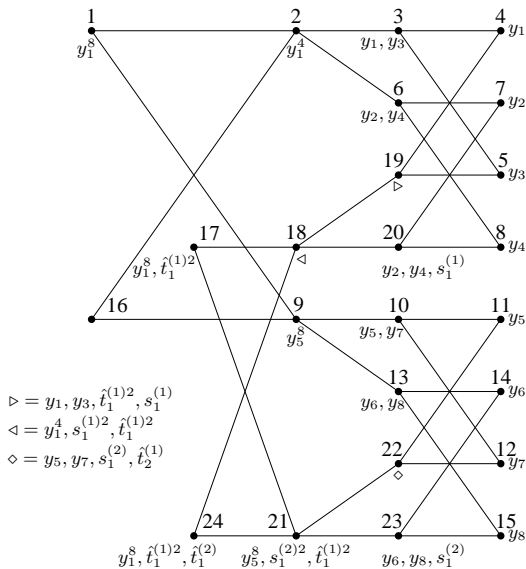

\begin{mylem} \label{lem:ApDecoder}
The classical decoder explained above has complexity $O(N\log N)$.  
\end{mylem}

\begin{IEEEproof}
Let $V_i$ denote the $i$th non-frozen output of an inner encoding block. The decoder proceeds by employing, in sequence, the $K$ decompressors for blocklength-$M$ compression of $V_i$ given $Y^LV^{i-1}$. This ensures that at all times the decoder has all the required previous inputs $V^{i-1}$. 
Each decompressor can be executed using $O(M\log M)$ operations, given the corresponding likelihood ratio of $V_i|Y^LV^{i-1}$. All such likelihoods can be computed in $O(L\log L)$ steps, and each of the $M$ inner encoding blocks requires its own likelihood calculation, as the values taken by $V^{i-1}$ can differ in each case. Using $K=O(L)$ which follows from the polarization phenomenon (cf.\ Theorem~\ref{thm:polarizationPhenomenon}), we find that the decoder has complexity $O(N\log N)$.  
\end{IEEEproof}

We next analyze the reliability of the multilevel encoder and decoder explained above. Suppose we
would like to compress ($L$ instances of) $(V_1,\dots,V_n)$ relative
to side information $Y$, by sequentially compressing $V_i$ relative to
$V^{i-1}Y$. Define $\hat{V}_i$ to be the output of the decompressor,
let $\mathcal{A}_i$ be the event that $\hat{V}_i \neq V_i$ (i.e.,\ that
the decompressor makes a mistake at position $i$), and let
$\mathcal{B}_i:= \cup_{k=1}^i \mathcal{A}_k$. Note that
$\Prv{\mathcal{B}_n}$ is the probability of incorrectly decoding at
least one $V_i$ for $i \in \left[n\right]$. Let $r$ be a bound on the
probability of that we decode incorrectly at any step and that the
previous steps are all correct: $\Prv{\mathcal{A}_j \cap
  \mathcal{B}^c_{j-1}}\leq r$ for all $j\in[n]$. Then
\begin{mylem} \label{lem: newerrLem}
For $n\in \mathbb{Z}^{+}$ and $r$ as defined above, we have
 \begin{equation}
  \Prv{\mathcal{B}_n} \leq nr.
 \end{equation}
\end{mylem}
\begin{IEEEproof}
 The proof proceeds by induction over $n$; the case $n=1$ holds by assumption. The induction step is as follows:
\begin{align}
 \Prv{\mathcal{B}_{n+1}} &= \Prv{\mathcal{B}_n \cup \mathcal{A}_{n+1}}\\
 &= \Prv{\mathcal{B}_n} + \Prv{\mathcal{A}_{n+1} \cap \mathcal{B}_n^c}\\
 &\leq \Prv{\mathcal{B}_n} + r \label{eq: comment1}\\
 &\leq (n+1)r. \label{eq: comment2}
\end{align}
where \eqref{eq: comment1} follows by assumption and \eqref{eq: comment2} uses the induction hypothesis.
\end{IEEEproof}

Now the reliability statement follows easily. 

\begin{mylem}
\label{lem:reliabilityCSC}
The error probability of the encoder and decoder introduced above satisfies $P_{\rm err}=O(L\,2^{-M^\beta})$ for any $\beta<\frac12$.
 \end{mylem}

\begin{IEEEproof}
	For the polar source coding scheme, note that $\Prv{\mathcal{A}_i \cap \mathcal{B}_{i-1}^c} + x \in O(2^{-M^\beta})$, where $x$ is the probability that $\hat{V}_i \neq V_i$ given that a mistake previously occurred, but where we still give the correct $V^{i-1}$ to the decompressor. We can therefore upper bound $r$ in Lemma~\ref{lem: newerrLem} by $O(2^{-M^\beta})$~\cite{arikan10}.   Thus, the probability of incorrectly decoding any of the $K$ $V_i$ is $O(L 2^{-M^\beta})$.
\end{IEEEproof}

We note that for the phase reconciliation task we cannot directly use that decoder since we have measured some qubits at the inner layer with respect to the amplitude basis (the qubits belonging to the system $\bar A^{\setC}$), which implies that we do not have knowledge about the phase-basis-measurements of these qubits. In the following we present two different approaches to resolve this problem. We first show that the phase-basis measurements we do not know can be chosen at random without affecting the decoder's reliability. Alternatively, we show how to adapt the classical decoding algorithm introduced in \cite{suttershort} such that it does not require the above mentioned phase-basis measurements and still remains computationally efficient. 

\prlsection{Choosing Measured Qubits at Inner Layer at Random} \label{ap:randomchoose}
We show that randomly choosing the qubits that have been measured with respect to the amplitude basis at the inner layer does not affect the reliability of the outer IR decoder for symmetric sources. This implies that the computationally efficient decoder introduced in \cite{suttershort} and explained above can be used to reliably decode the phase IR.

Recall that the state
\begin{equation}
\ket{\psi}^{ABCR} = \sum_{z} \sqrt{p_z} \ket{z}^A \ket{z}^{C} \ket{\varphi_z}^{BR}
\end{equation}
produced by the source is relevant for the phase IR task. Taking $L$ copies and applying the polarization transform $G_L$ to $A^L$ and $C^L$ gives
\begin{align}
&\ket{\Psi}^{A^L B^L C^L R^L} \nonumber \\
&= \sum_{z^L \in \{0,1 \}^L} \sqrt{p_{z^L}} \ket{G_L z^L}^{A^L} \ket{G_L z^L}^{C^L} \ket{\varphi_{z^L}}^{B^L R^L}. 
\end{align}
Expressing $A^L$ in the conjugate phase basis gives
\begin{align}
&\ket{\Psi}^{A^L B^L C^L R^L} \nonumber \\
&\hspace{5mm}= \frac{1}{\sqrt{2^L}} \sum_{x^L,z^L \in \{0,1 \}^L} \sqrt{p_{z^L}} (-1)^{x^L \cdot G_L z^L} \ket{\tilde x^L}^{A^L} \nonumber \\
&\hspace{43mm} \ket{G_L z^L}^{C^L} \ket{\varphi_{z^L}}^{B^L R^L}\\
&\hspace{5mm}=\frac{1}{\sqrt{2^L}} \sum_{x^L \in \{0,1 \}^L} \ket{\tilde x^L}^{A^L} \left(Z^{x^L} \right)^{C^L} \ket{\xi}^{C^L B^L R^L},
\end{align}
where 
\begin{equation}
\ket{\xi}^{C^L B^L R^L}\!=\! \!\!\!\sum_{z^L \in \{0,1\}^L} \!\!\! \sqrt{p_{z^L}} \ket{G_L z^L}^{C^L} \ket{\varphi_{z^L}}^{B^L R^L}.
\end{equation}
The state of $B^L C^L$ conditioned on the value $x^L$ on $A^L$ is just $(Z^{x^L})^{C^L} \xi^{C^L B^L} (Z^{x^L})^{C^L}$. 
The source is symmetric because the conditional states are all related by unitary action, here on the $C^L$ systems.

 Dividing $A^L$ and $C^L$ into two systems $\bar A$ and $\bar A^{\setC}$ respectively $\bar C$ and $\bar C^{\setC}$ enables us to write the marginal state of $B^L C^L$ conditioned on only $\bar x$ as
\begin{equation}
\Theta_{\bar x}^{B^L C^L}= \frac{1}{2^{\left | \bar A^{\setC} \right|}}\!\!\! \sum_{\bar x^{\setC} \in \{0,1\}^{\left| \bar A^{\setC} \right|}}\!\!\! \left(Z^{\bar x}\right)^{\bar C}\!\! \left(Z^{\bar x^{\setC}}\right)^{\bar C^{\setC}}\!\! \xi^{C}\!\!  \left(Z^{\bar x}\right)^{\bar C}\! \left(Z^{\bar x^{\setC}}\right)^{\bar C^{\setC}}\!\!.
\end{equation}
Recall that we would like to use the recursive likelihood formulas in the successive cancellation decoder to determine the likelihoods of $\bar x_i$ given $\bar x^{i-1} B^L C^L$. Note that for this likelihoods only the state $\ket{\Psi}^{\bar A B^L C^L}$ is relevant---$\bar A^{\setC}$ has been discarded.

We can mimic this having this state in the standard polar coding setup as follows. Given $\ket{\Psi}^{A^L B^L C^L R^L}$, applying a CPTP map to $C^L$ that randomly performs a $Z$ operation on each of the qubits in $\bar A^{\setC}$ yields a state whose $B^L C^L$ marginal given $X^{\bar A}$ is the same as $\Theta_{\bar x}^{B^L C^L}$. Moreover, the bits $\bar x^{\setC}$ are now uncorrelated with $B^L C^L$ and $\bar x$, which can be seen from direct calculation:
$Z^{\bar x^{\setC}}$ acting on $\bar C^{\setC}$ goes to $Z^{u+\bar x^{\setC}}$ for a random $u$, so the $\bar x^{\setC}$ dependence is eliminated.

 Now proceed with the usual recursive likelihood calculation. Nominally, these likelihood (ratios) are functions of the bits $\bar x^{\setC}$ . However, since these are independent of everything else, the ratios computed by the recursion formulas using $\bar x^{\setC}$ have the same values as the ratios we are looking for. The dependence on on $\bar x^{\setC}$ is irrelevant. Moreover, the decoder may as well choose $\bar x^{\setC}$ himself.
 

\prlsection{Neglecting Measured Qubits at Inner Layer} \label{ap:neglect}
We show how to efficiently compute the likelihood ratios at the outer layer without using the qubits measured with respect to the amplitude basis at the inner layer. Let $\mathcal{F}\subseteq [N]$ denote the indices of the qubits measured at the inner layer, i.e.,\ the qubits belonging to the $M$ systems $\bar A^{\setC}$ and let $\mathcal{F}^{\setC}:=[N]\backslash \mathcal{F}$. For each $iÊ\subseteq [N]$ we define $\mathcal{F}_{(i)}:=[i]\cap \mathcal{F}$ and $\mathcal{F}_{(i)}^{\setC}:=[i] \cap \mathcal{F}^{\setC}$. 

In order to execute Ar{\i}kan's classical SC-decoder at the outer layer, we need to compute for each $i \in [N]$ such that $X^{A^N}_i \in \mathcal{F}^{\setC}$ the following likelihood ratio
\begin{align}
&L^{(i)}(b^N,\hat x[\mathcal{F}_{(i)}^{\setC}]):= \nonumber \\
&\hspace{3mm}\frac{\Prvcond{X^{A^N}_i=0}{B^N=b^N, \bigcap \limits_{j \in \mathcal{F}_{(i)}^{\setC}} \left \lbrace X^{A^N}_j = \hat x_j \right \rbrace}}{\Prvcond{X^{A^N}_i=1}{B^N=b^N, \bigcap \limits_{j \in \mathcal{F}_{(i)}^{\setC}} \left \lbrace X^{A^N}_j = \hat x_j \right \rbrace}}. \label{eq:LRcomplex}
\end{align}
However, since in \eqref{eq:LRcomplex} the qubits measured with respect to the amplitude basis at the inner layer are missing in the conditioning, it is not straightforward how to compute this likelihood ratio efficiently.\footnote{Ar{\i}kan's recursive formula cannot be applied directly.} Using Bayes' theorem we can write
\begin{align}
&\Prvcond{X^{A^N}_i=0}{B^N=b^N, \bigcap \limits_{j \in \mathcal{F}_{(i)}^{\setC}} \left \lbrace X^{A^N}_j = \hat x_j \right \rbrace} \nonumber \\
&= \sum_{\hat x_k \in \{0,1\}\, \forall k \in \mathcal{F}_{(i)}} \mathrm{Pr}\left[X^{A^N}_i=0, \bigcap \limits_{k \in \mathcal{F}_{(i)}} \left \lbrace X^{A^N}_k=\hat x_k \right \rbrace \bigg| \right. \nonumber \\
&\hspace{25mm} \left. B^N=b^N, \bigcap \limits_{j \in \mathcal{F}_{(i)}^{\setC}} \left \lbrace X^{A^N}_j = \hat x_j \right \rbrace \right]\\
&= \left( \frac{1}{2} \right)^{\left| \mathcal{F}_{(i)} \right|} \!\!\!\!\!\!\!\!\!\!\!\! \sum_{\hat x_k \in \{0,1\}\, \forall k \in \mathcal{F}_{(i)}} \!\!\!\!\!\!\! \mathrm{Pr}\left[X^{A^N}_i=0, \bigcap \limits_{k \in \mathcal{F}_{(i)}} \left \lbrace X^{A^N}_k=\hat x_k \right \rbrace \big|  \right. \nonumber \\
&\hspace{8mm} \left. B^N=b^N, \bigcap \limits_{j \in \mathcal{F}_{(i)}^{\setC}} \left \lbrace X^{A^N}_j = \hat x_j \right \rbrace \right] /  \nonumber \\
&\hspace{35mm}\Prv{ \bigcap \limits_{k \in \mathcal{F}_{(i)}} \left \lbrace X^{A^N}_k=\hat x_k \right \rbrace} \label{eq:stepunif}\\
&= \left( \frac{1}{2} \right)^{\left| \mathcal{F}_{(i)} \right|} \sum_{\hat x_k \in \{0,1\}\, \forall k \in \mathcal{F}_{(i)}} \mathrm{Pr}\left[X^{A^N}_i=0\big| B^N=b^N, \right. \nonumber \\
&\hspace{45mm} \left.\bigcap \limits_{j=1}^{i-1} \left \lbrace X^{A^N}_j = \hat x_j \right \rbrace \right], \label{eq:LRArikanForm}
\end{align}
where \eqref{eq:stepunif} uses that the random variables $X_j^{A^N}$ for $j \in \mathcal{F}_{j}$ are independent uniformly distributed, since they have been measured in the complementary amplitude basis. The elements of the final sum \eqref{eq:LRArikanForm} can be computed efficiently using Ar{\i}kan's recursive formula \cite{arikan09}. Since the elements of the sum in \eqref{eq:LRArikanForm} are bounded, we can sample the sum obtaining a (reasonably) good approximation of the true value which can be done efficiently.

We are now ready to state the main result of this section---that the distillation scheme introduced in Section~\ref{sec:ed} for Pauli or erasure channels is efficiently decodable when using quantum polar codes.
\begin{mythm} \label{thm:Bob}
The decoding of the distillation scheme can be done with $O(N \log N)$ steps.
\end{mythm}
\begin{IEEEproof}
According to \cite{arikan09}, the $M$ amplitude recovery blocks together have a complexity of $O(M L \log L)$. Using Lemma~\ref{lem:ApDecoder} it follows that the phase correction can be done in $O(N \log N)$. Since $N=ML$, this proves the assertion.
\end{IEEEproof}

When using quantum polar codes for Pauli or erasure channels we can derive explicit expressions for $\epsilon_1$ and $\epsilon_2$ and hence make a precise statement about the reliability of the distillation scheme.
\begin{mycor} \label{cor:quality}
The reliability of the distillation scheme scheme is as given in Proposition~\ref{prop:quality} with $\epsilon_1=O(2^{-L^{\beta}})$ and $\epsilon_2=O(L 2^{-M^{\beta'}})$ for any $\beta,\beta'<\tfrac{1}{2}$.
\end{mycor}
\begin{IEEEproof}
Let $p_{\rm err}(\mathcal{D}_{A})$ denote the error probability of the decoding operation $\mathcal{D}_A$ and $p_{\rm err}(\mathcal{D}_{P})$ the error probability of the decoding process $\mathcal{D}_{P}$. The error probability for all decoding operations with respect to the amplitude basis is denoted by $p_{\rm err}(\mathcal{D}_{A^M})$. According to Proposition~\ref{prop:quality}, the trace distance between the scheme's output and a maximally entangled state of appropriate dimension is less than $\sqrt{2 p_{\rm err}(\mathcal{D}_{A^M}) } + \sqrt{2 p_{\rm err}(\mathcal{D}_P)}$. Using the union bound we obtain $p_{\rm err}(\mathcal{D}_{A^M}) \leq M p_{\rm err}(\mathcal{D}_{A})$. Since we use the standard polar decoder \cite{arikan09} for the amplitude error-correction, we have 
\begin{equation}
p_{\rm err}\left(\mathcal{D}_{A}\right)=O\left(2^{-L^{\beta}}\right) \quad \textnormal{for} \quad \beta<\frac{1}{2}.
\end{equation}
Furthermore, according to Lemma~\ref{lem:reliabilityCSC},
\begin{equation}
p_{\rm err}\left(\mathcal{D}_{P}\right)=O\left(L 2^{-M^{\beta'}}\right) \textnormal{ for any }\beta'<\frac{1}{2},
\end{equation}
which proves the assertion.
\end{IEEEproof}


\section{Channel Coding} \label{sec:channelcoding}
Bennett \emph{et al.}\ \cite{bennett_mixed-state_1996} showed that any entanglement distillation scheme can be turned into a channel coding scheme, which however is not known to be computationally efficient---even if the entanglement distillation protocol we started with is efficient. In this section, we show how to modify the entanglement distillation scheme introduced in Section~\ref{sec:ed} such that it can be used for efficient channel coding. The resulting coding scheme is depicted schematically in Figure~\ref{fig:channelview}. 
Before applying the actual encoding transformation, the outer encoder adds redundancy in form of random qubits which are sent to the decoder. As explained in the entanglement distillation scheme in the previous section, we know that after the inner layer the state is perfectly known with respect to the amplitude basis. Therefore, we can choose the additional qubits at random in the complementary phase basis.

The inner encoder also adds redundancy. The additional qubits are generated as explained in \cite[Section II]{suttershort} and sent to the decoder, before applying the actual encoding transform. The encoded data is then transmitted over $N$ identical channels $\mathcal{N}$. The decoding is identical to Bob's task in the entanglement distillation scenario, explained in Section~\ref{sec:ed}.

For the code construction, the set of frozen qubits (the indices which determine at which position the redundant qubits are added) at the outer and inner layer have to be determined. This can be done efficiently for the inner layer as explained in Section~\ref{sec:effQPC}. The existence of an efficient algorithm for the outer layer remains an open question. 
\begin{figure}[!htb]
\hspace{-21mm}
\scalebox{0.8}{
\def \xcomp{1.0}
\def \ycomp{5}
\def \xblock{2.93}
\def \yblock{3}

\def \yPA{1.75}

\def \xgapd{1.5} 
\def \ygapd{0.9} 

\def \ybs{-0.3} 

\def \ys{0.1} 

\def \xdec{0.9} 
\def \w{0.526} 

\def \xs{0.4} 

\def \s{2.8} 

\def \c{0.4} 

\def \a{0.1} 

\def \la{0.2} 

\def \xdec{1}

\def \xdash{0.2} 
\def \ydash{0.2} 

\def \xstart{0.8}

\begin{tikzpicture}[scale=1,auto, node distance=1cm,>=latex']
	
     \draw [draw] (\xgapd+\xcomp,-2.5*\w-2*\ys+0.5*\w+0.5*\ys) -- (2*\xcomp+\xgapd,-2.5*\w-2*\ys+0.5*\w+0.5*\ys); 
     \draw [draw] (\xgapd+\xcomp,-2.5*\w-2*\ys+0.5*\w+0.5*\ys) -- (\xgapd+\xcomp,-\yPA-2*\ybs-\s);
     \draw [draw] (\xgapd+\xcomp,-\yPA-2*\ybs-\s) -- (2*\xcomp+\xgapd,-\yPA-2*\ybs-\s);
     \draw [draw] (2*\xcomp+\xgapd,-2.5*\w-2*\ys+0.5*\w+0.5*\ys) -- (2*\xcomp+\xgapd,-\yPA-2*\ybs-\s);     
    \node at (1.5*\xcomp+\xgapd,-1.32*\w-0.99*\ys-0.33*\yPA-0.66*\ybs-0.33*\s) {outer};
    \node at (1.5*\xcomp+\xgapd,-0.66*\w-0.495*\ys-0.66*\yPA-1.32*\ybs-0.66*\s) {$\enc$};

  \draw [draw] (2*\xcomp+2*\xgapd,0) -- (2*\xcomp+2*\xgapd+\xcomp,0);  
  \draw [draw] (2*\xcomp+2*\xgapd,-\yblock-2*\ybs) -- (2*\xcomp+2*\xgapd+\xcomp,-\yblock-2*\ybs);
  \draw [draw] (2*\xcomp+2*\xgapd,0) -- (2*\xcomp+2*\xgapd,-\yblock-2*\ybs);  
 \draw [draw] (2*\xcomp+2*\xgapd+\xcomp,-\yblock-2*\ybs) -- (2*\xcomp+2*\xgapd+\xcomp,0);
     \node [] at (2*\xcomp+2*\xgapd+0.5*\xcomp,-0.33*\yblock-0.66*\ybs) {inner};
     \node [] at (2*\xcomp+2*\xgapd+0.5*\xcomp,-0.66*\yblock-1.32*\ybs) {$\enc$};
     
    \draw [draw] (3*\xcomp+3*\xgapd,0) -- (3*\xcomp+3*\xgapd+\w,0);  
    \draw [draw] (3*\xcomp+3*\xgapd,-\w) -- (3*\xcomp+3*\xgapd+\w,-\w);
    \draw [draw] (3*\xcomp+3*\xgapd,0) -- (3*\xcomp+3*\xgapd,-\w);
    \draw [draw] (3*\xcomp+3*\xgapd+\w,0) -- (3*\xcomp+3*\xgapd+\w,-\w);
   \node at (3*\xcomp+3*\xgapd+0.5*\w,-0.5*\w) {$\mathcal{N}$};
   
    \draw [draw] (3*\xcomp+3*\xgapd,-\ys-\w) -- (3*\xcomp+3*\xgapd+\w,-\ys-\w);  
    \draw [draw] (3*\xcomp+3*\xgapd,-\w-\ys-\w) -- (3*\xcomp+3*\xgapd+\w,-2*\w-\ys);
    \draw [draw] (3*\xcomp+3*\xgapd,-\ys-\w) -- (3*\xcomp+3*\xgapd,-2*\w-\ys);
    \draw [draw] (3*\xcomp+3*\xgapd+\w,-\ys-\w) -- (3*\xcomp+3*\xgapd+\w,-2*\w-\ys);
   \node at (3*\xcomp+3*\xgapd+0.5*\w,-1.5*\w-\ys) {$\mathcal{N}$};
   
    \draw [draw] (3*\xcomp+3*\xgapd,-2*\ys-2*\w) -- (3*\xcomp+3*\xgapd+\w,-2*\ys-2*\w);  
    \draw [draw] (3*\xcomp+3*\xgapd,-3*\w-2*\ys) -- (3*\xcomp+3*\xgapd+\w,-3*\w-2*\ys);
    \draw [draw] (3*\xcomp+3*\xgapd,-2*\ys-2*\w) -- (3*\xcomp+3*\xgapd,-3*\w-2*\ys);
    \draw [draw] (3*\xcomp+3*\xgapd+\w,-2*\ys-2*\w) -- (3*\xcomp+3*\xgapd+\w,-3*\w-2*\ys);
   \node at (3*\xcomp+3*\xgapd+0.5*\w,-2.5*\w-2*\ys) {$\mathcal{N}$};
   
    \draw [draw] (3*\xcomp+3*\xgapd,-3*\ys-3*\w) -- (3*\xcomp+3*\xgapd+\w,-3*\ys-3*\w);  
    \draw [draw] (3*\xcomp+3*\xgapd,-4*\w-3*\ys) -- (3*\xcomp+3*\xgapd+\w,-4*\w-3*\ys);
    \draw [draw] (3*\xcomp+3*\xgapd,-3*\ys-3*\w) -- (3*\xcomp+3*\xgapd,-4*\w-3*\ys);
    \draw [draw] (3*\xcomp+3*\xgapd+\w,-3*\ys-3*\w) -- (3*\xcomp+3*\xgapd+\w,-4*\w-3*\ys);
   \node at (3*\xcomp+3*\xgapd+0.5*\w,-3.5*\w-3*\ys) {$\mathcal{N}$};

  \draw [draw] (2*\xcomp+2*\xgapd,-\s) -- (2*\xcomp+2*\xgapd+\xcomp,-\s);  
  \draw [draw] (2*\xcomp+2*\xgapd,-\yblock-2*\ybs-\s) -- (2*\xcomp+2*\xgapd+\xcomp,-\yblock-2*\ybs-\s);
  \draw [draw] (2*\xcomp+2*\xgapd,-\s) -- (2*\xcomp+2*\xgapd,-\yblock-2*\ybs-\s);  
 \draw [draw] (2*\xcomp+2*\xgapd+\xcomp,-\yblock-2*\ybs-\s) -- (2*\xcomp+2*\xgapd+\xcomp,-\s);
     \node [] at (2*\xcomp+2*\xgapd+0.5*\xcomp,-0.33*\yblock-0.66*\ybs-\s) {inner};
     \node [] at (2*\xcomp+2*\xgapd+0.5*\xcomp,-0.66*\yblock-1.32*\ybs-\s) {$\enc$};
          
    \draw [draw] (3*\xcomp+3*\xgapd,-\s) -- (3*\xcomp+3*\xgapd+\w,-\s);  
    \draw [draw] (3*\xcomp+3*\xgapd,-\w-\s) -- (3*\xcomp+3*\xgapd+\w,-\w-\s);
    \draw [draw] (3*\xcomp+3*\xgapd,-\s) -- (3*\xcomp+3*\xgapd,-\w-\s);
    \draw [draw] (3*\xcomp+3*\xgapd+\w,-\s) -- (3*\xcomp+3*\xgapd+\w,-\w-\s);
   \node at (3*\xcomp+3*\xgapd+0.5*\w,-0.5*\w-\s) {$\mathcal{N}$};
   
    \draw [draw] (3*\xcomp+3*\xgapd,-\ys-\w-\s) -- (3*\xcomp+3*\xgapd+\w,-\ys-\w-\s);  
    \draw [draw] (3*\xcomp+3*\xgapd,-\w-\ys-\w-\s) -- (3*\xcomp+3*\xgapd+\w,-2*\w-\ys-\s);
    \draw [draw] (3*\xcomp+3*\xgapd,-\ys-\w-\s) -- (3*\xcomp+3*\xgapd,-2*\w-\ys-\s);
    \draw [draw] (3*\xcomp+3*\xgapd+\w,-\ys-\w-\s) -- (3*\xcomp+3*\xgapd+\w,-2*\w-\ys-\s);
   \node at (3*\xcomp+3*\xgapd+0.5*\w,-1.5*\w-\ys-\s) {$\mathcal{N}$};
   
    \draw [draw] (3*\xcomp+3*\xgapd,-2*\ys-2*\w-\s) -- (3*\xcomp+3*\xgapd+\w,-2*\ys-2*\w-\s);  
    \draw [draw] (3*\xcomp+3*\xgapd,-3*\w-2*\ys-\s) -- (3*\xcomp+3*\xgapd+\w,-3*\w-2*\ys-\s);
    \draw [draw] (3*\xcomp+3*\xgapd,-2*\ys-2*\w-\s) -- (3*\xcomp+3*\xgapd,-3*\w-2*\ys-\s);
    \draw [draw] (3*\xcomp+3*\xgapd+\w,-2*\ys-2*\w-\s) -- (3*\xcomp+3*\xgapd+\w,-3*\w-2*\ys-\s);
   \node at (3*\xcomp+3*\xgapd+0.5*\w,-2.5*\w-2*\ys-\s) {$\mathcal{N}$};
   
    \draw [draw] (3*\xcomp+3*\xgapd,-3*\ys-3*\w-\s) -- (3*\xcomp+3*\xgapd+\w,-3*\ys-3*\w-\s);  
    \draw [draw] (3*\xcomp+3*\xgapd,-4*\w-3*\ys-\s) -- (3*\xcomp+3*\xgapd+\w,-4*\w-3*\ys-\s);
    \draw [draw] (3*\xcomp+3*\xgapd,-3*\ys-3*\w-\s) -- (3*\xcomp+3*\xgapd,-4*\w-3*\ys-\s);
    \draw [draw] (3*\xcomp+3*\xgapd+\w,-3*\ys-3*\w-\s) -- (3*\xcomp+3*\xgapd+\w,-4*\w-3*\ys-\s);
   \node at (3*\xcomp+3*\xgapd+0.5*\w,-3.5*\w-3*\ys-\s) {$\mathcal{N}$};

    \draw [draw] (3*\xcomp+4*\xgapd+\w,0) -- (3*\xcomp+4*\xgapd+\w+\xdec,0); 
    \draw [draw] (3*\xcomp+4*\xgapd+\w,-\yblock-2*\ybs-\s) -- (3*\xcomp+4*\xgapd+\w+\xdec,-\yblock-2*\ybs-\s); 
    \draw [draw] (3*\xcomp+4*\xgapd+\w,0) -- (3*\xcomp+4*\xgapd+\w,-\yblock-2*\ybs-\s);  
    \draw [draw] (3*\xcomp+4*\xgapd+\w+\xdec,-\yblock-2*\ybs-\s) -- (3*\xcomp+4*\xgapd+\w+\xdec,0);      
    \node at (3*\xcomp+4*\xgapd+\w+0.5*\xdec,-0.5*\yblock-1*\ybs-0.5*\s) {$\dec$};

            
    \draw [->] (2*\xcomp+\xgapd,-2.5*\w-2*\ys) -- (2*\xcomp+2*\xgapd,-2.5*\w-2*\ys);
    \draw [->] (2*\xcomp+\xgapd,-3.5*\w-3*\ys) -- (2*\xcomp+2*\xgapd,-3.5*\w-3*\ys);        
  

         \draw [->] (\xcomp+\xgapd-\xstart,-2*\w-1.5*\ys-0.5*\s) -- (\xcomp+\xgapd,-2*\w-1.5*\ys-0.5*\s);    
     

    \draw [->] (2*\xcomp+\xgapd,-0.5*\w-\s) -- (2*\xcomp+2*\xgapd,-0.5*\w-\s); 
    \draw [->] (2*\xcomp+\xgapd,-1.5*\w-\ys-\s) -- (2*\xcomp+2*\xgapd,-1.5*\w-\ys-\s);     

    \draw [->] (3*\xcomp+2*\xgapd,-0.5*\w) -- (3*\xcomp+3*\xgapd,-0.5*\w); 
    \draw [->] (3*\xcomp+3*\xgapd +\w,-0.5*\w) -- (3*\xcomp+4*\xgapd+\w,-0.5*\w);
    
    \draw [->] (3*\xcomp+2*\xgapd,-1.5*\w-\ys) -- (3*\xcomp+3*\xgapd,-1.5*\w-\ys); 
    \draw [->] (3*\xcomp+3*\xgapd +\w,-1.5*\w-\ys) -- (3*\xcomp+4*\xgapd+\w,-1.5*\w-\ys);
        
    \draw [->] (3*\xcomp+2*\xgapd,-2.5*\w-2*\ys) -- (3*\xcomp+3*\xgapd,-2.5*\w-2*\ys); 
    \draw [->] (3*\xcomp+3*\xgapd+\w,-2.5*\w-2*\ys) -- (3*\xcomp+4*\xgapd+\w,-2.5*\w-2*\ys);
     
    \draw [->] (3*\xcomp+2*\xgapd,-3.5*\w-3*\ys) -- (3*\xcomp+3*\xgapd,-3.5*\w-3*\ys);             
    \draw [->] (3*\xcomp+3*\xgapd+\w,-3.5*\w-3*\ys) -- (3*\xcomp+4*\xgapd+\w,-3.5*\w-3*\ys);
    \draw [->] (3*\xcomp+2*\xgapd,-0.5*\w-\s) -- (3*\xcomp+3*\xgapd,-0.5*\w-\s); 
    \draw [->] (3*\xcomp+3*\xgapd+\w,-0.5*\w-\s) -- (3*\xcomp+4*\xgapd+\w,-0.5*\w-\s); 
    
    \draw [->] (3*\xcomp+2*\xgapd,-1.5*\w-\ys-\s) -- (3*\xcomp+3*\xgapd,-1.5*\w-\ys-\s); 
    \draw [->] (3*\xcomp+3*\xgapd +\w,-1.5*\w-\ys-\s) -- (3*\xcomp+4*\xgapd+\w,-1.5*\w-\ys-\s);
        
    \draw [->] (3*\xcomp+2*\xgapd,-2.5*\w-2*\ys-\s) -- (3*\xcomp+3*\xgapd,-2.5*\w-2*\ys-\s); 
    \draw [->] (3*\xcomp+3*\xgapd+\w,-2.5*\w-2*\ys-\s) -- (3*\xcomp+4*\xgapd+\w,-2.5*\w-2*\ys-\s);
     
    \draw [->] (3*\xcomp+2*\xgapd,-3.5*\w-3*\ys-\s) -- (3*\xcomp+3*\xgapd,-3.5*\w-3*\ys-\s);             
    \draw [->] (3*\xcomp+3*\xgapd+\w,-3.5*\w-3*\ys-\s) -- (3*\xcomp+4*\xgapd+\w,-3.5*\w-3*\ys-\s);
    
     \draw [->] (3*\xcomp+4*\xgapd+\w+\xdec,-2*\w-1.5*\ys-0.5*\s) -- (3*\xcomp+4*\xgapd+\w+\xdec+\xstart,-2*\w-1.5*\ys-0.5*\s);

      \node at (3*\xcomp+4*\xgapd+\w+\xdec+\xstart+1.5*\la,-2*\w-1.5*\ys-0.5*\s) {$\ket{\hat \phi}$};          
      \node at (\xcomp+\xgapd-\xstart-1.5*\la,-2*\w-1.5*\ys-0.5*\s) {$\ket{\phi}$};

 \draw [] (2.5*\xcomp+2*\xgapd,0) -- (2.5*\xcomp+2*\xgapd,+3*\la); 
 \draw [] (2.5*\xcomp+2*\xgapd,3*\la) -- (3*\xcomp+4*\xgapd+\w+0.5*\xdec,3*\la); 
 \draw [->] (3*\xcomp+4*\xgapd+\w+0.5*\xdec,3*\la) -- (3*\xcomp+4*\xgapd+\w+0.5*\xdec,0); 
      \node at (2.5*\xcomp+2*\xgapd+5*\la,3*\la-1.5*\la) {$B_{C,1}$}; 
 
 \draw [] (2.5*\xcomp+2*\xgapd,-\yblock-2*\ybs-\s) -- (2.5*\xcomp+2*\xgapd,-\yblock-2*\ybs-\s-2*\la); 
 \draw [] (2.5*\xcomp+2*\xgapd,-\yblock-2*\ybs-\s-2*\la) -- (3*\xcomp+4*\xgapd+\w+0.25*\xdec,-\yblock-2*\ybs-\s-2*\la);  
 \draw [->] (3*\xcomp+4*\xgapd+\w+0.25*\xdec,-\yblock-2*\ybs-\s-2*\la) -- (3*\xcomp+4*\xgapd+\w+0.25*\xdec,-\yblock-2*\ybs-\s);  
      \node at (2.5*\xcomp+2*\xgapd+5*\la,-\yblock-2*\ybs-\s-2*\la+\la){$B_{C,2}$};                    

\draw [] (1.5*\xcomp+\xgapd,-\yPA-2*\ybs-\s) -- (1.5*\xcomp+\xgapd,-\yblock-2*\ybs-\s-3*\la);                    
 \draw [] (1.5*\xcomp+\xgapd,-\yblock-2*\ybs-\s-3*\la) -- (3*\xcomp+4*\xgapd+\w+0.75*\xdec,-\yblock-2*\ybs-\s-3*\la);  
  \draw [->] (3*\xcomp+4*\xgapd+\w+0.75*\xdec,-\yblock-2*\ybs-\s-3*\la) -- (3*\xcomp+4*\xgapd+\w+0.75*\xdec,-\yblock-2*\ybs-\s);  
\node at (1.5*\xcomp+\xgapd+3*\la,-\yblock-2*\ybs-\s-3*\la+\la) {$B_{D}$};              
\end{tikzpicture}}
\caption{\small The channel coding view of the scheme for $L=4$ and $M=2$. The outer encoder adds randomly generated qubits in the phase basis, the identities of which are forwarded to the decoder (over a classical channel), before applying the actual encoding transform. At the inner layer the encoder mimics the extra amplitude basis qubits as explained in Section~\ref{ap:clCC} (see also \cite[Section II]{suttershort}) and sends them to the decoder as well. Decoding is the same as  for the entanglement distillation setup explained in Section~\ref{sec:ed}.}
\label{fig:channelview}
\end{figure}
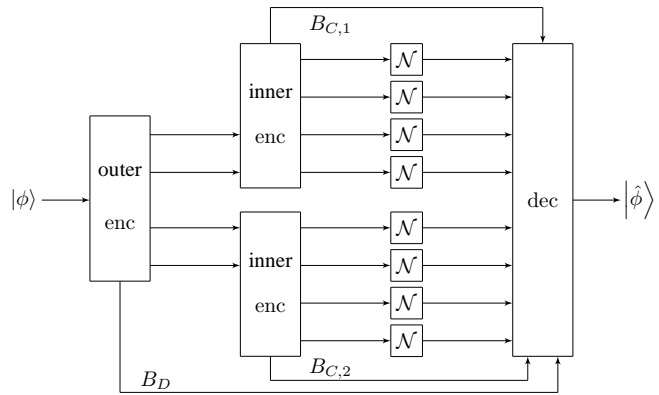

\subsection{Encoding}
We show that for Pauli and erasure channels together with the use of quantum polar codes an efficient encoder and decoder can be obtained. 

\begin{mycor}
For Pauli channels and the use of polar codes, there exists an encoder for the scheme described above that has $O(N \log N)$ complexity.
\end{mycor}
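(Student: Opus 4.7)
The plan is to follow the template of Proposition~\ref{prop:alice}, which bounds the complexity of the entanglement distillation encoder at $O(N\log N)$, and adapt it to the channel coding scheme by accounting for the two additional operations visible in Figure~\ref{fig:channelview}: at the outer layer, the encoder must sample random frozen qubits in the phase basis and forward their classical descriptions to the decoder; at the inner layer, the encoder must ``mimic'' extra amplitude-basis qubits, as in~\cite[Section II]{suttershort}, and transmit them classically before the polar transform is applied.

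First, I would reuse the analysis of Proposition~\ref{prop:alice} to bound the cost of the two polar transforms themselves. The inner polar transform $V^{A^L}$ is applied on each of the $M$ blocks at complexity $O(L\log L)$, for a total of $O(ML\log L)$. The outer phase-basis polar transform $V^{\bar A^M}$ acts on $MK$ qubits, where $K=|\bar A|$. By the polarization phenomenon (Theorem~\ref{thm:polarizationPhenomenon}) we have $K=O(L)$, so $MK=O(N)$ and the outer transform contributes $O(N\log N)$. Summing the two gives $O(N\log N)$ for the polar-transform portion.

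Next, I would handle the two new overheads. At the outer layer, generating random qubits in the phase basis at each frozen position and preparing the corresponding classical side information is linear in the number of frozen coordinates, hence $O(N)$. At the inner layer, the extra amplitude-basis qubits are mimicked using the procedure of~\cite[Section II]{suttershort}, which requires sampling from the conditional distributions produced by Ar{\i}kan's successive cancellation recursion on the classical source induced by the Pauli channel. For Pauli channels these conditional probabilities have exactly the same recursive structure as in the classical binary case, so the sampling can be performed at the same $O(L\log L)$ cost per block as the classical polar decoder, giving $O(ML\log L)=O(N\log L)$ in total. Combining all contributions yields $O(N\log N)$, as claimed.

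The only non-routine step is verifying that the inner-layer mimicking procedure for Pauli channels genuinely runs in $O(L\log L)$ per block, and I expect this to be the main obstacle in a careful write-up. The verification goes through because a Pauli channel acting on one half of a maximally entangled pair produces a Bell-diagonal state whose classical eigenvalue statistics are i.i.d.\ binary; the sampling required by the asymmetric polar encoder of~\cite{suttershort} is then precisely the successive cancellation recursion of Ar{\i}kan~\cite{arikan09}, so the complexity bound is inherited directly from the classical case.
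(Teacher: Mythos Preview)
Your proposal is correct and follows essentially the same line as the paper, just in considerably more detail. The paper's own proof is a one-liner: it declares the encoder equivalent to the classical concatenated source-coding encoder of Appendix~\ref{ap:clSC} (whose complexity is $O(N\log N)$ by Proposition~\ref{prop:ApEncode}), noting only that the outer frozen qubits are chosen at random and the inner ones via the shaper of~\cite[Section II]{suttershort}. You instead anchor on Proposition~\ref{prop:alice} and then separately bound the cost of generating the frozen qubits at each layer; this is the same decomposition carried out explicitly rather than by reference, and your verification that the inner-layer shaping runs in $O(L\log L)$ per block via the successive cancellation recursion is exactly the content that the paper leaves implicit.
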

\begin{IEEEproof}
The encoder is equivalent to the one introduced in Section~\ref{sec:effQPC}. Note that at the outer encoder the frozen qubits are chosen at random as justified above. The inner encoder choses its frozen qubits as explained in \cite[Section II]{suttershort}.
\end{IEEEproof}

In order to explain the decoding strategy we first present an efficient concatenated classical channel coding scheme based on polar codes
\subsection{Efficient Concatenated Classical Channel Coding} \label{ap:clCC}
In order to explain the decoding strategy we first present an efficient concatenated classical channel coding scheme based on polar codes. The scheme achieves the capacity and has been introduced in \cite{suttershort}. It will serve as a building block to prove the reliability and efficiency of the quantum channel coding scheme introduced in Section~\ref{sec:channelcoding} when using quantum polar codes for Pauli and erasure channels.

We consider a discrete memoryless channel $\W: \mathcal{X} \to \mathcal{Y}$ with a binary input alphabet $\mathcal{X}=\{0,1 \}$ and an arbitrary output alphabet $\mathcal{Y}$.
The idea for the efficient classical channel coding scheme is to run the encoder of the source coding scheme---introduced in Section~\ref{sec:effQPC}---in reverse and to use the same decoder. One main difficulty that occurs is that we need to simulate the frozen bits at the inner and outer layer. As we show next, by cleverly choosing the frozen bits at the inner layer (we call this in the following \emph{shaping}), we can approximate the source coding case, i.e., the sequence of $N$ i.i.d.\ correlated random variables $(X,Y)^N$ arbitrarily well for large $N$ (cf.\ Lemma~\ref{lem:inputoutput}). This enables us to prove that the efficient decoder introduced in Section~\ref{sec:effQPC} for source coding, can also be used for reliable channel coding. More details about this approach can be found in \cite{suttershort}.

\prlsection{Shaping} \label{ap:shaper}
The idea of how to approximate $(X,Y)^N$ and hence how to choose the frozen bits at the inner encoding block is to run an extractor for the distribution $(X,Y)^N$ we want to approximate in reverse.\footnote{We only need to show how to approximate $(X,Y)^L$ since taking $M$ identical blocks then leads to an approximation of $(X,Y)^N$.} A priori it is not clear this process can be done efficiently, however luckily we show that this is the case for extractors based on the source polarization phenomenon.

 A $K$-bit polarization extractor $\E_{L,K}$ for
$X^L$ simply outputs the $K$ bits of $U^L=X^LG_L$ for which
$H(U_i|U^{i-1})$ are greatest. We denote this (ordered) set of indices
by $\mathcal{E}_K$ and the output of the extractor by
$U^L[\mathcal{E}_K]$. 

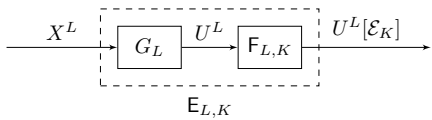
\begin{figure}[!htb]
\hspace{9mm}
\scalebox{0.8}{
\tikzstyle{block1} = [draw, rectangle, 
    minimum height=2em, minimum width=3em, node distance=2.5cm,anchor=center]
\def\gap{1}

\begin{tikzpicture}[auto,>=latex']
	
    \node (input) at (-3.5*\gap,0) {};
    \node [block1] (recon) at (-1*\gap,0) {$G_L$};
    \node [block1] (polar) at (1*\gap,0) {$\F_{L,K}$};
    \node (output) at (3.8*\gap,0) {};

    \draw [->,] (input) -- node {{$X^L$}} (recon); 
    \draw [->,] (recon) -- node {${U}^L$} (polar);
    \draw [->,] (polar) -- node {${U}^L[\mathcal{E}_K]$} (output);
   
    \node[draw,dashed,fit=(recon) (polar),inner sep=8pt] (shaper) {};
    \node[below of=shaper] {$\E_{L,K}$};
    
\end{tikzpicture}}
\caption{\small Polarization-based randomness extractor $\E_{L,K}$. The input $X^L$ is first transformed to $U^L$ via the polarization transformation $G_L$, and subsequently $\F_{L,K}$ filters out the $K$ bits of $U^L$ for which $H(U_i|U^{i-1})$ are greatest.}
\label{fig: Extraction}
\end{figure}
 The aim of randomness extraction is to output $K$ approximately uniform bits, where the approximation is quantified using the variational distance. Recall that for distributions $P$ and $Q$ over the same alphabet $\mathcal{X}$, the variational distance is defined by $\delta(P,Q):=\frac12\sum_{x\in\mathcal{X}}\left|P(x)-Q(x) \right| $. We will often abuse notation slightly and write a random variable instead of its distribution in $\delta$. 
 
Using $\mathcal{E}_K$ we define the shaper for $X^L$ as follows
\begin{mydef}
\label{def: shaper}
For $\mathcal{U}=\{0,1 \}$, the shaper  $\S_{K,L}$ for $X^L$ is the map $\S_{K,L}:\mathcal{U}^K\to\mathcal{X}^L$ taking input $U^K$ to  $\hat{X}^L=\hat{U}^LG_L$, with 
\begin{align}
\label{eq:shaperdef}
\hat{U}_i=\left\{
\begin{array}{ll} U_{{\rm pos}_{\mathcal{E}_K}(i)}& i\in\mathcal{E}_K\\ Z_i & {\rm else}
\end{array}\right. ,
\end{align}
where ${\rm pos}_{\mathcal{A}}(a)$ denotes the position of the entry $a$ in $\mathcal{A}$.
Here $Z_i$ is a random variable generated from the distribution of $U_i|U^{i-1}$, using $U^L=X^LG_L$. 
\end{mydef}

\begin{figure}[!htb]
\hspace{9mm}
\scalebox{0.8}{
\tikzstyle{block1} = [draw, rectangle, 
    minimum height=2em, minimum width=3em, node distance=2.5cm,anchor=center]
\def\gap{1}

\begin{tikzpicture}[auto,>=latex']
	
    \node (input) at (-3.8*\gap,0) {};
    \node [block1] (recon) at (-1*\gap,0) {$\R_{K,L}$};
    \node [block1] (polar) at (1*\gap,0) {$G_L$};
    \node (output) at (3.5*\gap,0) {};

    \draw [->] (input) -- node {{${\tilde{U}}^K$}} (recon); 
    \draw [->] (recon) -- node {$\hat{U}^L$} (polar);
    \draw [->] (polar) -- node {$\hat{X}^L$} (output);
   
    \node[draw,dashed,fit=(recon) (polar),inner sep=8pt] (shaper) {};
    \node[below of=shaper] {$\S_{K,L}$};
   
\end{tikzpicture}}
\caption{\small Generation of an approximation to $X^L$ from a uniform input $\tilde{U}^K$ using the shaper $\S_{K,L}$. $\hat{U}^L$ is first constructed by $\R_{K,L}$ from the uniform input according to (\ref{eq:shaperdef}). Applying $G_L$ gives $\hat{X}^L$, which has nearly the same distribution as $X^L$. }
\label{fig: ExactRec}
\end{figure}
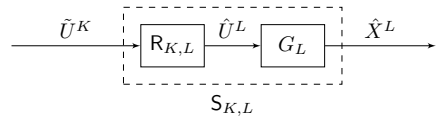

Using the shaper with uniform input $\tilde{U}^K$ (a $K$-bit vector
whose entries are i.i.d.\ $\Bernoulli{\frac12}$) generates an
approximation $\hat{X}^L:=\S_{K,L}(\tilde{U}^K)$ to $X^L$ (see also~\cite[Lemma 11]{korada_polar_2010}).

\begin{mylem}
  \label{lem:shapergood} For $\epsilon \geq 0$ and $K$ such that
  $\Hc{U_i}{U^{i-1}} \geq 1-\epsilon$ for all $i \in
  \mathcal{E}_K$,
  \begin{align}
 \TD{\hat{X}^L}{X^L}\leq K
  \sqrt{\frac{\ln 2}{2} \epsilon} \ .
  \end{align}
\end{mylem}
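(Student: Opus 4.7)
The plan is to reduce the claim about $\hat X^L$ and $X^L$ to one about $\hat U^L$ and $U^L$, and then bound the latter coordinate by coordinate using Pinsker's inequality.

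First, since $G_L$ is a bijection over $\{0,1\}^L$, applying it to both $\hat U^L$ and $U^L$ preserves the variational distance, so it suffices to prove
\begin{equation*}
\delta(\hat U^L, U^L) \leq K\sqrt{\tfrac{\ln 2}{2}\epsilon}.
\end{equation*}
By construction of the shaper (Definition~\ref{def: shaper}), $\hat U^L$ and $U^L$ share the same conditional distributions $P_{U_i|U^{i-1}}$ for every index $i \notin \mathcal{E}_K$, while for $i \in \mathcal{E}_K$ the conditional distribution of $\hat U_i$ given $\hat U^{i-1}$ is uniform on $\{0,1\}$.

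Next I would invoke the standard chain-rule/hybrid bound for variational distance: interpolating between $U^L$ and $\hat U^L$ one coordinate at a time yields
\begin{equation*}
\delta(\hat U^L,U^L)
\leq \sum_{i=1}^{L}\mathbb{E}_{u^{i-1}}\!\left[\delta\!\left(P_{U_i|U^{i-1}=u^{i-1}},P_{\hat U_i|\hat U^{i-1}=u^{i-1}}\right)\right].
\end{equation*}
By the previous paragraph, every term with $i\notin\mathcal{E}_K$ vanishes, leaving $K$ terms, each of which is a distance between a binary conditional and the uniform distribution on $\{0,1\}$.

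For each such term I would apply Pinsker's inequality in the form $\delta(P,Q)\leq\sqrt{\tfrac{\ln 2}{2}D(P\|Q)}$ (the $\ln 2$ factor absorbing the switch to entropies in bits), noting that for a binary distribution $P_{U_i|U^{i-1}=u^{i-1}}$,
\begin{equation*}
D\!\left(P_{U_i|U^{i-1}=u^{i-1}}\,\big\|\,\tfrac12\right)=1-H(U_i|U^{i-1}=u^{i-1}).
\end{equation*}
Taking the expectation over $u^{i-1}$ and using Jensen's inequality (concavity of the square root), the $i$-th term is bounded by
\begin{equation*}
\sqrt{\tfrac{\ln 2}{2}\bigl(1-H(U_i|U^{i-1})\bigr)} \leq \sqrt{\tfrac{\ln 2}{2}\epsilon},
\end{equation*}
where the last inequality uses the lemma's hypothesis $H(U_i|U^{i-1})\geq 1-\epsilon$ for $i\in\mathcal{E}_K$. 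Summing over the $K$ indices in $\mathcal{E}_K$ yields the claimed bound.

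The only step requiring any care is the chain-rule inequality for variational distance; this is standard (essentially the triangle inequality along a hybrid path and the observation that each step depends only on a single conditional marginal), but it should be stated explicitly to make the concavity step through Jensen transparent. The rest is a direct application of Pinsker and the entropy hypothesis.
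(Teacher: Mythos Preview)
Your proposal is correct and follows essentially the same approach as the paper: reduce to $\hat U^L$ versus $U^L$ via the bijection $G_L$, interpolate with a hybrid/triangle-inequality argument so that each step involves only a single conditional, apply Pinsker's inequality together with $D(P_{U_i|U^{i-1}}\|\tfrac12)=1-H(U_i|U^{i-1})$, and pull the expectation inside the square root via concavity. The paper just makes the hybrid distributions $P^{(i)}=P_{U_1\cdots U_i}P_{\hat U_{i+1}\cdots\hat U_L|\hat U_1\cdots\hat U_i}$ explicit and notes that the reduction to a single coordinate follows from monotonicity of variational distance under stochastic maps, which is precisely the ``care'' you flagged for the chain-rule step; in particular, the expectation is over $P_{U^{i-1}}$, consistent with your use of $H(U_i|U^{i-1})$.
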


\begin{IEEEproof}
  Let $\hat{U}^L$ be the $L$-bit string obtained when using the shaper
  with uniform input $\tilde{U}^K$ (cf.\ Equation~\ref{eq:shaperdef}).  We
  have $X^L = U^L G_L$ and $\hat{X}^L = \hat{U}^L G_L$ and, hence,
\begin{align}
  \TD{\hat{X}^L}{X^L} =  \TD{\hat{U}^L}{U^L} \ .
\end{align}
We will bound the distance on the right hand side. For this, we
introduce a family of intermediate distributions $P^{(i)}_{U_1 \cdots
  U_i \hat{U}_{i+1} \cdots \hat{U}_L}$, for $i=0, \ldots, L$, defined
by
\begin{align}
  P^{(i)}_{U_1 \cdots U_i \hat{U}_{i+1} \cdots \hat{U}_L} := P_{U_1
    \cdots U_i } P_{\hat{U}_{i+1} \cdots \hat{U}_L | \hat{U}_1 \cdots
    \hat{U}_i}\ ,
\end{align}
so that $P^{(0)}_{\hat{U}_{1} \cdots \hat{U}_L} = P_{\hat{U}_1 \cdots
  \hat{U}_L}$ and $P^{(L)}_{U_1 \cdots U_L} = P_{U_1 \cdots U_L}$.
 By the triangle inequality,
\begin{align}
 & \TD{\hat{U}^L}{U^L}  \nonumber \\
& \hspace{10mm} \leq
  \sum_{i = 1}^{L}  \TD{P^{(i-1)}_{U_1 \cdots U_{i-1} \hat{U}_{i}\cdots \hat{U}_L}}{P^{(i)}_{U_1 \cdots U_{i} \hat{U}_{i+1}\cdots \hat{U}_L}} \\
  &\hspace{10mm} \leq \sum_{i = 1}^{L}  \TD{P^{(i-1)}_{U_1 \cdots U_{i-1}\hat{U}_{i}}}{P^{(i)}_{U_1 \cdots U_{i-1} U_{i} }} \ , \label{eq_Usum}
\end{align}
where the last line follows from the fact that the variational
distance is non-increasing under stochastic maps~\cite{liesevajda87}
(we apply this to the map that generates $\hat{U}_{i+1} \cdots
\hat{U}_L$ according to the distribution $P_{\hat{U}_{i+1} \cdots
  \hat{U}_L| \hat{U}_{1} \cdots \hat{U}_{i}}$).  Each term of the sum
can be written as $ \delta(P_{U^{i-1}}  P_{\hat{U}_i|\hat{U}^{i-1}}
     , P_{U^{i-1}}  P_{U_i|U^{i-1}})$ or, equivalently, $\mathsf{E}_{U^{i-1}} \left[\delta(P_{\hat{U}_i|\hat{U}^{i-1}}, P_{U_i|U^{i-1} })\right]$. 
     To bound this, we use
Pinsker's inequality~\cite[p.58]{csiszarkorner81} as well as the concavity of the square root,
\begin{align}
 &\mathsf{E}_{U^{i-1}}\left[ \TD{P_{\hat{U}_i|\hat{U}^{i-1}}}{P_{U_i|U^{i-1} }}  \right] \nonumber \\
 &\hspace{15mm}\leq \mathsf{E}_{U^{i-1}}\left[\sqrt{\tfrac{\ln 2}{2}D(P_{U_i|U^{i-1}}\|P_{\hat{U}_i|\hat{U}^{i-1}})}\right]\\
  &\hspace{15mm}\leq \sqrt{\tfrac{\ln 2}{2}\mathsf{E}_{U^{i-1}}\left[D(P_{U_i|U^{i-1}}\|P_{\hat{U}_i|\hat{U}^{i-1}})\right]} \ .
\end{align}
By construction, the conditional
distribution of $\hat{U}_i$ for all $i \in \mathcal{E}_K$ is the uniform
distribution, so that
\begin{align}
 \mathsf{E}_{U^{i-1}}\left[
    D(P_{U_i|U^{i-1}}\|P_{\hat{U}_i|\hat{U}^{i-1}})
 \right]
   &= 1 - \Hc{U_i}{U^{i-1}} \\ 
   &\leq \epsilon \ .   
\end{align}
Furthermore, for all $i \notin \mathcal{E}_K$, the conditional
distribution of $\hat{U}_i$ equals $P_{U_i|U^{i-1}}$, so that the
corresponding term in the sum~\eqref{eq_Usum} vanishes. The sum can
thus be rewritten as
\begin{align}
  \TD{\hat{U}^L}{U^L}
\leq 
  \sum_{i \in \mathcal{E}_K} \sqrt{\tfrac{\ln 2}{2}
    \epsilon} \ ,
\end{align}
from which the assertion follows. 
\end{IEEEproof}

Concatenating the shaper with the channel gives the super-channel
$\W'_{K,L}:=\W^L \circ\S_{K,L}$. Monotonicity of the variational
distance gives the following lemma, which is the basis of our coding
scheme. Letting $\hat{Y}^L:=\W^L(\hat{X}^L)$ and $Y^L =\W^L(X^L)$, we
have
\begin{mylem} 
\label{lem:inputoutput}
For $\epsilon \geq 0$ and $K$ such that
  $\Hc{U_i}{U^{i-1}} \geq 1-\epsilon$ for all $i \in
  \mathcal{E}_K$,
\begin{align}
\TD{\bigl(\tilde{U}^K,\hat{Y}^L\bigr)}{\left(U^L[\mathcal{E}_K],Y^L\right)}\leq
K\sqrt{\frac{\ln 2}{2} \epsilon} \ .
\end{align}
\end{mylem}
\begin{IEEEproof}
  Let $\epsilon'=K \sqrt{\tfrac{\ln 2}{2}\epsilon}$, then Lemma~\ref{lem:shapergood} implies
  $\delta((\hat{X}^L,\hat{Y}^L),(X^L,Y^L))\leq \epsilon'$ by the
  monotonicity of the variational distance under stochastic
  maps. Applying $G_L$ to $X^L$ or $\hat{X}^L$ and marginalizing over
  the elements not in $\mathcal{E}_K$ is also a stochastic map, so
  $\delta((\hat{U}^L[\mathcal{E}_K],\hat{Y}^L),(U^L[\mathcal{E}_K],Y^L))\leq
  \epsilon'$. Observing that $\hat{U}^L[\mathcal{E}_K]=\tilde{U}^K$
  completes the proof.
\end{IEEEproof}

\prlsection{Efficient Classical Encoding and Decoding}
The encoding for channel coding is the reverse operation of the encoding in the source polarization setup explained in Section~\ref{sec:effQPC} and Figure~\ref{fig:clEncoder}. The frozen bits which are sent to the decoder in the source polarization scenario are simulated by the encoder in this setup. The frozen bits at the outer layer are generated as explained in \cite[Section IV]{honda12}. The frozen bits at the inner layer are chosen according to our shaper construction explained above. Note that the frozen bits at the inner and outer layer are forwarded to the decocer.

For the decoding we use the source coding decoder introduced in Section~\ref{sec:effQPC}. It is therefore clear that it is efficient, however it remains to be shown that it is reliable for channel coding which is done next.

\begin{mycor}
The encoder and decoder explained above have $O(N \log N)$ complexity.
\end{mycor}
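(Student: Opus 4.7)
The plan is to reduce this corollary to the efficiency results already established for the source coding scheme (Propositions~\ref{prop:ApEncode} and \ref{prop:ApDecoder}) together with an efficiency analysis of the new ingredient, namely the inner layer shaper $\S_{K,L}$ used to simulate the frozen bits. Since the encoder is the reversal of the source coding encoder plus shaping and the decoder is literally the source coding decoder, the only genuinely new piece to account for is the cost of generating the simulated frozen bits at both layers.

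First I would analyze the encoder. The outer encoder applies $K$ instances of $G_M$ to a multilevel input in which the frozen positions are drawn according to the construction of \cite{honda12}; each $G_M$ costs $O(M\log M)$ and, using $K=O(L)$ from the polarization phenomenon (Theorem~\ref{thm:polarizationPhenomenon}), this totals $O(ML \log M)$. The inner layer consists of $M$ independent applications of the shaper $\S_{K,L}$ (Definition~\ref{def: shaper}), each of which has two parts: constructing $\hat U^L$ via $\R_{K,L}$, and then applying $G_L$. The $G_L$ transform costs $O(L \log L)$. The nontrivial part of $\R_{K,L}$ is, at each position $i \notin \mathcal{E}_K$, sampling $Z_i$ from the conditional distribution $P_{U_i|U^{i-1}}$ induced by $U^L = X^L G_L$. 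These conditional likelihoods are exactly the quantities computed by Ar{\i}kan's recursive likelihood algorithm~\cite{arikan09}, which run in $O(L \log L)$ total across all $i\in[L]$. Thus each shaper costs $O(L \log L)$, and the $M$ shapers cost $O(ML \log L) = O(N \log N)$ in total. Summing the inner and outer contributions gives encoding complexity $O(N \log N)$.

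For the decoder, I would simply invoke Proposition~\ref{prop:ApDecoder}: since the channel coding scheme uses the identical decoding circuit as the source coding scheme of Appendix~\ref{ap:clSC} (the shaping only alters the statistics of the simulated inner-layer frozen bits seen by the decoder, not the decoder's operation count), the same $O(N\log N)$ complexity bound applies verbatim.

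The main obstacle I anticipate is giving a clean argument that the sampling step inside the shaper $\R_{K,L}$ does not blow up the complexity; in particular one must verify that producing a single sample of $Z_i \sim P_{U_i|U^{i-1}}$ reuses the same recursive likelihood tree that Ar{\i}kan uses for decoding, so that the amortized cost across $i=1,\dots,L$ remains $O(L\log L)$ rather than $O(L^2)$. Modulo this bookkeeping, the corollary follows immediately from the preceding propositions.
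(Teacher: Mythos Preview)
Your proposal is correct and follows the same reduction as the paper, which simply states that the corollary is an immediate consequence of Propositions~\ref{prop:ApEncode} and \ref{prop:ApDecoder}. You actually go further than the paper by explicitly accounting for the cost of the shaper $\S_{K,L}$ (in particular the sampling of $Z_i$ via Ar{\i}kan's recursive likelihood computation), which the paper's one-line proof tacitly absorbs into the invocation of those propositions; your more careful bookkeeping is a strict improvement in rigor without changing the overall strategy.
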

\begin{IEEEproof}
This corollary is an immediate consequence of Lemma~\ref{lem:ApEncode} and Lemma~\ref{lem:ApDecoder}.
\end{IEEEproof}

\prlsection{Reliability}
To analyze the reliability of the decoder introduced above we start with a general lemma on the reliability of using the ``wrong'' compressor / decompressor pair in the problem of source coding.\footnote{Recall that we are using the decoder built for source coding, i.e.,\ for pairs of random variables $(X,Y)^N$, but we actually have only a (good) approximation of those.} 
\begin{mylem}
\label{lem:wrongdecoder}
Let $X$ and ${X}'$ be arbitrary random variables such that $\delta(X',{X})\leq \eta$ and let $\W$ denote an arbitrary stochastic map. 
 If $\C$ and $\D$ are a compressor / decompressor pair for $(X,\W(X))$, such that 
${\rm Pr}[\hat{X}\neq X]\leq \eta'$ where $\hat{X}=\D(\W(X),\C(X))$, then, for $\hat{X}'= \D(\W(X'),\C(X'))$,
\begin{align}
\Prv{\hat{X}'\neq X'}\leq \eta+\eta'.
\end{align}
\end{mylem}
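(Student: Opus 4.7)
The plan is to reduce the claim to the simple observation that, once the stochastic maps $\W$, $\C$, and $\D$ are fixed, the conditional probability of a decoding mistake depends only on the realized value of the input. Concretely, I will define
\begin{equation*}
q(x):=\Prv{\D(\W(x),\C(x))\neq x},
\end{equation*}
where the probability is over the internal randomness of the three stochastic maps (which is assumed independent of $X$ and $X'$, as is standard in the compression setup). Then the two error probabilities we wish to compare can be written in parallel form
\begin{equation*}
\Prv{\hat X\neq X}=\sum_x P_X(x)\,q(x)\quad\text{and}\quad\Prv{\hat X'\neq X'}=\sum_x P_{X'}(x)\,q(x),
\end{equation*}
so their difference is exactly $\sum_x\bigl(P_{X'}(x)-P_X(x)\bigr)q(x)$.

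Since $0\leq q(x)\leq 1$, this difference is at most $\sum_{x:P_{X'}(x)>P_X(x)}(P_{X'}(x)-P_X(x))$, which by the standard positive-part characterization of variational distance equals $\delta(X',X)\leq\eta$. Combining this with the hypothesis $\Prv{\hat X\neq X}\leq\eta'$ yields $\Prv{\hat X'\neq X'}\leq\eta+\eta'$, as required.

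The only mildly subtle point is justifying that the same function $q(x)$ appears in both error expressions, i.e., that the auxiliary randomness used by $\W,\C,\D$ is independent of whether $x$ was drawn from $X$ or from $X'$; this is implicit in the problem statement. An equivalent and perhaps more transparent route is a coupling argument: select a maximal coupling of $(X,X')$ so that $\Prv{X\neq X'}=\delta(X,X')\leq\eta$, use shared randomness inside the three stochastic maps so that the same realization of $x$ produces the same $\hat x$, and observe that $\{\hat X'\neq X'\}\subseteq\{X\neq X'\}\cup\{\hat X\neq X\}$. A union bound then gives the desired inequality directly. I do not expect any nontrivial obstacle; the entire argument is an exercise in bookkeeping once the invariance of $q(x)$ (equivalently, the availability of a shared-randomness coupling) is recognized.
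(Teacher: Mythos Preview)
Your proof is correct. Both your argument and the paper's rest on the same underlying observation---that the map $x\mapsto(x,\D(\W(x),\C(x)))$ is a single fixed stochastic kernel applied to both $X$ and $X'$---but the executions differ. You work pointwise: you introduce the error function $q(x)$, write each block-error probability as an expectation of $q$, and bound $\mathbb{E}_{P_{X'}}[q]-\mathbb{E}_{P_X}[q]$ by the variational distance using $0\le q\le 1$. The paper instead lifts everything to joint distributions: it uses monotonicity of $\delta$ under the stochastic map to get $\delta\bigl((X,\hat X),(X',\hat X')\bigr)\le\eta$, observes $\delta\bigl((X,X),(X,\hat X)\bigr)=\Pr[\hat X\neq X]\le\eta'$, applies the triangle inequality, and then extracts $\Pr[\hat X'\neq X']$ from $\delta\bigl((X,X),(X',\hat X')\bigr)$ via the positive-part formula for variational distance. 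Your route is shorter and more transparent; the paper's route is slightly more modular in that it never unpacks the kernel into a pointwise error function. Your coupling alternative is also valid and is perhaps the cleanest of the three formulations.
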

\begin{IEEEproof}
  Note that the pairs $(X, \hat{X})$ and $(X', \hat{X}')$ are obtained
  from $X$ and $X'$ by applying the stochastic map that takes $x$ to
  $(x, \mathsf{D}(\mathsf{W}(x), \mathsf{C}(x)))$. Because the
  variational distance is non-increasing under such maps, we have
  \begin{align}
    \TD{\bigl(X, \hat{X}\bigr)}{\bigl(X', \hat{X}'\bigr)} \leq \TD{X}{X'} \leq  \eta \ .
  \end{align}
  Furthermore, defining $(X,X)$ to be the random variable
  $(X,\bar{X})$ with distribution $P_{X \bar{X}} = P_X \delta_{X
    \bar{X}}$, we have
  \begin{align}
    \TD{\bigl(X, X\bigr)}{\bigl(X,\hat{X}\bigr)} = \Pr[\hat{X} \neq X] \leq \eta' \ .
  \end{align}
  Hence, applying the triangle inequality, we obtain
  \begin{align}
    \TD{\bigl(X, X\bigr)}{\bigl(X', \hat{X}'\bigr)} \leq \eta + \eta' \ .
  \end{align}  
 Now note that the variational distance can also be written as
  \begin{align}
    \TD{A}{A'} = \sum_{a: \, P_{A}(a) \leq P_{A'}(a)} P_{A'}(a) - P_{A}(a) \ .
  \end{align}
  Applied to $A = (X, X)$ and $A'=(X', \hat{X}')$, and using that
  $P_{X X}(x, \hat{x}) = 0$ for $x \neq \hat{x}$, we immediately
  obtain
  \begin{align}
    \TD{\bigl(X, X\bigr)}{\bigl(X', \hat{X}'\bigr)} \geq \sum_{x \neq \hat{x}}
    P_{X'\hat{X}'}(x,\hat{x}) \ ,
  \end{align}
  which  implies that $\Pr[\hat{X}' \neq X'] \leq \eta + \eta'$. 
\end{IEEEproof}

Using Lemmas \ref{lem: newerrLem} and \ref{lem:wrongdecoder} the statement of reliability follows easily. 

\begin{mylem}
\label{lem:reliabilityCCC}
The error probability using the source coding decoder in the classical setup of channel coding as explained above is $P_{\rm err}=O(L\,2^{-M^\beta}+L2^{-\frac12L^{\beta'}})$ for any $\beta,\beta'<\frac12$. 
\end{mylem}

\begin{IEEEproof}
	For the polar source coding scheme, note that $\Prv{\mathcal{A}_i \cap \mathcal{B}_{i-1}^c} + x \in O(2^{-M^\beta})$, where $x$ is the probability that $\hat{V}_i \neq V_i$ given that a mistake previously occurred, but where we still give the correct $V^{i-1}$ to the decompressor. We can therefore upper bound $r$ in Lemma~\ref{lem: newerrLem} by $O(2^{-M^\beta})$~\cite{arikan10}.   Thus, the probability of incorrectly decoding any of the $K$ $V_i$ is $O(L 2^{-M^\beta})$; this is $\eta'$ in Lemma~\ref{lem:wrongdecoder}. Lemma~\ref{lem:inputoutput}  and the properties of $K$ give $\eta=O(L2^{-\frac12L^{\beta'}})$ for $\beta'<\tfrac{1}{2}$, establishing the theorem.
\end{IEEEproof}

\subsection{Decoding}
The decoding is equivalent to Bob's task in the entanglement distillation scheme. Note however that it has a slightly worse reliability since the coding scheme approximates the entanglement distillation setup by generating the frozen qubits at the inner and outer layer as explained above.
Recall that the inner and outer encoder forward the values of the frozen qubits to the decoder which is necessary to decode reliably and efficiently. An immediate corollary of Theorem~\ref{thm:Bob} states that
\begin{mycor}
For Pauli channels and the use of polar codes, the scheme introduced above has a $O(N \log N)$ complexity decoder. 
\end{mycor}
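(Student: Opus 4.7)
The plan is to invoke Proposition~\ref{prop:Bob} essentially verbatim, since the decoding operation in the channel coding scheme was defined (in the paragraph introducing Figure~\ref{fig:channelview}) to be the same as Bob's decoding operation in the entanglement distillation scheme. Concretely, first I would observe that both the inner layer information (the frozen amplitude qubit values in the registers $B_{C,1},\dots,B_{C,M}$) and the outer layer information (the frozen phase qubit values in $B_D$) are made available to the decoder via the side channel exactly as in the distillation setting. Hence from the decoder's viewpoint the only difference between the two schemes is the origin of the state $\ket{\Psi}^{A^N B^N E^N}$: it is either produced by a source and partially distributed through $\mathcal{N}^{\otimes N}$, or produced by the sender's encoder followed by the same $\mathcal{N}^{\otimes N}$. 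In either case the decoder implements the same sequence of unitaries and classical polar decoders acting on $B^N B_C^M B_D$.

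Given this reduction, the complexity analysis of Proposition~\ref{prop:Bob} applies without modification: the $M$ inner amplitude polar decoders $\mathcal D_A$ together cost $O(ML\log L)$ by Ar{\i}kan's result~\cite{arikan09}, and the outer phase decoder $\mathcal D_P$ costs $O(N\log N)$ by Proposition~\ref{prop:ApDecoder} (using the adapted outer-layer decoder of Appendix~\ref{ap:outerIRproblem} to handle the phase outcomes of qubits already measured in amplitude). Adding the two contributions and using $N=LM$ gives the claimed $O(N\log N)$ bound.

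There is no genuine obstacle here; the only subtlety is to make sure that the decoder's reliance on the frozen qubit values is satisfied in the channel coding setting. This is precisely the role of forwarding the frozen values (chosen randomly in the phase basis at the outer layer and deterministically at the inner layer as in~\cite[Section II]{suttershort}) through $B_{C,1},\dots,B_{C,M}$ and $B_D$. With this data supplied, Bob's decoder as described in Section~\ref{sec:effQPC} runs identically, so its complexity bound carries over. The statement therefore follows as an immediate corollary of Proposition~\ref{prop:Bob}.
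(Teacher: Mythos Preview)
Your proposal is correct and matches the paper's approach: the paper treats this as an immediate corollary of Proposition~\ref{prop:Bob}, noting just before the statement that ``the decoding is equivalent to Bob's task in the entanglement distillation scheme'' and that the frozen qubit values are forwarded to the decoder. Your additional unpacking of the complexity breakdown is simply a restatement of the proof of Proposition~\ref{prop:Bob} and is not required, but it does no harm.
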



It remains to be shown that the efficient encoding and decoding introduced above is reliable. 
\begin{mythm}
The trace distance between the state produced by the decoder and the ideal, maximally entangled state is less than $\sqrt{2 \epsilon_2}+\sqrt{2M \epsilon_1}$ where $\epsilon_1=O(2^{-L^{\beta}})$, $\epsilon_2=O(L 2^{-M^{\beta'}}+L 2^{-\tfrac{1}{2}L^{\beta''}})$ for $\beta,\beta',\beta''<\tfrac{1}{2}$.
\end{mythm}
\begin{IEEEproof}
Similar as in the proof of Corollary~\ref{cor:quality} let $p_{\rm err}(\mathcal{D}_{A})$ denote the error probability of the amplitude-decoding operation $\mathcal{D}_A$ and $p_{\rm err}(\mathcal{D}_{P})$ the error probability of the phase-decoding process $\mathcal{D}_{P}$. The error probability for all decoding operations with respect to the amplitude basis is denoted by $p_{\rm err}(\mathcal{D}_{A^M})$. According to Proposition~\ref{prop:quality}, the trace distance between the scheme's output and a maximally entangled state of appropriate dimension is less than $\sqrt{2 p_{\rm err}(\mathcal{D}_{A^M}) } + \sqrt{2 p_{\rm err}(\mathcal{D}_P)}$. Using the union bound we obtain $p_{\rm err}(\mathcal{D}_{A^M}) \leq M p_{\rm err}(\mathcal{D}_{A})$. Since we use the standard polar decoder \cite{arikan09} for the amplitude error-correction, we have 
\begin{equation}
p_{\rm err}\left(\mathcal{D}_{A}\right)=O\left(2^{-L^{\beta}}\right) \quad \textnormal{for} \quad \beta<\frac{1}{2}.
\end{equation}
Furthermore, according to Lemma~\ref{lem:reliabilityCCC},
\begin{equation}
p_{\rm err}\left(\mathcal{D}_{P}\right)=O(L 2^{-M^{\beta'}}+L 2^{-\tfrac{1}{2}L^{\beta''}})
\end{equation}
for any $\beta',\beta''<\frac{1}{2}$, which proves the assertion.
\end{IEEEproof}


\section{Achieving Rates beyond the Coherent Information?}
\label{sec:beyond}
In Corollary~\ref{cor:CI} it is shown that the scheme achieves the coherent information. However if \eqref{eq:QuestionTight} were not tight the scheme could achieve a higher rate. We formulate a series of related open questions addressing this point and provide possible approaches to answer them.  
\renewcommand\themyquestion{1}
\begin{myquestion} \hypertarget{question}{}
Is it possible that $R > I(A\rangle B)_{\psi}$ for ${R>0}$?
\end{myquestion}
There are several ways to phrase the question above differently. Using \eqref{eq:bfquestion}, an equivalent formulation of Open Question~\hyperlink{question}{1} is the following:
\renewcommand\themyquestion{1$'$}
\begin{myquestion} \hypertarget{question1p}{}
Is it possible that $\lim \limits_{L\to \infty}\frac{1}{L}\Hc{X^{\bar A^{\setC}}}{B^L C^L X^{\bar A}}_{\Psi_3'} > 0$ for ${R>0}$?
\end{myquestion}

The following Proposition leads to a different, particularly clean reformulation of Open Questions~\hyperlink{question}{1} and \hyperlink{question1p}{1$'$}.
\begin{myprop} \label{prop:degraded}
The rate of the scheme introduced above can be written as
\begin{align}
R= - \Hc{A}{B}_{\psi} +\Hc{Z^A}{B}_{\psi}-\frac{1}{L}\Hc{Z^{\bar A^{\setC}}}{E^L}_{\Psi_3'}.
\end{align}
\end{myprop}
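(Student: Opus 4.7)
The strategy is to reuse the intermediate step already derived in the proof of Corollary~\ref{cor:CI}. Combining \eqref{eq:bfquestion} with the chain leading to \eqref{eq:finalStep2} yields
\begin{equation*}
R = -\Hc{A}{B}_{\psi} + \frac{1}{L}\Hc{X^{\bar A^{\setC}}}{B^L C^L X^{\bar A}}_{\Psi_3'}.
\end{equation*}
Matching this expression against the claim of Proposition~\ref{prop:degraded} reduces the proof to establishing the entropic identity
\begin{equation*}
\Hc{X^{\bar A^{\setC}}}{B^L C^L X^{\bar A}}_{\Psi_3'} + \Hc{Z^{\bar A^{\setC}}}{E^L}_{\Psi_3'} = L\Hc{Z^A}{B}_{\psi},
\end{equation*}
where the right-hand side comes from $\log|\bar A^{\setC}| = L\Hc{Z^A}{B}_{\psi} + o(L)$, i.e.\ the capacity-achieving property of the inner amplitude code already invoked in the proof of Theorem~\ref{thm:ratescheme}.

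The key step is to apply Lemma~\ref{lem:2pcertainty} after collapsing the $\bar A$ register by an $X$-basis measurement. Because $\bar A$ is disjoint from the registers $(\bar A^{\setC}, B^L, C^L, E^L)$ and $\ket{\Psi_3'}$ is pure on the union of all of these, measuring $X^{\bar A}$ leaves, for each outcome $x$, a pure tripartite state $\ket{\sigma_x}^{\bar A^{\setC} B^L C^L E^L}$. The two-party certainty relation applied to $\ket{\sigma_x}$ with system $\bar A^{\setC}$ and bipartition $(B^L C^L,\, E^L)$ then yields
\begin{equation*}
\Hc{X^{\bar A^{\setC}}}{B^L C^L}_{\sigma_x} + \Hc{Z^{\bar A^{\setC}}}{E^L}_{\sigma_x} = \log|\bar A^{\setC}|,
\end{equation*}
and averaging over $x$ produces the identity
\begin{equation*}
\Hc{X^{\bar A^{\setC}}}{B^L C^L X^{\bar A}}_{\Psi_3'} + \Hc{Z^{\bar A^{\setC}}}{E^L X^{\bar A}}_{\Psi_3'} = \log|\bar A^{\setC}|.
\end{equation*}

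It remains to remove the spurious $X^{\bar A}$ from the $Z$-side conditioning. To do this I would compute the reduced state on $\bar A\bar A^{\setC} E^L$ directly from \eqref{eq:psi3Prime}: tracing out $B^L C^L$ dephases $A^L$ in its amplitude basis, because $C^L$ carries a classical copy of $z^L$, producing a state that is diagonal in the $Z^{\bar A}$ eigenbasis and therefore of the form $\sum_{z^L} p_{z^L}\,\ketbra{\bar z}^{\bar A}\otimes\ketbra{\bar z^{\setC}}^{\bar A^{\setC}}\otimes \sigma_{z^L}^{E^L}$. An $X^{\bar A}$ measurement on such a state yields a uniform outcome, and the post-measurement conditional reduced state on $(\bar A^{\setC}, E^L)$ is independent of that outcome; hence $\Hc{Z^{\bar A^{\setC}}}{E^L X^{\bar A}}_{\Psi_3'} = \Hc{Z^{\bar A^{\setC}}}{E^L}_{\Psi_3'}$. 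I expect this independence argument to be the most delicate part of the proof, since it hinges on the structural fact that $C^L$ dephases $A^L$ in the amplitude basis; once it is in place, the desired rate formula follows by substitution.
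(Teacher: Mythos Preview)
Your argument is correct, but it takes a detour compared with the paper's proof. First, a minor slip: the identity you invoke for each $\sigma_x$ is Lemma~\ref{lem:3pcertainty} (the three-party relation), not Lemma~\ref{lem:2pcertainty}. More substantively, the paper avoids your ``most delicate part'' entirely by never measuring $\bar A$. It starts from $-\Hc{\bar A}{B^LC^L}_{\Psi_3'}$ (Theorem~\ref{thm:ratescheme}), splits off $\bar A^{\setC}$ via the chain rule, and then applies Lemma~\ref{lem:3pcertainty} directly to the pure state $\ket{\Psi_3'}$ with tripartition $\bar A^{\setC}\mid B^LC^L\bar A\mid E^L$: since $\Hc{Z^{\bar A^{\setC}}}{B^LC^L\bar A}_{\Psi_3'}=0$ (because $C^L$ carries $z^L$), one gets $\Hc{X^{\bar A^{\setC}}}{B^LC^L\bar A}_{\Psi_3'}+\Hc{Z^{\bar A^{\setC}}}{E^L}_{\Psi_3'}=|\bar A^{\setC}|$ in one stroke, with quantum $\bar A$ on the $B$-side and nothing to remove on the $E$-side. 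By contrast, you collapse $\bar A$ to classical $X^{\bar A}$ first, which then forces the extra dephasing argument to drop $X^{\bar A}$ from the $E^L$ conditioning. Your dephasing step is valid, but keeping $\bar A$ quantum makes it superfluous. A further difference: in the paper the $\Hc{Z^A}{B}_\psi$ term arises from simplifying $-\Hc{A}{BC}_{\psi'}$ (via the chain rule and the structure of $\ket{\psi'}$), not from the approximation $|\bar A^{\setC}|\approx L\Hc{Z^A}{B}_\psi$.
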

\begin{IEEEproof}
Recall that a possible rate expression is given in Theorem~\ref{thm:ratescheme}. Using the chain rule we can write
\begin{align}
&-\Hc{\bar A}{B^L C^L}_{\Psi_3'}  \nonumber \\
&= -\Hc{A^L}{B^L C^L}_{\Psi_3'}+\Hc{\bar A^{\setC}}{B^L C^L \bar A}_{\Psi_3'} \\
&= -L\Hc{A}{BC}_{\psi'} + \Hc{X^{\bar A^{\setC}}}{B^L C^L \bar A}_{\Psi_3'} - \left| \bar A^{\setC}\right| \label{eq:BertaUR} \\
&= -L\Hc{A}{BC}_{\psi'}-\Hc{Z^{\bar A^{\setC}}}{E^L}_{\Psi_3'} \label{eq:RenesUR}\\
&= -L \Hh{ABC}_{\psi'}-L\Hh{BC}_{\psi'}-\Hc{Z^{\bar A^{\setC}}}{E^L}_{\Psi_3'} \\
&= -L\Hc{AC}{B}_{\psi'}+L\Hc{Z^A}{B}_{\psi}-\Hc{Z^{\bar A^{\setC}}}{E^L}_{\Psi_3'}\\
&= -L\Hc{A}{B}_{\psi}+L \Hc{Z^A}{B}_{\psi}-\Hc{Z^{\bar A^{\setC}}}{E^L}_{\Psi_3'}.
\end{align}
Equality \eqref{eq:BertaUR} uses Lemma~\ref{lem:2pcertainty} and that $H(Z^{\bar A^{\setC}}|B^L C^L \bar A)=0$. Equation \eqref{eq:RenesUR} uses Lemma~\ref{lem:3pcertainty}. The remaining steps using the chain rule and the form of $\ket{\psi}$ and $\ket{\psi'}$. 
\end{IEEEproof}

Note that from the polarization phenomenon (cf.\ Theorem~\ref{thm:polarizationPhenomenon}) we know that $|\bar A^{\setC}| = L H(Z^A|B)_{\psi} - o(L)$ and hence we immediately see that we can bound the rate term as $R\geq I(A\rangle B)_{\psi}$. Proposition~\ref{prop:degraded} implies that we have an equivalent formulation of Open Question~\hyperlink{question}{1}, which might be easier to answer as it is a purely classical problem.\footnote{Note that Open Question~\hyperlink{question1pp}{1$''$} is formulated in a purely classical framework in \cite[Section V.A]{sutter13}.}
\renewcommand\themyquestion{1$''$}
\begin{myquestion} \hypertarget{question1pp}{}
Is it possible that $\lim \limits_{L \to \infty} \frac{1}{L}H(Z^{\bar A^{\setC}}|E^L)_{\Psi_3'} < \lim \limits_{L\to \infty} \frac{1}{L}| \bar A^{\setC}|$ for $R>0$?
\end{myquestion}

Equation~\eqref{eq:lessnoisyrates} states that for \emph{less noisy} channels the coherent information is optimal. Therefore we must be able to show that $R=I(A\rangle B)_{\psi}$.  Using the less noisy characterization (cf.\ Section~\ref{sec:def}), we can write $H(Z^{\bar A^{\setC}}|E^L)_{\Psi_3'} \geq H(Z^{\bar A^{\setC}}|B^L)_{\Psi_3'} = L H(Z^A|B)_{\psi}+o(L)$, where the last step follows from the polarization phenomenon as stated in Theorem~\ref{thm:polarizationPhenomenon}. Using \eqref{eq:RenesUR}, we obtain for sufficiently large $L$, that $R\leq I(A\rangle B)_{\psi}$ which together with Corollary~\ref{cor:CI} proves that for less noisy channels the rate of our scheme is equal to the coherent information.

Recent advances in understanding the polarization phenomenon~\cite{RSH15} may be useful to resolve this open question, since Questions~\hyperlink{question1p}{1$'$} and \hyperlink{question1pp}{1$''$} when applied to quantum polar codes involve a statement about the structure of the polarized sets (see Section~\ref{sec_polpheno}). 
A possible indication why our scheme could indeed achieve rates beyond the coherent information is the observation that the outer layer can introduce degeneracies for the code at the inner layer. In addition, the states at the inner layer are not product states, which would rule out having a rate beyond the coherent information.\footnote{Note that for product states the von Neumann entropy is additive.}

\section{Achieving the Quantum Capacity?} \label{sec:achievingQ}
In this section we state a second open problem which addresses the question of whether it is possible that our scheme achieves the quantum capacity for channels where the coherent information is not optimal. This question is related to the one introduced in Section~\ref{sec:beyond} before. Whereas Open Questions~\hyperlink{question}{1}, \hyperlink{question1p}{1$'$} and \hyperlink{question1pp}{1$''$} ask if it is possible to achieve a rate beyond the coherent information for certain quantum channels, Open Question~\hyperlink{question:LN}{2} in this section raises the more specific question if for certain quantum channels our scheme achieves the quantum capacity.\footnote{The existence of a channel fullfilling the condition in Question~\hyperlink{question:LN}{2} which is not less noisy  would imply a positive answer to Questions~\hyperlink{question}{1}, \hyperlink{question1p}{1$'$} and \hyperlink{question1pp}{1$''$}.}
As depicted in Figure~\ref{fig:channelview}, we can define a super-channel $\mathcal{N}':\mathfrak{T}(\mathcal{H}_{\bar A}) \to \mathfrak{T}(\mathcal{H}_{B^{\otimes L}} \otimes \mathcal{H}_{B_C})$ which consists of an inner encoding block and $L$ basic channels $\mathcal{N}$. Then we have
\begin{myprop} \label{prop:LN}
For sufficiently large $L$, the channel $\mathcal{N}'$ is approximately less noisy, irrespective of $\mathcal{N}$.
\end{myprop}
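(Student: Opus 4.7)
The plan is to exploit two complementary properties of the inner encoding that together force $\mathcal{N}'$ to approximately factor through a measurement in the amplitude basis of $\bar A$, from which the less noisy condition will follow by an essentially classical argument.

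First, I would combine the reliability of the inner amplitude information reconciliation with Fano's inequality to obtain $\Hc{Z^{\bar A}}{B^L B_C}_{\Psi_3'} = o(L)$ on the canonical state $\ket{\Psi_3'}$; this bound is immediate since by Corollary~\ref{cor:quality} the amplitude decoder has error probability exponentially small in $L$. Applying the uncertainty relation of Lemma~\ref{lem:2pcertainty} to the pure state $\ket{\Psi_3'}$ then yields the complementary bound $\Hc{X^{\bar A}}{E^L E_C}_{\Psi_3'} = \log|\bar A| - o(L)$, expressing that the environment has essentially maximal uncertainty about the phase of $\bar A$. This pair of near-extremal entropy conditions is exactly the signature that, from Bob's side, the super-channel transmits only amplitude-basis information, while leaking no phase information to the environment.

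Next, I would promote these two conditions, which a priori hold only on the canonical state $\ket{\Psi_3'}$, to an operator-level factorization of the channel $\mathcal{N}'$. A decoupling argument combined with Uhlmann's theorem should yield cq-channels $\mathcal{E}_B$ and $\mathcal{E}_E$ such that
\begin{equation*}
\mathcal{N}' \approx \mathcal{E}_B \circ \Delta_Z, \qquad \mathcal{N}'_c \approx \mathcal{E}_E \circ \Delta_Z,
\end{equation*}
where $\Delta_Z$ denotes amplitude-basis dephasing on $\bar A$, and the approximations hold in a norm strong enough to preserve mutual informations up to additive $o(1)$ corrections. Given this factorization the less noisy property is routine: for any input ensemble $\{p_t,\rho_t^{\bar A}\}$ with the $\rho_t$ sharing a common eigenbasis, letting $Z^{\bar A}$ be the classical record produced by $\Delta_Z$, data processing through $\mathcal{E}_E$ gives $\I{T}{E^L E_C} \leq \I{T}{Z^{\bar A}}$, while Bob's decoder on the output of $\mathcal{E}_B$ recovers $Z^{\bar A}$ reliably, so $\I{T}{B^L B_C} \geq \I{T}{Z^{\bar A}} - o(1)$. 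Chaining these two bounds gives $\I{T}{B^L B_C} \geq \I{T}{E^L E_C} - o(1)$, the asserted approximately less noisy condition.

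The hard part will be the factorization step: translating entropic statements about one specific state into a norm-level approximation of the channel $\mathcal{N}'$ valid for arbitrary inputs is delicate, and the standard tools (decoupling, gentle measurement, Uhlmann) all require some care to apply simultaneously to Bob's and the environment's sides. A secondary subtlety is that Bob's decoder was designed for the canonical input distribution, so extending its reliability to the output of $\mathcal{E}_B$ under arbitrary ensembles may need an auxiliary robustness argument, for instance phrased in terms of a quantum hypothesis testing entropy that is known to be stable under small changes of the underlying state.
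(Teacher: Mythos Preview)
Your Fano step matches the paper, which then simply cites \cite[Lemma~12]{sutter13} to finish; your route through channel factorization is more involved and contains a genuine error. The factorization $\mathcal{N}' \approx \mathcal{E}_B \circ \Delta_Z$ cannot hold: it would render $\mathcal{N}'$ approximately entanglement-breaking and hence of vanishing coherent information, contradicting Theorem~\ref{thm:ratescheme} and Corollary~\ref{cor:CI}, which give $\tfrac{1}{L}\CI{\bar A}{B^L C^L}_{\Psi_3'} \geq \CI{A}{B}_\psi > 0$ whenever the scheme is nontrivial. The bound $\Hc{Z^{\bar A}}{B^L B_C} \approx 0$ says Bob can \emph{extract} $Z^{\bar A}$ from his output, not that his output is \emph{determined by} $Z^{\bar A}$; Bob retains phase information, and that is exactly what the outer layer harvests.

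The salvageable half is the complementary claim $\mathcal{N}'_c \approx \mathcal{E}_E \circ \Delta_Z$: when the conditional Bob-states have (nearly) orthogonal supports, the cross terms $\Trp{B}{\ket{\varphi_z}\bra{\varphi_{z'}}}$ (nearly) vanish, so Eve's channel does factor through amplitude dephasing. That, together with Bob's direct recoverability of $Z^{\bar A}$ from his \emph{actual} output (no Bob-side factorization needed), already yields the chain $\I{T}{B} \gtrsim \I{T}{Z^{\bar A}} \gtrsim \I{T}{E}$. So your argument is repairable, but as written the Bob-side factorization is both false and load-bearing in your phrasing (``Bob's decoder on the output of $\mathcal{E}_B$''). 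Note also that you need $\xi \to 0$, not merely $\xi = o(L)$; the paper gets this from $\epsilon_1 = O(2^{-L^\beta})$ together with $\log\dim\bar A = O(L)$.
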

\begin{IEEEproof}
Recall that as explained in Section~\ref{ssec:EDscheme}, we have $p_{\mathrm{err}}(Z^{\bar A}|B^L B_C)_{\Psi_2}\leq \epsilon_1$. Using Fano's inequality we obtain
\begin{equation}
\Hc{Z^{\bar A}}{B^L B_C}_{\Psi_2} \leq \Hb(\epsilon_1) + \epsilon_1 \log \dim \bar A =: \xi.
\end{equation}
Recall that we choose the CSS code used in our scheme such that $\xi \to 0$ for $L \to \infty$. For example using quantum polar codes, we have $\epsilon_1=O\left(2^{-L^{\beta}} \right)$ for $\beta<\tfrac{1}{2}$ and $\log \dim \bar A = O(L)$. The assertion then follows from \cite[Lemma 12]{sutter13}.
\end{IEEEproof}
Proposition~\ref{prop:LN} and \eqref{eq:lessnoisyrates} imply that
\begin{align}
&\lim \limits_{L \to \infty} \frac{1}{L} Q(\mathcal{N}') \nonumber \\
&\hspace{3mm}= \lim \limits_{L \to \infty} \frac{1}{L} Q_1(\mathcal{N}') \\
&\hspace{3mm}= \lim \limits_{L \to \infty} \frac{1}{L} \max \limits_{\phi^{\bar A \bar A'}} \CI{\bar A}{B^L C^L}_{\Psi_3,\,\, \mathcal{N'}^{\bar A'\to B^L C^L}(\phi^{\bar A \bar A'})}  \\
&\hspace{3mm}= R,
\end{align}
where the last step follows from Theorem~\ref{thm:ratescheme}. Our scheme hence achieves the capacity of the super-channel. This raises the question of when the super-channel $\mathcal{N}'$ has the same capacity than the original channel $\mathcal{N}$, i.e.,\ how much is lost in the first layer of our scheme.
\renewcommand\themyquestion{2}
\begin{myquestion} \hypertarget{question:LN}{}
Under which conditions does $\tfrac{1}{L} Q(\mathcal{N}')=Q(\mathcal{N})$ hold?
\end{myquestion}
Equation~\ref{eq:lessnoisyrates} and Theorem \ref{thm:ratescheme} imply that $\tfrac{1}{L} Q(\mathcal{N}')=Q(\mathcal{N})$ holds if $\mathcal{N}$ is less noisy. However, it is not yet known if this is necessary for the condition to hold. 


\section{Secret Key Distillation \& Private Channel Coding}  \label{sec:skd}
\subsection{Scheme, Reliability, Secrecy and Rate}
With minor changes, the above entanglement distillation scheme also works for secret key distillation in a setup where we have two quantum parties Alice and Bob as well as a quantum adversary Eve. Consider the scenario in which Alice and Bob share an additional ``shield'' system $S$ \cite{horodecki_secure_2005,renes3short}. A shield is any system not held by the eavesdropper Eve but  nevertheless cannot be used for amplitude IR by Alice and Bob, where the amplitude information is used to create the secret key. One can show that 
\begin{align}
&p_{\rm err}\left(X^{\hat A}|B^NC^N B_D S\right)\leq \epsilon_2\quad\textnormal{and} \label{eq:hold1}\\
&p_{\rm err}\left(Z^{\hat A}|B^N B_C^M\right)\leq M\epsilon_1, \label{eq:hold2}
\end{align}
 characterizes a state where $Z^{\hat A}$ can be used as a secret key. 
 
Due to the uncertainty principle, the secrecy of the amplitude information from Eve is ensured if Alice and Bob could implement phase IR. Moreover, it is only necessary that phase IR \emph{could} be performed; it is not necessary to actually do it, and thus $S$ may be used as side information in the procedure, no matter who holds which parts of $S$ \cite{renes10_2}. 

An observable $Z^A$ is approximately secure if the trace distance to the ideal case is small. Therefore we introduce for $\psi^S=\Trs_{A}[\psi^{AS}]$, 
\begin{equation}
p_{\rm{secure}}\left(Z^A|S\right):=\tfrac{1}{2}\norm{\psi^{AS}-\tfrac{1}{d}\mathbbm{1}\otimes \psi^S }_1,
\end{equation}
where $\norm{M}_1:=\Trs[{\sqrt{M^{\dagger}M}}]$. Note that $p_{\rm{secure}}(Z^A|S)_{\psi}\leq \epsilon$ means that the actual $\epsilon$-secure $Z^A$ can only be distinguished from the ideal, a perfect key, with probability at most $\epsilon$. 

\begin{mycor} \label{cor:skdReliability}
For the scheme explained above we have
\begin{align}
&p_{\rm secure}\left(X^{\hat A}\!\left|E^N E_C^M S \right. \right) \leq \sqrt{2 \epsilon_2} \quad \textnormal{and}\\
&p_{\rm secure}\left(Z^{\hat A}\! \left|E^N E_C^M E_D \right.\right) \leq \sqrt{2 M \epsilon_1}.
\end{align}
\end{mycor}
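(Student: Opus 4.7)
The plan is to derive both bounds from a single tool: the entropic uncertainty relation between secrecy and recoverability of complementary observables, in the quantitative form of Theorem~1 of \cite{renes10} (the same result that underlies Proposition~\ref{prop:quality}). In the form I will use, it asserts that for any pure tripartite state $|\Psi\rangle^{A R_B R_E}$ on a system $A$ together with purifying systems $R_B, R_E$, and for any pair $X,Z$ of complementary observables on $A$,
\begin{equation*}
p_{\rm secure}\!\left(X^A\,\middle|\,R_E\right) \leq \sqrt{2\,p_{\rm err}\!\left(Z^A\,\middle|\,R_B\right)},
\end{equation*}
together with its $X\leftrightarrow Z$ counterpart. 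Granted this, the corollary should follow from the recoverability hypotheses \eqref{eq:hold1} and \eqref{eq:hold2}, with the rest of the proof being bookkeeping.

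First I would check that the shielded scheme produces a pure global state on $\hat A$ together with all Alice-, Bob-, shield- and Eve-side registers, so that any tripartition $(\hat A, R_B, R_E)$ is genuinely a purification. This is standard: every step of the protocol (the Ar\i kan transforms $V^{A^L}$ and $V^{\bar A^M}$, and the coherent models of the syndrome measurements and classical transmissions as copy operations into registers $B_C^M, B_D$ on Bob's side and $E_C^M, E_D$ on Eve's side) is a unitary on a suitably enlarged Hilbert space, so purity is preserved from the initial purification $|\Psi\rangle^{A^N B^N E^N}$ extended by $S$.

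I would then invoke the uncertainty relation twice, once per hypothesis. For \eqref{eq:hold1}, I take $R_B$ to contain $B^N C^N B_D S$; the theorem converts the $\epsilon_2$-bound on phase recoverability into a $\sqrt{2\epsilon_2}$-bound on the secrecy of the complementary observable on the purifying $R_E$. A parallel application based on \eqref{eq:hold2} with $R_B$ a superset of $B^N B_C^M$ yields the $\sqrt{2M\epsilon_1}$-bound in the same way. Finally, to pass from secrecy against the full purifying $R_E$ down to secrecy against the specific Eve-side subsystems $E^N E_C^M S$ and $E^N E_C^M E_D$ appearing in the statement, I would appeal to monotonicity of $p_{\rm secure}(\,\cdot\,|\,\cdot\,)$ under discarding part of the side information, which is just monotonicity of the trace norm under partial trace.

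The step I expect to require the most care is the bookkeeping: verifying that the purifying $R_E$ I obtain from each application of the uncertainty relation really does contain, as a subsystem, the exact Eve-side collection stated in the corollary. That means tracking precisely where each copy register ($B_C^M, E_C^M, B_D, E_D$) and each coherent-measurement ancilla making up $C^N$ ends up in the final pure state, and keeping careful track of the location of $S$ so that it may be placed on the side opposite to the relevant recoverability premise. Once this placement is settled, the corollary follows directly from Theorem~1 of \cite{renes10}, the $\sqrt{2\cdot}$ factors being exactly the standard Fuchs--van de Graaf-type conversion between operational error probabilities and the trace-distance secrecy measure.
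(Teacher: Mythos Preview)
Your plan is essentially the paper's own: establish \eqref{eq:hold1}--\eqref{eq:hold2} and then invoke the recoverability-to-secrecy duality for complementary observables (the paper cites \cite[Theorem~4.1]{renes_habi,renes10_2} rather than \cite[Theorem~1]{renes10}, but the underlying content is the same). The one thing you skip is that the paper actually \emph{verifies} \eqref{eq:hold1} and \eqref{eq:hold2} for the scheme---in the text they are stated only as characterizing conditions, not as already-proven facts---by reducing them to the amplitude and phase decoders of Section~\ref{ssec:EDscheme}, with the shield $S$ folded into Bob's side information for the phase step; once those hypotheses are in hand, your outlined derivation (duality plus monotonicity of $p_{\rm secure}$ under discarding side information) is exactly what the paper does.
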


\begin{IEEEproof}
We show that \eqref{eq:hold1} and \eqref{eq:hold2} are satisfied. The assertion then immediately follows from \cite[Theorem 4.1.]{renes_habi,renes10_2}.
Due to the i.i.d.\ structure of the scheme and the union bound, it is sufficient to show that $p_{\rm err}(Z^{\bar A}|B^L B_C)\leq \epsilon_1$ in order to prove \eqref{eq:hold2}. This however is a direct consequence of Bob's decoding task for the entanglement distillation setup, as introduced in Section~\ref{ssec:EDscheme}.

Inequality \eqref{eq:hold1} is equivalent to the error probability for the phase decoding Bob performs (as explained in Section~\ref{ssec:EDscheme}) with the difference that there is extra side information $S$ in this case. Fortunately it is straightforward to modify the decoder to the setup with extra side information \cite{arikan10}.
\end{IEEEproof}

The rate for this scenario can be computed analogously as in Section~\ref{ssec:EDscheme}, which leads to
\begin{align}
R&\geq 1-\Hc{Z^A}{B}_{\psi}-\Hc{X^A}{BCS}_{\psi} \label{eq:ratee1}\\
 &=\Hc{Z^A}{E}_{\psi}-\Hc{Z^A}{B}_{\psi} \label{eq:ratee2}.
\end{align}
The final step uses the exact-uncertainty relation given in \cite{renes2short}, which ensures that $H(Z^A|E)_{\psi}+H(X^A|BCS)_{\psi}=1$. Note that we no longer obtain the coherent information $-H(A|B)_{\psi}$, since $E$ no longer purifies $AB$.
The reliability of the secret key distillation scheme is analogous to that of the entanglement distillation scheme (cf. Proposition~\ref{prop:quality}).

\subsection{Using Quantum Polar Codes for Pauli or Erasure Channels}
As mentioned above the secret key distillation scheme is very similar to the entanglement distillation scheme introduced in Section~\ref{sec:ed}. More precisely, Alice's first task, i.e.,\ the $M$ amplitude information reconciliation blocks are identical as in the entanglement distillation. The phase IR step is slightly different as one has to consider side information $S$, which leads to a different set of frozen qubits. Furthermore, Alice does not send the outcomes from measuring $ \hat A^{\setC}$ to Bob, but simply keeps them secret from Eve.  Alice's task thus can be done with $O(N \log N)$ complexity as proven in Theorems~\ref{thm:alice} and \ref{thm:Bob}.

Bob's task is also similar to the decoding he performs in the entanglement distillation setup. He first decodes the amplitude $Z^{A^N}$, which can be done with a standard classical polar decoder. He next computes the value of $Z^{\hat A}$ using the details of the phase IR code. Hence Bob's decoding operation has $O(N \log N)$ complexity.

The reliability and secrecy are as given in Proposition~\ref{prop:quality} and Corollary~\ref{cor:skdReliability} with $\epsilon_1 =O(2^{-L^{\beta'}})$ and $\epsilon_2=O(L 2^{-M^{\beta}})$ for any $\beta,\beta' < \tfrac{1}{2}$.

The secret key distillation scheme described above also works in a purely classical setup, since the phase IR protocol can be turned into a privacy amplification protocol needing only classical operations as shown in \cite{sutter13}.
Our scheme for classical one-way secret key agreement improves practically efficient protocols where the eavesdropper has no prior knowledge \cite{abbeITW} or/and degradability assumptions are required \cite{chou13}.

\subsection{Private Communication}
The quantum coding scheme can be used for efficient private channel coding at a high rate (as given in \eqref{eq:ratee1} and \eqref{eq:ratee2}). As in Section~\ref{sec:channelcoding}, the idea is to run key distillation in reverse, simulating the measurement outputs with appropriately-chosen random inputs. These are then the frozen bits. The frozen bits of the inner and outer encoder can be sent over an insecure channel to Bob, since privacy is ensured by the outer layer whose frozen bits are uncorrelated to the message bits since the corresponding protocol produces entanglement.

In \cite{sutter13}, it is shown how to use a similar scheme in a purely classical scenario for private channel coding at the secrecy capacity such that there exists an encoding and decoding operation that have an essentially linear complexity. Our scheme improves previous work on efficient wiretap coding, where either only weak secrecy could be proven \cite{vardy11} or/and degradability assumptions are required \cite{sasoglu13}.

\section{Conclusion}
We have constructed a protocol that can be used to perform reliable entanglement distillation or quantum communication at a rate equal to (or possibly larger than) the coherent information. Compared to previous work in this area, our scheme does not require any preshared entanglement and achieves the coherent information also for asymmetric channels (where $\phi^{A A'}$ in \eqref{eq:cohinf} is not necessarily a Bell state). When communicating over a Pauli or erasure channel using polar codes, encoding and decoding can be performed with a number of operations essentially linear in the blocklength. We have also shown how the protocol can be modified for efficient, high-rate secret key distillation and private channel coding.




\bibliographystyle{IEEEtran}
\bibliography{../bibtex/header,../bibtex/bibliofile}

\end{document}